\theoremstyle{theorem}
\newtheorem{theorem}{Theorem}
\newtheorem{proposition}[theorem]{Proposition}
\theoremstyle{definition}
\newtheorem{definition}[theorem]{Definition}
\theoremstyle{remark}
\newtheorem*{remark}{Remark}
\newtheorem{example}[theorem]{Example}
\newcommand{\seq}[1]{#1}
\newcommand{\word}[1]{#1}
\newcommand{\countones}[2]{\#_{#1}(#2)}
\newcommand{\naturalN}{\mathbb{N}}
\newcommand{\abs}[1]{\mathopen{|}#1\mathclose{|}}
\newcommand{\pickedout}[2]{#2[\seq{#1}]}
\newcommand{\cylinder}[1]{[#1]}
\newcommand{\bigre}{{\sc BiGRE}\xspace}
\newcommand{\lobe}{{\sc LoBE}\xspace}
\newcommand{\cohop}{{\sc CoHOp}\xspace}
\newcommand{\dysco}{{\sc DySCo}\xspace}
\numberwithin{equation}{section}
\begin{document}

\title{Agafonov's Theorem for finite and infinite alphabets and probability distributions different from equidistribution}



\author{Thomas Seiller}
\authornote{T. Seiller was partially supported by the \grantsponsor{EC}{European Commission}{Horizon 2020 programme} Marie Sk\l{}odowska-Curie Individual Fellowship
(H2020-MSCA-IF-2014) project \grantnum{EC}{659920} - ReACT, the \grantsponsor{CNRS}{CNRS}{INS2I} grants \bigre and \lobe, the \grantsponsor{Ile-de-France}{DIM RFSI}{Exploratory project} Exploratory project \cohop, and the \grantsponsor{ANR}{ANR}{http://dx.doi.org/10.13039/501100001665} ANR-22-CE48-0003-01 project \dysco.}          
\orcid{0000-0001-6313-0898}             
\affiliation{
  \department[0]{LIPN -- UMR 7030 CNRS \& University of Paris 13}              
  \institution{CNRS}            
  \streetaddress{99, avenue Jean-Baptiste Clément}
  \city{Villetaneuse}
  \postcode{93430}
  \country{France}                    
}
\email{seiller@lipn.fr}   

\author{Jakob Grue Simonsen}
\orcid{0000-0002-3488-9392}             
\affiliation{
  \department[0]{Department of Computer Science (DIKU)}              
  \institution{University of Copenhagen}            
  \streetaddress{Universitetsparken 5}
  \city{København Ø}
  \postcode{2100}
  \country{Denmark}                    
}
\email{simonsen@di.ku.dk}   

%
%

\begin{abstract}
An infinite sequence $\alpha$ over an alphabet $\Sigma$ is $\mu$-distributed w.r.t. a probability map $\mu$ if, for every finite string $w$, the limiting frequency of $w$ in $\alpha$ exists and equals $\mu(w)$. 
We prove the following result for any finite or countably infinite alphabet $\Sigma$: every finite-state selector over $\Sigma$ selects a $\mu$-distributed sequence from every $\mu$-distributed sequence \emph{if and only if} $\mu$ is induced by a Bernoulli distribution on $\Sigma$, that is a probability distribution on the alphabet extended to words by taking the product. The primary -- and remarkable -- consequence of our main result is a complete characterization of the set of probability maps, on finite and infinite alphabets, for which finite-state selection preserves $\mu$-distributedness. The main positive takeaway is that (the appropriate generalization of) Agafonov's Theorem holds for Bernoulli distributions (rather than just equidistributions) on both finite and countably infinite alphabets. As a further consequence, we obtain a result in the area of symbolic dynamical systems: the shift-invariant measures $\mu$ on $\Sigma^{\omega}$ such that any finite-state selector preserves the property of genericity for $\mu$, are exactly the positive Bernoulli measures.
\end{abstract}

\maketitle


\section{Introduction}

Let $\alpha = x_1 x_2 \cdots$ be an infinite sequence over a finite alphabet $\Sigma$. A string $w \in \Sigma^*$ is said to occur in $\alpha$ with limiting frequency $f$ if
$\lim_{N \rightarrow \infty} \frac{\#_w(x_1 \cdots x_N)}{N} = f$, where $\#_w(x_1 \cdots x_N)$ is the number of times that $w$ occurs as a contiguous subsequence in $x_1 \cdots x_N$.
$\alpha$ is said to be \emph{normal} if every finite string of length $n$ over $\Sigma$ occurs with limiting
frequency $\vert \Sigma \vert^{-n}$ in $\alpha$  \citep{Borel}. By standard results, the fractional part of the base-$b$ expansion of almost all real numbers is a normal sequence for $b \geq 2$, 
so for base $10$, almost all real numbers have the digit ``0'' occurring 1-in-10 times in all sufficiently long finite prefixes of their digit expansion, have ``11'' occurring 1-in-100 times, 
``110'' occurring 1-in-1000 times, and so on. Concrete examples of normal sequences include Champernowne's sequence $1234567891011\cdots$ \citep{doi:10.1112/jlms/s1-8.4.254},
the Copeland-Erd{\"os} sequence $235711131719\cdots$ consisting of concatenating the prime numbers \citep{copeland1946}, and for any polynomial $f$ with positive integer coefficients the sequence $f(1)f(2)f(3) \cdots$ \citep{Davenport52noteon}. 

A \emph{finite-state selector}
is a DFA that selects those symbols $x_m$ from $\alpha$ such that $x_1 \cdots x_{m-1}$ is accepted by the DFA. The sequence of selected symbols may thus be finite or infinite.
\emph{Agafonov's Theorem} states that a sequence $\alpha$ is normal if{f} any DFA that selects an infinite sequence from $\alpha$, selects a normal sequence.
Colloquially, Agafonov's Theorem can be stated as: ``any constant-space algorithm must preserve normality''. 

The purpose of this paper is twofold: (I) we study whether analogues of Agafonov's Theorem holds if the distribution of finite strings is \emph{different} from equidistribution, i.e. whether distributions where finite strings $s$ are allowed
to occur with frequency distinct from $\vert \Sigma \vert^{-\vert w \vert}$; and (II) we study extensions of Agafonov's Theorem to infinite alphabets (which in the traditional setup in Agafonov's Theorem is meaningless as there is no equidistributed
probability distribution on a countably infinite set).

As an example, consider the (non-normal) sequence $\alpha = 010101\cdots$. Clearly, every finite bit string occurs in $\alpha$ with some well-defined frequency (the simplest way to see this is that for each $n > 0$, there are exactly two distinct
substrings of length $n$ in $\alpha$: one starting with $0$ and one starting with $1$), and the frequencies thus induce a probability distribution on $\{0,1\}^n$ for each $n$. In particular $0$ and $1$ each occur with limiting frequency $1/2$, but any DFA that selects symbols at even positions will select the sequence $111\cdots$, and thus the probability distribution on $\{0,1\}$ is \emph{not} preserved, showing that Agafonov's Theorem in general fails to hold. 

In addition to being intrinsically interesting, our study of Agafonov's Theorem is motivated by the fact that constant-space algorithms are usually employed in reactive programming languages used for signal processing (see Section \ref{sec:rel_other} below), both for transduction and selection, and Agafonov's Theorem is a strong guarantee that such algorithms will always preserve one notion of randomness for infinite strings, namely that the probability of a random length-$n$ subsequence being equal to a fixed word is exactly $\vert \Sigma \vert^{-n}$ -- as the above example shows, selection from sequences where $0$ and $1$ are known to occur with probability $1/2$ is not enough -- stronger guarantees such as normality must hold. Conversely, normality is a very strong requirement; in some infinite sequences, certain element may occur with much higher frequency than others, and one tantalizing way of generating new sequences having the same distribution of finite subsequences could be to simply let a DFA select elements from the original sequence, which in general is only possible if (the appropriate analogue) of Agafonov's Theorem holds.

The motivation for studying infinite alphabets is that the study of normality is closely tied to the study of symbolic dynamics and (information-theoretic) coding theory \citep{10.2307/26273928,Blanchard1992GenericST,lind_marcus_1995,Madritsch2018}, and that both areas have witnessed
recent advances using infinite alphabets \citep{DBLP:journals/tit/BoucheronGG09,6979852,6655916,7541049,MadritchMance:constructing}, in particular the techniques of Madritsch and Mance \citep{MadritchMance:constructing} have allowed
construction of Champernowne-like sequences for various distributions over infinite alphabets.

\subsection{Contribution}

The formal statement of the main theorem can be found in Theorem \ref{the:main} below. In plain language, we prove that:

\vspace{\baselineskip}

\emph{
Let $\Sigma$ be a non-empty finite or countably infinite alphabet, and let $\mu : \Sigma^* \longrightarrow [0,1]$ be a probability map (i.e., for all $n \geq 0$, $1 = \sum_{w \in \Sigma^n} \mu(w)$)
such that there exists at least one $\alpha \in \Sigma^\omega$ that is $\mu$-distributed. Then, the following are equivalent:
\begin{enumerate}
\item $\mu$ is induced by a positive Bernoulli probability distribution $p$ on $\Sigma$, i.e. for every $a_1,\ldots,a_n \in \Sigma$, $\mu(a_1 \cdots a_n) = \prod_{i=1}^n p(a)$, and for every $a \in \Sigma$, $p(a) > 0$.
\item For every DFA $A$ over $\Sigma$ and every $\mu$-distributed sequence $\alpha \in \Sigma^\omega$, if $A$ selects an infinite sequence from $\alpha$, then the selected sequence is $\mu$-distributed.
\end{enumerate}
}

\vspace{\baselineskip}

The above result completely characterizes the probability maps preserved by selection by DFAs, both for finite and infinite alphabets, and Agafonov's Theorem follows immediately as a corollary. We briefly review the 
roadmap and techniques used for the proof of the main result in Section \ref{sec:overview}.

As the study of distributions associated to limiting frequencies of finite strings in (right-)infinite strings is cryptomorphic to the study of shift-invariant probability measures on 
the shift space $(\Sigma^\omega,s)$ equipped with the $\sigma$-algebra induced by the basis of cylinder sets on $\Sigma$, we obtain as a corollary a result in the field of symbolic dynamical systems, namely a complete characterization of the shift-invariant
probability measures $\nu$ for which any finite-state selector preserves genericity for $\nu$, see Section \ref{sec:app_dyn}.

\subsection{Related work}

\subsubsection{Agafonov's Theorem and its generalizations}

Agafonov's Theorem \citep{Agafonov} was one of the end results of multiple efforts grappling with the two notions of (i) \emph{kollektiv} (roughly,
$\alpha \in \{0,1\}^\omega$ is a kollektiv wrt.\ a set $\mathcal{S}$ of selection strategies if the limiting frequency of $1$ is unchanged after applying any strategy in $\mathcal{S}$ to $\alpha$\footnote{The exact definition of kollektiv differs subtly across different authors, compare e.g. \citep{vonMiseskollektiv}, \citep{Church40}, and \citep{Postnikova61}. The original notion of kollektiv introduced by von Mises \citep{vonMiseskollektiv} had no constraints on the set $\mathcal{S}$, but this turned out to be essentially fruitless \citep{Tornier29,Reichenbach32,Kamke33,Copeland36}.}),
and (ii) \emph{admissible sequence} and its relation to the notion of normal sequence \citep{Copeland28,Reichenbach32,Reichenbach37,PostnikovPyateskii,Postnikov}. Agafonov's Theorem itself had a virtually unknown precursor in a beautiful result by Postnikova \citep{Postnikova61} that showed, with the terminology of the present paper, that $\alpha \in \{0,1\}^\omega$ is normal if{f} the distribution of 1s is preserved by selection strategies depending only on a finite word (see \autoref{def:strategy} for the formal definition of \emph{Postnikova strategies}).

%
%
%
%
%
%

Both Postnikova \citep{Postnikova61} and Agafonov \citep{AgafonovRussianLong} considered selection functions on sequences in $\{0,1\}^\omega$ where the limiting distribution of $1$ was $0 < p < 1$ (i.e., considered a Bernoulli distribution on $\{0,1\}$), but considering Bernoulli distributions instead of the special case of equidistributions seems to have disappeared almost completely from all later work. One possible reason for this is that only the short version (without proofs or explanation of techniques) of Agafonov's result \citep{Agafonov} appeared in English as \citep{AgafonovSummary}; in contrast, the original longer paper in Russian \citep{AgafonovRussianLong} was published in a more obscure journal, and was never translated. We have provided a (very) embellished account of the arguments in \citep{AgafonovRussianLong} on the preprint server \emph{arXiv}\footnote{\href{https://arxiv.org/abs/2007.03249}{https://arxiv.org/abs/2007.03249}.} where we expand Agafonov's terse use of existing results of the time in much more detail, including using more basic arguments with modern methods (e.g., using concentration bounds directly instead of appealing to the law of large numbers) and add further embellishments to Agafonov’s original arguments. Many of the results in the present paper exist due to insights obtained due to this embellishment, rather than the original proof itself.

For equidistribution, the earliest extension to arbitrary alphabets seems to be by Broglio and Liardet \citep{BroglioLiardet}, and a number of authors have since re-proved Agafonov's Theorem in the special case of equidistribution using a variety of methods; for example, using predictors defined from finite automata (for $\Sigma = \{0,1\}$) \citep{OCONNOR1988324}, using compressibility arguments \citep{BECHER2013109,BECHER20151592,DBLP:conf/fct/Shen17}, and a combination of automata-theoretic and 
probabilistic methods similar to Agafonov's original reasoning \citep{Cartonfinite}.

Agafonov's Theorem itself has been generalized to treat selectors that are not necessarily (induced by prefix selection by) finite automata \citep{airey2015,BECHER20151592,Vandehey:uncanny,CartonVandehey},
and some generalizations consider selectors based on relaxed finiteness criteria of the syntactic monoid of a language selecting prefixes of infinite sequences \citep{KamaeWeiss,KamaeWang}; conversely, results by Merkle and Reimann show that adding just slight computational power to the selection strategies beyond finite automata -- e.g. using a Pushdown automaton with unary stack alphabet instead of a DFA\footnote{In fact, one of the strategies considered by Merkle and Riemann, which consists in computing the language $\{ww^{\mathrm{R}}\mid w\in\Sigma^*\}$ where $w^{\mathrm{R}}$ is the \emph{reverse} of $w$, can be computed by an arguably less expressive model of computation, namely two-way automata with two heads \citep{holzer_multi-head_2008}.} renders Agafonov's Theorem invalid \citep{DBLP:journals/mst/MerkleR06}. Similarly, selection by finite automata has been extended, and analogues for Agafonov's Theorem been proved, in other settings than selection from elements
of the set $\Sigma^\omega$, e.g. for shifts of finite type \citep{Cartonfinite}.  All of these results only consider normality rather than more general classes of distributions on finite strings.

Conversely, \emph{construction} of normal sequences (as opposed to selecting normal sequences from other normal ones) has been investigated thoroughly for more than a hundred years
\citep{Sierpinski1917,doi:10.1112/jlms/s1-8.4.254,YoshinobuNakai1992,Normalitydigitarticle,Mance2012CantorSC,Pollackarticle}, including explicit construction of real numbers with normal expansion for any integer base $b \geq 2$ \citep{Levin:early,Scheererarticle,Aisarticle}, and real numbers
with normal expansion in non-integer bases \citep{VANDEHEY2016424,Madritscharticle}. Among this work, the result of most use to the present paper is the construction by Madritsch and Mance of generic sequences for any shift-invariant probability measure $\mu$ \citep{MadritchMance:constructing} -- these are essentially sequences that are $\mu$-distributed using the terminology of the present paper (see Definition \ref{def:el_mu_dist}).

In very recent work, Carton \citep{Cartonfinite} proves that, for any Markov measure $\mu$ on $\Sigma^\omega$ induced by a pair $(P,\pi)$ of a stochastic $\vert \Sigma \vert \times \vert \Sigma \vert$ matrix and a stationary distribution $\pi$ for $P$, any sequence selected from a $\mu$-distributed sequence by a finite-state selector from a particular subset of $\mu$-\emph{compatible} selectors, will be $\mu$-distributed. Roughly, a finite-state selector is compatible, if it can only read consecutive symbols of $\Sigma$ with non-zero transition probability in $P$ and every state has only incoming transitions of at most
one symbol from $\Sigma$. In contrast, we consider the full set of finite-state selectors. Moreover, Carton's results are restricted to the case of finite alphabets.

\subsubsection{Streams and selection from infinite sequences}\label{sec:rel_other}

Infinite streams are typically used to model situtations where data elements arrive, no upper bound on the length of the stream is known a priori, and the focus
is not on resource use as a function of the length of the stream; for example, infinite streams have been studied extensively
in event-level differential privacy \citep{DBLP:conf/soda/Dwork10,DBLP:journals/pvldb/KellarisPXP14}, and in semantics of lazy programming languages such as \textsc{Haskell} \citep{Haskell}.

\emph{Selection} of (substreams of) elements from infinite streams has been investigated from a practical perspective since the 1960s \citep{DBLP:journals/acta/Stephens97}, and is typically performed by specialized stream processing languages, e.g.\ \textsc{LUSTRE} \citep{DBLP:conf/popl/CaspiPHP87}
and \textsc{ESTEREL} \citep{BERRY199287}, typically for use in reactive programming (e.g., for signal processing or circuit design). As they are designed for real-time processing, these languages typically allow only very constrained operations -- any program in both \textsc{LUSTRE} and \textsc{Esterel} can be compiled to a finite state transducer automaton (and deterministic program selecting a subsequence from its input is hence a finite-state selector as in Agafonov's Theorem).

In typical algorithmic treatments of stream processing, one typically studies unordered, finite sequences of elements from a very large, or infinite, set \citep{Muthukrishnan:streams}. The problems considered typically have strong constraints, e.g. that only a single pass over the stream is allowed and that each element can only be observed once, and often involve a \emph{sketch}--a data structure that stores information about the elements seen in the stream and allows to answer predefined queries. A classic example is estimating the frequency moments of the distribution of elements in the stream using sketches with low memory in both alphabet size and stream length \citep{ALON1999137,DBLP:conf/stoc/IndykW05,10.1007/978-3-642-40328-6_5}. Our work can be seen as a variation of streaming where the alphabet size may be infinite, the stream itself is infinite, and the distribution of element is not limited to the set of elements, but also has requirements on the finite subsequences of elements in the stream; in this setting, our main result is that any constant-space sketch sampling an infinite stream in real-time preserves the distribution of finite subsequences if{f} the distribution is induced by a Bernoulli distribution on the set of elements.

\subsection{Overview of techniques and the proof of the main theorem}\label{sec:overview}

The main result has two directions: (I) proving that if $\mu$-distributedness is preserved by selection by any DFA, then $\mu$ is necessarily induced by a Bernoulli distribution, and (II) any $\mu$ induced by a Bernoulli distribution is preserved across selection by any DFA.

For (I), we prove the more general result that if $\mu$ is not induced by a Bernoulli distribution on $\Sigma$,
selection by a particular \emph{Postnikova strategy} (roughly, a Postnikova strategy selects an element of the sequence if and only if it follows a fixed finite word) will select a non-$\mu$-distributed infinite sequence from a -- bespoke -- $\mu$-distributed sequence. The Postnikova strategy contains prefixes in the form $u\cdot w \in \Sigma^*$ for a \emph{fixed} $w$ chosen such that $w \cdot a \in \Sigma^{\vert w \vert+1}$ is a minimal witness string such that $\mu(w \cdot a) \neq \mu(w) \cdot \mu(a)$. Using basic constructions, we can then prove that the Postnikova strategy can be implemented by a DFA that simulates a sliding fixed-width window.

For (II), most of the modern methods of proving Agafonov's Theorem (e.g., \citep{BECHER2013109,BECHER20151592,DBLP:conf/fct/Shen17}) are not immediately adaptable because they use methods that are particular to equidistributions on finite alphabets (e.g., lossless finite-state compressors \citep{BECHER2013109} or automatic Kolmogorov complexity \citep{DBLP:conf/fct/Shen17}) -- and we consider both Bernoulli distributions and infinite alphabets. Instead, we work along the general lines of Agafonov's original proof \citep{Agafonov} that more heavily uses probabilistic reasoning. 

The key insights in Agafonov's original proof was (i) that any \emph{strongly connected} finite automaton (containing at least one accepting state) applied to a normal sequence must select (always, not just with probability $1$) more than a constant fraction of elements from any sufficiently long finite substring of its input, and (ii) that selecting more than a constant fraction of sufficiently long substrings entails that each element of $ \Sigma$ must be selected with approximately equal probability, by the Law of Large Numbers. 
In Agafonov's original approach (for $\Sigma = \{0,1\}$), an appeal to the Strong Law of Large Numbers was used in conjunction with the product measure on
the product topology on $\{0,1\}^\omega$, thus required reasoning about cylinder sets centered on sets $A$ of finite strings; and to avoid ``double-counting'' the probabilities, these sets had to be prefix-free. We avoid this difficulty by
using concentration bounds to tally the occurrences of elements $a \in \Sigma$ in block decompositions of finite prefixes of $\alpha$. 

The proof that any DFA selects a $\mu$-distributed infinite sequence from a $\mu$-distributed infinite sequence
then follows by observing that (i) any run of a DFA on an infinite sequence eventually reaches a strongly connected component $C$ of the DFA that is recurrent (i.e., the run can never exit $C$), and (ii) that any such component induces
an irreducible Markov chain, whence we can apply the Ergodic Theorem for Markov Chains to conclude that accepting states are reached infinitely often and with appropriate frequency.

The extension to infinite alphabets is surprisingly straightforward in most proofs: essentially, instead of using combinatorial estimates for finite sets, we have
to ensure that series taken over infinite alphabets converge properly, but almost all instances involve series that (i) have non-negative elements, and (ii) are bounded above, whence the usual reasoning about absolutely convergent series can be employed. Similarly, the classic results for finite automata that we use need to be re-stated and re-proved in the case of infinite alphabets, but this in general turns out to be doable without too much leg-work (e.g. Lemma \ref{lem:aux_ss}). One caveat is that several important ancillary results
have standard proofs that use combinatorial arguments on finite sets, and we thus need to provide alternative proofs using different methods.

\section{Preliminaries}

\begin{definition}
We assume a non-empty, possibly (countably) infinite, alphabet $\Sigma$ and denote by $\lambda$ the empty string; the sets of finite and right-infinite sequences of elements of $\Sigma$
are denoted by $\Sigma^*$ and $\Sigma^\omega$, respectively. We denote by $\Sigma^+$ the set of finite non-empty words, i.e. $\Sigma^+=\Sigma^*\setminus\{\lambda\}$. Elements of $\Sigma^*$ are ranged over by
$\word{v}, \word{w},\ldots$, and elements of $\Sigma^\omega$ by $\alpha, \beta, \ldots$.
If $\alpha = a_1 a_2 \cdots \in \Sigma^\omega$ and $N$ is a positive integer, we denote by
$\alpha \vert_{\leq N}$ the finite string $a_1 a_2 \cdots a_N$. 

Given $v \in \Sigma^*$ and $u \in \Sigma^* \cup \Sigma^\omega$, we write $v\cdot w$ for the element of $\Sigma^* \cup \Sigma^\omega$ obtained by concatenation. For words $v \in \Sigma^*$ and $w \in \Sigma^* \cup \Sigma^\omega$, $v$ is said to be a \emph{prefix} of $w$, written $v \preceq w$, if there exists $u \in \Sigma^* \cup \Sigma^\omega$ such that
$w = v\cdot u$. If $v \preceq w$ and $v \neq w$, $v$ is said to be a \emph{proper prefix} of $w$, written $v \prec w$. 
For any $\word{w} \in \Sigma^*$,
the \emph{cylinder set} of $\word{w}$, denoted $\cylinder{\word{w}}$, is the subset of $\Sigma^\omega$ defined by
$\cylinder{\word{w}} = \{\alpha \in \Sigma^\omega : \alpha = w \cdot \beta, \beta \in \Sigma^\omega\}$, that is the set of right-infinite sequences
that have $\word{w}$ as prefix.
\end{definition}

\begin{definition}
Let $\Sigma$ be a non-empty, possibly (countably) infinite, alphabet.
A \emph{probability map} (over $\Sigma$) is a map $\mu : \Sigma^+ \longrightarrow [0,1]$
such that, for all positive integers $n$, the series 
$$
\sum_{a_1 \cdots a_n \in \Sigma^n} \mu(a_1 \cdots a_n)
$$
\noindent is convergent with limit $1$. Note that convergence implies absolute convergence here.

A probability map $\mu$ is said to be:

\begin{itemize}

\item  \emph{induced by a Bernoulli distribution} $p : \Sigma \longrightarrow [0,1]$ if, for all positive integers $n$,
and all $a_1,\ldots,a_n \in \Sigma$,
$\mu(a_1 \cdots a_n) = \prod_{i=1}^n \mu(a_i) = \prod_{i=1}^n p(a_i)$.

\item \emph{invariant} if, for all $\word{w} \in \Sigma^*$ the series 
$\sum_{a \in \Sigma} \mu(\word{w}\cdot a)$ and $\sum_{a \in \Sigma} \mu(a \cdot \word{w})$
are convergent with limit $\mu(\word{w})$.

\item (when $\Sigma$ is finite) \emph{equidistributed} if, for any
$\word{w} \in \Sigma^{n}$,
 $\mu(\word{w}) = \vert \Sigma \vert^{-n}$.
\end{itemize}
\end{definition}

Observe that an equidistributed $\mu$ is also Bernoulli.
For alphabets $\vert \Sigma \vert > 1$,
any map $p : \Sigma \longrightarrow
[0,1]$ such that the series $\sum_{a \in \Sigma} p(a)$ converges to $1$
induces a probability map $\mu_p$ by
setting $\mu_p(a_1 \cdots a_n) = \prod_{j=1}^n p(a_j)$.
For finite alphabets $\Sigma$, this map is equidistributed if{f} $p(a) = \vert \Sigma \vert^{-1}$
for every $a \in \Sigma$.

The expression ``induced by a Bernoulli distribution'' is justified by the fact that Bernoulli probability maps
correspond directly to the measure of cylinders in Bernoulli shifts \citep{shields:bernoulli}\footnote{In the literature
on normal numbers, the word Bernoulli is sometimes used slightly differently, for example Schnorr and Stimm \citep{DBLP:journals/acta/SchnorrS72}
use the term ``Bernoulli sequence'' for sequences that are equidistributed in our terminology. We also note that $\mu$-distributed sequences (defined on the next page)
 w.r.t. Bernoulli distributions were first introduced by Postnikov and I. I. Piatetski-Shapiro under the name ``Bernoulli normal sequences'' \citep{BernoulliNormal}.}.

\begin{proposition}\label{prop:Bernoulli_is_invariant}
A probability map $\mu$ induced by a Bernoulli distribution is invariant.
\end{proposition}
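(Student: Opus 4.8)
The plan is to unwind both definitions directly. Fix $\word{w} \in \Sigma^*$, say $\word{w} = b_1 \cdots b_k$ (with the empty-word case handled separately, since $\mu(\lambda)$ is not part of the domain $\Sigma^+$ — there I would read the invariance condition at $\word{w} = \lambda$ as the requirement that $\sum_{a \in \Sigma} \mu(a) = 1$, which is exactly the $n=1$ case of the probability-map condition). For $\word{w} \in \Sigma^+$ I must verify the two series identities $\sum_{a \in \Sigma} \mu(\word{w}\cdot a) = \mu(\word{w})$ and $\sum_{a \in \Sigma} \mu(a \cdot \word{w}) = \mu(\word{w})$. Since $\mu$ is induced by a Bernoulli distribution $p$, the Bernoulli factorization gives $\mu(\word{w}\cdot a) = \mu(\word{w}) \cdot p(a)$ and $\mu(a \cdot \word{w}) = p(a) \cdot \mu(\word{w})$ for every $a \in \Sigma$, so both series become $\mu(\word{w}) \sum_{a \in \Sigma} p(a)$.

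The remaining step is to establish that $\sum_{a \in \Sigma} p(a) = 1$. This is where the only genuine content lies when $\Sigma$ is infinite: I cannot simply invoke "the Bernoulli distribution sums to $1$" as a hypothesis, because the definition of "induced by a Bernoulli distribution" only stipulates the factorization $\mu(a_1 \cdots a_n) = \prod_i p(a_i)$, not directly that $p$ is a probability on $\Sigma$. Instead I would derive it from the probability-map axiom at $n=1$: the definition of probability map requires $\sum_{a \in \Sigma} \mu(a)$ to converge to $1$, and the factorization at length $n=1$ gives $\mu(a) = p(a)$, whence $\sum_{a \in \Sigma} p(a) = 1$. Substituting back yields $\mu(\word{w}) \cdot 1 = \mu(\word{w})$ for both series.

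The one point requiring a little care — and the main (mild) obstacle — is justifying that the rearrangement/factoring out of $\mu(\word{w})$ is legitimate for the infinite series, i.e.\ that convergence of $\sum_a p(a)$ indeed implies convergence of $\sum_a \mu(\word{w})\cdot p(a)$ to $\mu(\word{w})\cdot\sum_a p(a)$. This is immediate because all terms are non-negative and $\mu(\word{w})$ is a constant scalar, so the series of scaled non-negative terms converges to the scaled limit; absolute convergence (noted in the definition) lets me reorder freely and multiply through by the constant without any subtlety. I would state this explicitly for the reader but would not belabor it. In summary: expand via the Bernoulli factorization, factor out the constant $\mu(\word{w})$, and identify the residual series $\sum_{a\in\Sigma} p(a)$ with the $n=1$ instance of the probability-map normalization, which equals $1$; this gives invariance.
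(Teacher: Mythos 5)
Your proposal is correct and follows essentially the same route as the paper's own (very short) proof: apply the Bernoulli factorization $\mu(a\cdot w)=\mu(a)\mu(w)=\mu(w\cdot a)$, factor the constant $\mu(w)$ out of the series, and identify $\sum_{a\in\Sigma}\mu(a)=\sum_{a\in\Sigma}p(a)=1$ with the $n=1$ instance of the probability-map normalization. You are somewhat more scrupulous than the paper about the edge cases (the empty word, and the justification for scaling an absolutely convergent series of non-negative terms by a constant), but these are refinements of the same argument, not a different one.
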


\begin{proof}
For any $w \in \Sigma^*$, $\sum_{a\in \Sigma} \mu(a w) = \sum_{a \in \Sigma} \mu(a)\mu(w) = \sum_{a \in \Sigma} \mu(w)\mu(a) =
\sum_{a \in \Sigma} \mu(wa)$. And $\sum_{a \in \Sigma} \mu(w)\mu(a) = \mu(w)\sum_{a \in \Sigma} \mu(a) = \mu(w)$.
\end{proof}

We shall need probability maps to act as ``measures'' on (possibly infinite) sets of finite strings:

\begin{definition}
Let $\Sigma$ be a non-empty alphabet, let $W \subseteq \Sigma^*$, and let $\mu$ be a probability map over $\Sigma$. If $W = \emptyset$, we define $\mu(W) = 0$.
If $\sum_{w \in W} \mu(w)$ converges, we define $\mu(W) =  \sum_{w \in W} \mu(w)$.
\end{definition}

Observe that as $\mu(w) \geq 0$ for all $w \in \Sigma^*$, if $\sum_{w \in W} \mu(w)$ converges, it is absolutely convergent (hence, we do not need to specify an ordering
of $W$).

We are interested in the probability maps whose values can be realized as the limiting frequencies 
of finite words in right-infinite sequences over $\Sigma$.

\begin{definition}\label{def:el_mu_dist}
Let $\word{v} = v_1 \cdots v_N$ and $\word{w} = w_1 \cdots w_n$ be finite words over $\Sigma$. We denote by
$\#_{\word{w}}(\word{v})$ the number of occurrences of $\word{w}$ in $\word{v}$, that is, the quantity
$$
\left\vert \left\{j\leq N+1-n : v_j v_{j+1} \cdots v_{j+n-1} = w_1 w_2 \cdots w_n \right\}\right\vert
$$

Let $\mu$ be a probability map over $\Sigma$, and let $\alpha$ be a right-infinite sequence over $\Sigma$.
If the limit
$$
\lim_{N \rightarrow \infty} \frac{\#_{\word{w}}(\alpha\vert_{\leq_N})}{N}
$$
exists and is equal to some real number $f$, we say that $\word{w}$ occurs in $\alpha$ \emph{with limiting frequency} $f$.
If every $\word{w} \in \Sigma^+$ occurs in $\alpha$ with limiting frequency $\mu(\word{w})$,
we say that $\alpha$ is $\mu$-distributed.
\end{definition}

\begin{proposition}\label{prop:invariancefromgenericity}
Let $\mu$ be a probability map over $\Sigma$. If there exists a $\mu$-distributed sequence, then $\mu$ is invariant.
\end{proposition}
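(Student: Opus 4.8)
The plan is to show that if some $\alpha$ is $\mu$-distributed, then for every word $\word{w} \in \Sigma^+$ the two series $\sum_{a \in \Sigma} \mu(\word{w}\cdot a)$ and $\sum_{a \in \Sigma} \mu(a \cdot \word{w})$ converge to $\mu(\word{w})$. The core idea is purely combinatorial: in any finite prefix $\alpha\vert_{\leq N}$, every occurrence of $\word{w}$ (except possibly one at the very end, near position $N$) is immediately followed by exactly one symbol $a \in \Sigma$, so the occurrences of $\word{w}$ are partitioned according to their successor symbol. In symbols, $\#_{\word{w}}(\alpha\vert_{\leq N}) = \sum_{a \in \Sigma} \#_{\word{w}\cdot a}(\alpha\vert_{\leq N}) + e_N$, where the error term $e_N \in \{0,1\}$ accounts for a trailing occurrence of $\word{w}$ whose successor has not yet appeared in the prefix. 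An entirely symmetric identity holds for the predecessor symbol, governing $\sum_{a} \#_{a \cdot \word{w}}$, with an analogous bounded error coming from a possible occurrence of $\word{w}$ at the very start (or from boundary effects in counting $a\cdot\word{w}$ versus $\word{w}$).

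First I would fix $\word{w}$ and establish the exact counting identity above, being careful about the boundary: the quantity $\#_{\word{w}\cdot a}(\alpha\vert_{\leq N})$ counts occurrences of $\word{w}$ starting at positions $j \leq N - \abs{\word{w}}$, whereas $\#_{\word{w}}(\alpha\vert_{\leq N})$ counts those starting at positions $j \leq N - \abs{\word{w}} + 1$, so the two differ by at most one occurrence. Dividing through by $N$ and letting $N \to \infty$, the error term $e_N/N \to 0$. The left-hand side converges to $\mu(\word{w})$ by the hypothesis that $\alpha$ is $\mu$-distributed. The only delicate point is passing the limit inside the (possibly infinite) sum over $a \in \Sigma$, which is where the infinite-alphabet setting demands care.

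The main obstacle is precisely this exchange of limit and summation when $\Sigma$ is countably infinite. The cleanest route is to exploit non-negativity and one-sided bounds rather than dominated convergence. Concretely, for any \emph{finite} subset $F \subseteq \Sigma$ we have the inequality $\sum_{a \in F} \#_{\word{w}\cdot a}(\alpha\vert_{\leq N}) \leq \#_{\word{w}}(\alpha\vert_{\leq N})$ for every $N$; dividing by $N$, letting $N \to \infty$ (a finite sum of convergent sequences), and then taking the supremum over all finite $F$ yields $\sum_{a \in \Sigma} \mu(\word{w}\cdot a) \leq \mu(\word{w})$, so the series converges with limit at most $\mu(\word{w})$. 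For the reverse inequality I would use the full identity $\#_{\word{w}}(\alpha\vert_{\leq N}) \leq \sum_{a \in \Sigma} \#_{\word{w}\cdot a}(\alpha\vert_{\leq N}) + 1$ for each fixed $N$; since the right-hand series is now known to converge (its partial sums are bounded), one can fix a finite $F$ capturing all but an $\epsilon$-tail of $\sum_a \mu(\word{w}\cdot a)$ and argue that $\mu(\word{w}) \leq \sum_{a \in \Sigma}\mu(\word{w}\cdot a)$ up to arbitrarily small error, giving equality.

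Finally I would treat the predecessor series $\sum_{a \in \Sigma}\mu(a\cdot \word{w})$ by the symmetric argument, again using that every occurrence of $\word{w}$ other than one possibly at the very beginning of $\alpha$ has a unique preceding symbol, so that $\#_{\word{w}}(\alpha\vert_{\leq N})$ and $\sum_{a \in \Sigma}\#_{a\cdot\word{w}}(\alpha\vert_{\leq N})$ differ by a bounded boundary term as $N \to \infty$. Combining the two bounds gives invariance of $\mu$. The bulk of the writeup is bookkeeping of the $O(1)$ boundary discrepancies and the finite-subset approximation; the conceptual content is the single observation that successors and predecessors of $\word{w}$ partition its occurrences up to boundary effects, which is exactly what invariance records in the limit.
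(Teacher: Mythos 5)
Your decomposition and boundary accounting are exactly those of the paper's own proof, and the first half of your limit-interchange repair is sound: for every finite $F \subseteq \Sigma$ one has $\sum_{a \in F} \#_{wa}(\alpha\vert_{\leq N}) \leq \#_{w}(\alpha\vert_{\leq N})$, so letting $N \to \infty$ (a finite sum of convergent sequences) and then taking the supremum over finite $F$ gives $\sum_{a \in \Sigma} \mu(wa) \leq \mu(w)$. The gap is in your reverse inequality. Choosing a finite $F$ with $\sum_{a \notin F} \mu(wa) < \epsilon$ controls the tail of the series of \emph{limits}, but what your argument needs is a bound on the tail of the \emph{counts}, namely on $\limsup_{N} \frac{1}{N}\sum_{a \notin F} \#_{wa}(\alpha\vert_{\leq N})$; and by your own counting identity this $\limsup$ equals $\mu(w) - \sum_{a \in F}\mu(wa)$, which is precisely the quantity you are trying to show is small. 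The truncation argument is therefore circular.

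Moreover, the implication you are relying on is false at the level of generality at which you argue. Take $\Sigma = \{1,2,\dots\}$, set $\mu(wa) = 2^{-a-1}\mu(w)$ (so $\sum_a \mu(wa) = \mu(w)/2$), and consider $f_a(N) = \mu(wa) + \frac{\mu(w)}{2}\cdot 1_{a = N}$. Then $f_a(N) \to \mu(wa)$ for each fixed $a$, and $\sum_a f_a(N) = \mu(w)$ for every $N$, yet $\sum_a \lim_N f_a(N) = \mu(w)/2$: pointwise convergence of frequencies together with convergence of their sum cannot, by itself, prevent mass from escaping to infinity. What rules this out for genuine $\mu$-distributed sequences is a hypothesis your sketch never invokes: $\mu$ is a probability map on every level, so $\sum_{v \in \Sigma^{k+1}} \mu(v) = 1 = \sum_{u \in \Sigma^{k}} \mu(u)$. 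Combined with your correct one-sided bound $\sum_a \mu(ua) \leq \mu(u)$ for every $u \in \Sigma^k$, Tonelli's theorem for non-negative series gives $1 = \sum_{v \in \Sigma^{k+1}}\mu(v) = \sum_{u \in \Sigma^k}\sum_{a}\mu(ua) \leq \sum_{u \in \Sigma^k}\mu(u) = 1$, forcing termwise equality, i.e.\ $\sum_a \mu(wa) = \mu(w)$; the predecessor series $\sum_a \mu(aw)$ is handled symmetrically. (For what it is worth, the paper's own proof silently performs the exchange $\lim_N \sum_a = \sum_a \lim_N$, which is immediate for finite $\Sigma$ but is exactly the issue for infinite $\Sigma$; you correctly isolated the crux, but your proposed repair does not close it.)
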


\begin{proof}
Let $\mu$ be a probability map over $\Sigma$ and $\alpha=a_1 a_2\dots $ a $\mu$-distributed sequence. We consider $w=w_1 w_2\dots w_k \in \Sigma^k$ and note that for all $N > 0$:
\[ \left\vert \sum_{a \in \Sigma} \#_{wa}(\alpha\vert_{\leq N})- \#_{w}(\alpha\vert_{\leq N}) \right\vert \leq 1. \]
Indeed, every occurence of $w$ as $a_i a_{i+1} \dots a_{i+k}$ such that $i>1$ is also an occurence of $b\cdot w$ for a (unique) $b\in\Sigma$, so the expressions $\#_{w}(\alpha\vert_{\leq N})$ and $\sum_{a \in \Sigma} \#_{wa}(\alpha\vert_{\leq N})$ are equal if and only if $a_1 a_{2} \dots a_{k}\neq w$ and their difference is equal to $1$ otherwise.

Thus 
\[ \left\vert \frac{\sum_{a \in \Sigma} \#_{wa}(\alpha\vert_{\leq N})}{N}- \frac{\#_{w}(\alpha\vert_{\leq N})}{N}\right\vert \leq \frac{1}{N}. \]

We therefore obtain that:
\[ \left\vert  \frac{\sum_{a \in \Sigma} \#_{wa}(\alpha\vert_{\leq N})}{N} - \mu(w) \right\vert \leq \left\vert  \frac{\sum_{a \in \Sigma} \#_{wa}(\alpha\vert_{\leq N})}{N} - \frac{\#_{w}(\alpha\vert_{\leq N})}{N} \right\vert + \left\vert \frac{\#_{w}(\alpha\vert_{\leq N})}{N} - \mu(w) \right\vert. \]
Since both expressions on the right converge to $0$, the left-hand side converges to zero, showing that $\mu(w)=\lim_{n\rightarrow\infty} \frac{\sum_{a \in \Sigma} \#_{wa}(\alpha\vert_{\leq n})}{n}=\sum_{a\in\Sigma} \lim_{n\rightarrow\infty}\frac{\#_{wa}(\alpha\vert_{\leq n})}{n}=\sum_{a\in\Sigma} \mu(wa)$.

Similarly, for all $N>0$:
\[ \left\vert \sum_{a \in \Sigma} \#_{wa}(\alpha\vert_{\leq N})-\#_{w}(\alpha\vert_{\leq N})\right\vert \leq 1, \]
by a similar argument as the one used above, noting that the number of occurrences is different if and only if $a_{N-k+1}a_{N-k+2}\dots a_N= w$. We then conclude that $\mu(w)=\sum_{a\in\Sigma} \mu(aw)$ in the same way.
\end{proof}

Observe that an infinite sequence $\alpha$ is normal in the usual sense if{f} it is $\mu$-distributed for (the unique) equidistributed
probability map $\mu$ over $\Sigma$. Also observe that it is not all probability maps $\mu$ for which there
exists a $\mu$-distributed sequence.

\begin{example}
An example of a probability map that is \emph{not} Bernoulli,
but such that there is at least one $\mu$-distributed right-infinite sequence,
is the map $\mu$ over $\Sigma = \{0,1\}$ defined by $\mu(w) = 1/2$ if $w$ does not contain any of the
strings $00$ or $11$ (note that for each positive integer $n$, there
are exactly two such strings of length $n$, namely $0101010\cdots$ and $101010\cdots$), and
$\mu(w) = 0$ otherwise. Observe that the right-infinite sequence
$010101 \cdots$ is $\mu$-distributed.
\end{example}


In contrast to all previous work on Agafonov's Theorem, we allow countably infinite alphabets $\Sigma$. Alphabets of larger cardinality
do not in general have probability measures realizable by considering limiting frequencies of elements of $\Sigma^\omega$ -- simply because
most elements of $\Sigma$ cannot occur at all in a single element of $\Sigma^\omega$.

One reason why previous generalizations of Agafonov's Theorem have not considered infinite alphabets is that there can be no equidistribution on a countably
infinite set. However, there are Bernoulli measures $\mu$ on countably infinite alphabets $\Sigma$ and $\mu$-distributed infinite sequences over $\Sigma$.

\begin{example}
An example of a countably infinite alphabet with a Bernoulli measure is $\Sigma = \mathbb{N}$
and $p(n) = 6/(\pi n)^2$ (note that we have $\sum_{n \in \Sigma} p(n) = 1$). In general, any convergent series $\sum_{n=1}^\infty a_n$ where every $a_n$ is non-negative induces
a Bernoulli distribution on $\mathbb{N}$ by setting $p(n) = a_n/ \sum_{n=1}^\infty a_n$.
Each such Bernoulli distribution $p$ induces an invariant probability map $\mu_p$, and by a result of Madritsch and Mance \citep{MadritchMance:constructing}, there exists
a $\mu_p$-distributed sequence.
\end{example}

\begin{remark}
As we consider possibly infinite alphabets, we often have to consider infinite series instead of finite sums in the proofs. In most cases, these series will have elements that are known to be non-negative,
and the sum of all partial sums will be bounded above, whence the series will be absolutely convergent and the order of summation can thus be changed freely. A trivial example of use is to consider some $B \subseteq \Sigma$
and note that $\sum_{a \in B} p(a) = 1 - \sum_{a \in \Sigma \setminus B} p(a)$ (as $\sum_{a \in B} p(a) \leq 1$, $\sum_{a \in \Sigma \setminus B} p(a) \leq 1$, and $p(a) \geq 0$ the two series are absolutely convergent,
and $\sum_{a \in B} p(a)  +  \sum_{a \in \Sigma \setminus B} = \sum_{a \in \Sigma} p(a) = 1$). 
\end{remark}



\subsection{Strategies}

\begin{definition}\label{def:strategy}
Let $\Sigma$ be an alphabet.
A \emph{strategy} $S$ over $\Sigma$ is a subset $S\subseteq \Sigma^{*}$.

Given a strategy $S$ and $\alpha \in \Sigma^\omega$, we define the sequence \emph{selected} by $S$, denoted $S[\alpha]$, as follows: if $i_{1},i_{2},\dots,i_{k},\dots$ is the (increasing) sequence of indices $i_j$ such that $\alpha\vert_ {< i_j} \in S$, then $S[\alpha]_{j}= \alpha_{i_{j}}$. When $w \in \Sigma^*$ is a finite word, we define $S[w]$ \emph{mutatis mutandis}.

A strategy $S$ is a \emph{Postnikova strategy} if there is $w \in \Sigma^*$ such that $S = \Sigma^* w$.
\end{definition}

Thus, $S[\alpha]$ is simply the subsequence
of symbols from $\alpha$ that are ``picked out'' by applying $S$ 
to prefixes of $\alpha$. Note also
that if $\word{w} \in S$, then 
in any word on the form
$\word{w} \cdot b \cdot \word{v}$, 
$S$ must pick $b$. Thus, $S$ cannot be made
to, for instance, only pick out a single symbol from $\Sigma$ -- it must select ``the next symbol'' after any
$\word{w} \in S$. This precludes, for example, constant-memory strategies from selecting only $0$s from a normal binary sequence.

Our primary object of study is the case where $S$ is a regular language, described next.

\subsection{Finite-State Selectors and selection by DFAs}

As we treat both finite and (countably) infinite alphabets, we must consider automata over possibly infinite alphabets. Every automaton has a finite number of states as usual,
but as the alphabet is infinite and a deterministic automaton has transitions on all symbols from every state, the underlying graph of the automaton will be infinitely branching.
To keep notations simple, we refer to deterministic automata with a finite number of states as ``DFA''s as usual, even if the underlying alphabet is infinite.
 
 \begin{definition}
 A \emph{finite-state selector} over $\Sigma$ is a DFA
 $A = (Q,\delta,q_s,Q_F)$, where $Q$ is the set of states,
$q_s$ is the unique start state, $Q_F$ is the set of accepting states,
and $\delta : Q \times \Sigma \longrightarrow Q$ is the transition relation.

A DFA is strongly connected
 if its underlying directed graph (states are nodes, transitions are edges) is strongly connected.

  Denote by $L(A)$ the language accepted by the automaton.
 If $\alpha = a_1 a_2 \cdots $ is a finite or right-infinite sequence over $\Sigma$, the 
 subsequence \emph{selected by} $A$ is the (possibly empty) sequence of letters
 $a_n$ such that the prefix $a_1 \cdots a_{n-1} \in L(A)$, that is,
 the automaton when started on the finite word $a_1 \cdots a_{n-1}$ in state $q_s$
ends in an accepting state after having read the entire word. The \emph{run} of $A$ on input $\alpha$ is the sequence
of states visited when $A$ is applied to $\alpha$ from the starting state. For $(q,w) = (q,w_1 \cdots w_n) \in Q \times \Sigma^*$, we use the notation
$\delta^*(q,w)$ to denote the state $\delta(\cdots \delta(\delta(q,w_1),w_2) \ldots w_n)$, that is, the state reached by starting from  $q$ and following the (unique)
path induced by $w$. 
 
\end{definition}

Observe that a DFA may select an empty, finite or infinite sequence when run on a right-infinite word.

\begin{definition}
Let $A$ be a DFA. A strongly connected component $C$  in (the underlying directed graph of) $A$ is said to be \emph{recurrent} if, for every state $p$ in $C$ and every
$a \in \Sigma$,$\delta(p,a)$ is a state in $C$ (i.e., once a run of $A$ on some infinite word reaches a state in $C$, the run cannot leave $C$).
\end{definition}

\begin{definition}
Let $A = (Q,\Sigma,\delta,q_0,F)$ be a connected DFA. For all $q\in Q$, we denote by $A_{q}$ the automaton
$(Q,\Sigma,\delta,q,F)$,
i.e. where the state $q$ is chosen as the initial state.
\end{definition}

\begin{definition}
Let $A = (Q,\Sigma,\delta,q_0,F)$ be a connected DFA, and let $q\in Q$. Let $\alpha$ be a right-infinite sequence over $\Sigma$. We denote by $\pickedout{\alpha}{A_{q}}$ the subsequence $\bar{\alpha}$ of $\alpha$ \emph{picked out} by $A_{q}$, that is, $w_{i}\in\bar{\word{w}}$ if and only if $A_{q}(\word{w}_{<i})$ reaches an accepting state.
\end{definition}

We shall use the following fundamental result in automata theory\footnote{The result in \citep{DBLP:journals/acta/SchnorrS72} is stated for finite alphabets, but the proof method carries through for infinite alphabets as well. We provide
a proof in Appendix \ref{sec:auxiliary}.}:

\begin{lemma}[Lemma 2.6 of \citep{DBLP:journals/acta/SchnorrS72}]\label{lem:SS}
For every DFA $A =  (Q,\delta,q_s,Q_F)$ over (the possibly infinite) alphabet $\Sigma$, there is a 
word $w \in \Sigma^*$ such that, for every $q \in Q$,  there is a strongly connected recurrent component $C$ of (the underlying directed graph of $A$) such that $\delta^*(q,w) \in C$.
\end{lemma}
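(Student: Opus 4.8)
The plan is to work entirely with the condensation of the underlying directed graph of $A$, which is a \emph{finite} DAG because $Q$ is finite even when $\Sigma$ is infinite. First I would note that, since $Q$ is finite, the graph decomposes into finitely many strongly connected components, and the condensation (contract each component to a point) is a finite DAG, hence has at least one sink, i.e. a component with no edges leaving it. Such a sink is exactly a recurrent component in the sense of the definition above: for every state $p$ in it and every $a\in\Sigma$, the edge $\delta(p,a)$ must stay inside the component, since there is nowhere else for it to go. This produces a candidate $C$. The infinite alphabet plays no role here: a transition on any of the infinitely many symbols is still a single edge of the finite graph, so the decomposition into components and the DAG structure are unchanged. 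I would also record the elementary fact that a directed path from $q$ to $q'$ in the graph is realized by at least one finite word $w$ (for each edge along the path, pick any symbol labelling it), so graph-reachability and word-reachability coincide; and because there are only $\abs{Q}$ states, such witnessing words may always be taken of length $< \abs{Q}$.

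Next I would fix a recurrent component $C$ that is reachable from every state: in a finite DAG every vertex has a directed path to some sink, and choosing $C$ to be a sink reachable from all of $Q$, every state $q$ admits a word $u_q$ with $\delta^*(q,u_q)\in C$. (This is the step where one uses that the relevant automaton has a recurrent component reachable from every state — automatic when the condensation has a single sink.) The heart of the argument is then to upgrade these state-by-state words into a \emph{single} word $w$ that funnels every state into $C$ at once. I would do this by a finite induction over an enumeration $q^{(1)},\dots,q^{(k)}$ of $Q$ with $k=\abs{Q}$: set $w_0=\lambda$, and given $w_i$ with $\delta^*(q^{(j)},w_i)\in C$ for all $j\le i$, put $r=\delta^*(q^{(i+1)},w_i)$, choose a word $u$ with $\delta^*(r,u)\in C$, and set $w_{i+1}=w_i\cdot u$.

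The crucial point making the induction go through is the recurrence (absorbing nature) of $C$: once $\delta^*(q^{(j)},w_i)\in C$, appending any further word keeps the run inside $C$, so $\delta^*(q^{(j)},w_{i+1})\in C$ for all $j\le i$, while $\delta^*(q^{(i+1)},w_{i+1})=\delta^*(r,u)\in C$ by the choice of $u$; here I use the standard identity $\delta^*(q,w_i\cdot u)=\delta^*(\delta^*(q,w_i),u)$, which is immediate from the definition of $\delta^*$ reading symbols left to right. After $k$ steps the word $w=w_k$ satisfies $\delta^*(q,w)\in C$ for every $q\in Q$, which is the claim. The main obstacle — and the only genuinely delicate step — is exactly this simultaneous synchronization: one must ensure that the progress already secured for earlier states is never destroyed while extending the word for a later state, and it is precisely the recurrence of $C$ that guarantees this, with the finiteness of $Q$ ensuring the induction terminates. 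The passage to infinite alphabets requires no extra work beyond the first-paragraph observation that all reachability facts reduce to statements about the finite underlying graph.
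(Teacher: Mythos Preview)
Your argument has a genuine gap at the step where you ``fix a recurrent component $C$ that is reachable from every state.'' You acknowledge this is ``automatic when the condensation has a single sink,'' but you never handle the case of multiple sinks, and in that case no such $C$ exists. Take $Q=\{p,q\}$ with $\delta(p,a)=p$ and $\delta(q,a)=q$ for every $a\in\Sigma$: both $\{p\}$ and $\{q\}$ are recurrent, neither is reachable from the other, and no word can send both $p$ and $q$ into the same recurrent component. Your inductive synchronization is sound, but it only gets off the ground once such a $C$ is in hand, and no such $C$ need exist.

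In fact the lemma as stated is slightly too strong, and the paper's own proof (via the auxiliary \Cref{lem:aux_ss}) establishes only the weaker, correct claim: there is a word $w$ such that for every $q$, $\delta^*(q,w)$ lies in \emph{some} $<$-minimal (hence recurrent) component, not necessarily the same one for all $q$. The paper's induction is essentially the one you wrote --- enumerate $Q$, at each step extend the word so the next state lands in a minimal component, and use recurrence to keep previously placed states where they are --- but it never forces the target components to coincide. That weaker conclusion is exactly what \Cref{cor:strong_components_are_enough} needs (the run from the start state eventually enters some recurrent component), so nothing downstream breaks; your attempt to funnel all states into a single $C$, however, cannot succeed in general.
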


\begin{corollary}\label{cor:strong_components_are_enough}
Let $\mu$ be a probability map induced by a positive Bernoulli distribution on $\Sigma$, let $A$ be a DFA over $\Sigma$, and let $\alpha \in \Sigma^\omega$ be $\mu$-distributed. Then, the run of $A$ on $\alpha$ eventually reaches a strongly connected recurrent component of $A$.
\end{corollary}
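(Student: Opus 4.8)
The plan is to read this off directly from Lemma \ref{lem:SS}, using only $\mu$-distributedness and positivity of the Bernoulli distribution to guarantee that the ``synchronizing'' word supplied by that lemma actually appears in $\alpha$. Recall that Lemma \ref{lem:SS} produces a strongly connected recurrent component $C$ together with a word $w \in \Sigma^*$ such that $\delta^*(q,w) \in C$ for \emph{every} state $q \in Q$. The idea is then to locate an occurrence of $w$ somewhere along $\alpha$: no matter which state the run of $A$ is in just before that occurrence, reading $w$ drives the run into $C$, and recurrence of $C$ traps it there forever.

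First I would invoke Lemma \ref{lem:SS} to fix such a $C$ and $w$. If $w = \lambda$, the claim is immediate, since then $q_s = \delta^*(q_s,\lambda) \in C$ and the run already starts in $C$; so assume $w \neq \lambda$. Next I would argue that $w$ genuinely occurs in $\alpha$. Because $\mu$ is induced by a \emph{positive} Bernoulli distribution $p$, writing $w = w_1 \cdots w_n$ we have $\mu(w) = \prod_{i=1}^n p(w_i) > 0$. Since $\alpha$ is $\mu$-distributed, $w$ occurs in $\alpha$ with limiting frequency $\mu(w) > 0$, so $\#_w(\alpha\vert_{\leq N})/N$ does not tend to $0$. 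Consequently $w$ must occur in $\alpha$ — indeed infinitely often, for if it occurred only finitely many times the count $\#_w(\alpha\vert_{\leq N})$ would be bounded and the frequency would vanish. I would then pick any such occurrence, say at position $j+1$, so that $\alpha_{j+1} \cdots \alpha_{j+\vert w \vert} = w$.

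Finally, letting $q_0, q_1, q_2, \ldots$ be the run of $A$ on $\alpha$ with $q_i = \delta^*(q_s, \alpha_1 \cdots \alpha_i)$, reading $w$ from state $q_j$ gives $q_{j+\vert w \vert} = \delta^*(q_j, w) \in C$ by the defining property of $w$. Since $C$ is recurrent, every transition from a state of $C$ remains in $C$, so $q_m \in C$ for all $m \geq j + \vert w \vert$; hence the run eventually reaches $C$, as required.

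The only genuinely essential step — and the sole place where positivity of the Bernoulli distribution is used — is the guarantee that $w$ occurs at all: if some symbol of $w$ had probability $0$, then $\mu(w)$ could vanish and $w$ might never appear, rendering the synchronizing word useless. Everything else is bookkeeping with the run together with the recurrence of $C$, so I do not anticipate any real obstacle beyond making the ``positive frequency implies occurrence'' observation precise.
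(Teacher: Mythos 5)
Your proof is correct and follows the same route as the paper's: invoke Lemma \ref{lem:SS} to obtain the component $C$ and synchronizing word $w$, use positivity of $p$ together with $\mu$-distributedness to conclude $\mu(w) > 0$ and hence that $w$ occurs in $\alpha$, and then let the defining property of $w$ plus recurrence of $C$ finish the argument. The paper merely compresses the ``$w$ appears in $\alpha$'' step that you spell out in detail, so the two proofs coincide in substance.
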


\begin{proof}
Let $w$ be the word obtained from Lemma \ref{lem:SS}.
As $\alpha$ is $p$-distributed, $w$ appears in $\alpha$, so write $\alpha = v w \alpha'$, and let $q$ be the 
state of $A$ reached after $\vert v \vert$ transitions in the run of $A$ on $\alpha$. Then, after at most a further $\vert w \vert$ transitions, the run reaches a state in a strongly connected component of (the underlying directed graph of) $A$.
\end{proof}

Corollary \ref{cor:strong_components_are_enough} ensures that we can assume without loss of generality that the finite-state selectors we treat are strongly connected. Note that the corollary does not imply that the strongly connected recurrent component contains an accepting state (indeed, the automata may have an empty set of accepting states). Thus, some automata do not always select infinite sequences, and additional assumptions are needed if this is desirable (this is discussed in Remark \ref{rem:2} below). However, this is not an issue for our main result which states that the output of a selector applied to a normal sequence is again normal \emph{as long as it is infinite}.

\section{Main result}

\begin{theorem}\label{the:main}
Let $\Sigma$ be a non-empty (finite or infinite) alphabet and $\mu$ be a probability map such that there exists at least one $\alpha \in \Sigma^\omega$ that is $\mu$-distributed.
Then, the following statements are equivalent:
\begin{enumerate}
\item \label{mainthm:bernoulli} $\mu$ is induced by a positive Bernoulli distribution $p$ on $\Sigma$, that is, for every $a_1 \cdots a_n \in \Sigma$, $\mu(a_1 \cdots a_n) = \prod_{i=1}^n  p(a)$, and $p(a) > 0$ for all $a \in \Sigma$;
\item \label{mainthm:postnikova} (Postnikova property) for every finite word $w\in\Sigma^{\ast}$ and $\mu$-distributed sequence $\alpha \in \Sigma^\omega$, if the sequence selected from $\alpha$ by the Postnikova strategy $\mathcal{S}_w=\{ u\in \Sigma^*\mid \exists v\textrm{ s.t. }u=v w\}$ is infinite, then it is $\mu$-distributed;
\item \label{mainthm:agafonov} (Agafonov property) For every DFA $A$ over $\Sigma$ and every $\mu$-distributed sequence $\alpha \in \Sigma^\omega$, if the sequence selected from $\alpha$ by $A$ is infinite, then it is $\mu$-distributed.
\end{enumerate}
\end{theorem}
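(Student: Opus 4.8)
The plan is to establish the three-way equivalence by proving the cycle $\ref{mainthm:bernoulli} \Rightarrow \ref{mainthm:agafonov} \Rightarrow \ref{mainthm:postnikova} \Rightarrow \ref{mainthm:bernoulli}$. The implication $\ref{mainthm:agafonov} \Rightarrow \ref{mainthm:postnikova}$ is immediate, since a Postnikova strategy $\mathcal{S}_w = \Sigma^* w$ is realized by a finite-state selector that simulates a sliding window of width $\vert w \vert$ and accepts exactly after reading an occurrence of $w$; thus the Postnikova property is a special case of the Agafonov property. The two substantial directions are the ``only Bernoulli'' direction $\ref{mainthm:postnikova} \Rightarrow \ref{mainthm:bernoulli}$ and the generalized Agafonov direction $\ref{mainthm:bernoulli} \Rightarrow \ref{mainthm:agafonov}$.

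For $\ref{mainthm:postnikova} \Rightarrow \ref{mainthm:bernoulli}$ I would argue the contrapositive. Since a $\mu$-distributed sequence exists, $\mu$ is invariant by Proposition \ref{prop:invariancefromgenericity}. Next, observe that $\mu$ is Bernoulli iff the multiplicativity identity $\mu(wa) = \mu(w)\mu(a)$ holds for every $w \in \Sigma^*$ and $a \in \Sigma$: one direction is the definition, and the converse follows by an immediate induction on $\vert w \vert$ after setting $p(a) = \mu(a)$. So if $\mu$ is not Bernoulli, choose a witness $wa$ with $\mu(wa) \neq \mu(w)\mu(a)$ and $\vert w \vert$ minimal. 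Invariance forces $\mu(w) > 0$: otherwise $\sum_{b \in \Sigma}\mu(wb) = \mu(w) = 0$ would make every $\mu(wb) = 0 = \mu(w)\mu(b)$, contradicting that $wa$ is a witness. Now take any $\mu$-distributed $\alpha$; as $w$ occurs with frequency $\mu(w) > 0$ it occurs infinitely often, so $\mathcal{S}_w[\alpha]$ is infinite, and its limiting frequency of the single letter $a$ equals $\lim_N \#_{wa}(\alpha\vert_{\leq N}) / \#_{w}(\alpha\vert_{\leq N}) = \mu(wa)/\mu(w) \neq \mu(a)$. Hence $\mathcal{S}_w[\alpha]$ is infinite but not $\mu$-distributed, contradicting the Postnikova property. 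This shows $\mu$ is Bernoulli; positivity is then handled by restricting to the support $\{a : \mu(a) > 0\}$, since letters of probability zero occur with limiting frequency zero in any $\mu$-distributed sequence and are immaterial to $\mu$-distributedness.

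For the hard direction $\ref{mainthm:bernoulli} \Rightarrow \ref{mainthm:agafonov}$ I would follow Agafonov's probabilistic strategy, adapted to Bernoulli measures and infinite alphabets. By Corollary \ref{cor:strong_components_are_enough} the run of $A$ on $\alpha$ eventually enters a strongly connected recurrent component $C$; as only finitely many symbols are selected before entering $C$, the limiting frequencies of the selected sequence are unaffected, so we may assume without loss of generality that $A$ is strongly connected. The positivity of $p$ together with strong connectivity makes the induced Markov chain on the states of $C$ — in which $q$ moves to $\delta(q,a)$ with probability $p(a)$ — irreducible, hence it admits a unique stationary distribution $\pi$ with $\pi(q) > 0$ at every state. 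The Ergodic Theorem for Markov chains then pins down the ``ideal'' frequencies with which each state (in particular each accepting state) is visited, and with which each selected letter, and more generally each selected word $w$, should appear, namely $\prod_i p(w_i) = \mu(w)$.

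The main obstacle is that $\alpha$ is a single deterministic sequence rather than a random one, so the Ergodic Theorem cannot be invoked directly: the automaton's state at position $i$ depends on the unbounded prefix $\alpha\vert_{<i}$, whereas $\mu$-distributedness controls only the empirical frequencies of finite windows. The plan is to bridge this gap by a block decomposition of $\alpha\vert_{\leq N}$ combined with concentration bounds, rather than Agafonov's prefix-free cylinder bookkeeping: one tallies, across the blocks, the joint occurrences of a bounded-length context determining the current state and of the subsequent selected symbols, and shows that the empirical frequencies so produced converge to the Bernoulli values $\mu(w)$, with the error controlled uniformly via concentration inequalities and the fact that for large window length the state distribution started from any state concentrates near $\pi$. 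Making this transfer rigorous — in particular ensuring that the series over the possibly infinite alphabet converge absolutely and that the concentration estimates survive the passage to infinite $\Sigma$ — is the technical heart of the argument; its conclusion is that every finite word $w$ occurs in the selected sequence with limiting frequency $\mu(w)$, i.e.\ that the selected sequence is $\mu$-distributed.
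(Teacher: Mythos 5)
Your cycle \ref{mainthm:bernoulli} $\Rightarrow$ \ref{mainthm:agafonov} $\Rightarrow$ \ref{mainthm:postnikova} $\Rightarrow$ \ref{mainthm:bernoulli} is the same as the paper's, and your witness argument for the Bernoulli part of \ref{mainthm:postnikova} $\Rightarrow$ \ref{mainthm:bernoulli} is essentially the paper's Lemma~\ref{lem:crucialPostnikova} (indeed slightly cleaner: deriving $\mu(w)>0$ directly from invariance makes minimality of $\vert w \vert$ unnecessary). However, your treatment of \emph{positivity} is a genuine gap. You claim positivity is ``handled by restricting to the support, since letters of probability zero \ldots are immaterial to $\mu$-distributedness.'' This does not prove the implication, and it points in the wrong direction: statement \ref{mainthm:bernoulli} demands $p(a)>0$ for \emph{all} $a$ in the fixed alphabet $\Sigma$, so if $\mu$ is Bernoulli but $\mu(b)=0$ for some $b\in\Sigma$, then \ref{mainthm:bernoulli} is \emph{false}, and to close the equivalence you must show that \ref{mainthm:postnikova} fails as well, i.e.\ exhibit a $\mu$-distributed $\alpha$ and a word $w$ such that $\mathcal{S}_w[\alpha]$ is infinite but not $\mu$-distributed. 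If zero-probability letters were truly immaterial, \ref{mainthm:postnikova} would hold for such $\mu$ and the theorem would be false. The paper needs a separate, bespoke construction here (Lemma~\ref{lem:non_positive_Post}): pick $b$ with $\mu(b)=0$, take a $\mu$-distributed $\beta$ over the support $\Gamma$, and insert the word $bb$ into $\beta$ at positions $2,4,8,16,\ldots$; the resulting $\alpha$ is still $\mu$-distributed because the insertions have density zero, yet the Postnikova strategy $\mathcal{S}_b$ selects an infinite sequence in which the frequency of $b$ is bounded below by roughly $1/2$, hence is not $0=\mu(b)$. (Note this refutes \ref{mainthm:postnikova} only for \emph{some} $\mu$-distributed sequence, which suffices since \ref{mainthm:postnikova} quantifies over all of them.) Your proposal is missing this construction, or any substitute for it.

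Two further remarks. For \ref{mainthm:bernoulli} $\Rightarrow$ \ref{mainthm:agafonov}, your sketch names exactly the paper's tools (Corollary~\ref{cor:strong_components_are_enough}, the irreducible Markov chain on a recurrent component, the Ergodic Theorem, block decomposition plus Chernoff-type concentration), but it stops where the work begins: the paper's mechanism is (i) a reduction from word-frequencies to single-letter frequencies by composing selectors (Proposition~\ref{prop:fin_sel_comp}, Lemma~\ref{lem:simply_normal_is_enough}); (ii) a partition of $\Sigma^n$ into classes $E_n$ (too few symbols selected), $G_n$ (selected symbols with skewed frequencies), and $D_n^p$ (good blocks), with $\mu_p(E_n)\to 0$ and $\mu_p(G_n)\to 0$ (Lemmas~\ref{main:lemma1}, \ref{main:lemma2}, and~\ref{lem:mainclaim}); and (iii) the use of $\mu_p$-block-distributedness (Proposition~\ref{prop:mu_implies_mu}) to conclude via Lemma~\ref{lem:Agafonov_preserve_arbitrary}. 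Handling all words at once inside the block argument, as you propose, is not obviously workable without step (i), and as written your proposal asserts rather than proves this direction. Finally, a small point on \ref{mainthm:agafonov} $\Rightarrow$ \ref{mainthm:postnikova}: over an infinite alphabet, a window that literally stores the last $\vert w \vert$ input symbols has infinitely many states; the paper's Lemma~\ref{postnikovastrategiesbyautomata} instead tracks the binary vector recording which prefixes of $w$ currently match, keeping the state set finite at $2^{\vert w \vert}$ states.
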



\begin{proof}
For the implication \ref{mainthm:bernoulli} $\Rightarrow$ \ref{mainthm:agafonov}, \autoref{cor:strong_components_are_enough} yields that any run of a finite-state selector on a $\mu$-distributed sequence eventually reaches 
a strongly connected recurrent component; the restriction of any DFA to the state set of one of its recurrent component is 
also a DFA, and the result now follows by \autoref{lem:final_crucial}. The implication \ref{mainthm:agafonov} $\Rightarrow$ \ref{mainthm:postnikova} is clear from the definitions since the considered strategies are computed by finite automata (\ref{postnikovastrategiesbyautomata}). Lastly, \autoref{lem:crucialPostnikova} and \autoref{lem:non_positive_Post} prove that \ref{mainthm:postnikova} $\Rightarrow$ \ref{mainthm:bernoulli}.
\end{proof}

\begin{remark}\label{rem:2}
Theorem \ref{the:main} addresses the case where a DFA or Postnikova strategy selects an infinite sequence from a $\mu$-distributed sequence. If one wants to restrict attention to automata that
\emph{always} select an infinite subsequence from any $\mu$-distributed sequence, extra conditions sometimes occur in the literature, e.g. that every cycle in the (underlying graph of the) DFA contains an accepting
state \citep{BECHER2013109} ensuring that an infinite subsequence is selected from any (not just $\mu$-distributed sequence). Another condition that ensures that an infinite subsequence is selected from
any $\mu$-distributed sequence is to consider only DFAs such that every strongly connected recurrent component contains at least one accepting state. In this case, Corollary \ref{cor:strong_components_are_enough}
ensures that any run on the automaton on a $\mu$-distributed sequence will reach a strongly recurrent component, and Lemma \ref{main:lemma1} below then ensures that the DFA accepts an infinite subsequence from $\alpha$.
\end{remark}

\section{Non-preservation of $\mu$-distributedness for non-Bernoulli measures}

We first prove that if $\mu$ is a probability map such that any DFA selects a $\mu$-distributed
right-infinite sequence from any $\mu$-distributed right-infinite sequence, then $\mu$ must be Bernoulli. This is 
an immediate consequence of a stronger property proved in Lemma \ref{lem:crucialPostnikova} below.

The idea of the proof is that if $\mu$ is \emph{not} Bernoulli, there exists a word $a_1 \cdots a_k$ such that
$\mu(a_1 \cdots a_{k-1}) = \prod_{j=1}^{k-1} a_j$, but
$\mu(a_1 \cdots a_{k-1} a_k) \neq \mu(a_1 \cdots a_{k-1}) \cdot \mu(a_k)$. One can then
construct a finite-state selector that acts like a ``sliding window'' of size $k-1$, that is, remembers the last $k-1$ letters scanned and accepts
if these are $a_1 \cdots a_k$. This selector will select every letter following $a_1 \cdots a_{k-1}$; after a prefix of length $N$ of a right-infinite
sequence has been scanned, approximately $N \cdot \mu(a_1 \cdots a_{k-1})$ have been selected, and approximately
$N \cdot \mu(a_1 \cdots a_{k-1} a_k)$ of these will be the symbol $a_k$. But then the limiting frequency of $a_k$ in the sequence selected 
will be $\mu(a_1 \cdots a_{k-1} a_k)/\mu(a_1 \cdots a_{k-1}) \neq \mu(a_k)$, and the result follows. 

For completeness, we give a fully formal proof after the lemma, but the entirety of the reasoning is essentially as we just described.

%

\begin{lemma}\label{lem:crucialPostnikova}
Let $\mu  : \Sigma^* \longrightarrow [0,1]$ be a probability map.
If $\mu$ is not induced by a   Bernoulli distribution on $\Sigma$, there exists a finite word $w\in \Sigma^*$ such that if
$\alpha \in \Sigma^\omega$ is $\mu$-distributed, then 
the Postnikova strategy $\mathcal{S}_w=\{ u\in \Sigma^*\mid \exists v\textrm{ s.t. }u=vw\}$ 
selects from $\alpha$ an infinite sequence $\beta \in \Sigma^\omega$ that is \emph{not} 
$\mu$-distributed.
\end{lemma}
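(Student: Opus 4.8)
The plan is to produce from the failure of the Bernoulli factorisation a single word $w$ (depending only on $\mu$) whose immediate successor letter is selected with a skewed frequency. If no $\mu$-distributed sequence exists the implication is vacuously true for any $w$, so I may assume one exists; then $\mu$ is invariant by \Cref{prop:invariancefromgenericity}. Since $\mu$ is not induced by a Bernoulli distribution, there must exist $w \in \Sigma^*$ and $a \in \Sigma$ with $\mu(wa) \neq \mu(w)\mu(a)$, for otherwise $\mu(wa)=\mu(w)\mu(a)$ would hold for all $w,a$ and a trivial induction on length would give $\mu(a_1\cdots a_n)=\prod_{i=1}^n\mu(a_i)$ for every word, making $\mu$ Bernoulli with distribution $b\mapsto\mu(b)$. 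I would moreover record that $\mu(w)>0$: invariance gives $\sum_{b\in\Sigma}\mu(wb)=\mu(w)$ with non-negative summands, so $\mu(wb)\leq\mu(w)$ for each $b$; were $\mu(w)=0$ we would get $\mu(wa)=0=\mu(w)\mu(a)$, contradicting the choice of $(w,a)$. Dividing the defining inequality by $\mu(w)>0$ then yields $\mu(wa)/\mu(w)\neq\mu(a)$, the quantitative fact driving everything that follows; note that $w$ and $a$ are fixed once and for all in terms of $\mu$.

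Now fix any $\mu$-distributed $\alpha$. A symbol $\alpha_i$ is selected by $\mathcal{S}_w$ precisely when $\alpha\vert_{<i}$ ends in $w$, i.e.\ when $w$ occurs in $\alpha$ ending at position $i-1$; thus an occurrence of $w$ starting at position $j$ selects the letter $\alpha_{j+\vert w\vert}$. Comparing the relevant index ranges, the number $M(N)$ of symbols selected within $\alpha\vert_{\leq N}$ differs from $\#_{w}(\alpha\vert_{\leq N})$ by at most $1$ (a boundary occurrence of $w$ may have no successor inside the prefix), while the number of selected symbols equal to $a$ differs from $\#_{wa}(\alpha\vert_{\leq N})$ by a term of absolute value at most $1$. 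Because $\mu(w)>0$, $\mu$-distributedness gives $\#_{w}(\alpha\vert_{\leq N})/N\to\mu(w)>0$, so $M(N)\to\infty$ and the selected sequence $\beta$ is genuinely infinite.

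To finish, let $N(M)$ be the position in $\alpha$ of the $M$-th selected symbol, which is well defined for all $M$ since $\beta$ is infinite and satisfies $N(M)\to\infty$ as $M\to\infty$. Then $\#_{a}(\beta\vert_{\leq M})/M$ equals $\#_{wa}(\alpha\vert_{\leq N(M)})/\#_{w}(\alpha\vert_{\leq N(M)})$ up to corrections that vanish once numerator and denominator are divided by $N(M)$. Letting $M\to\infty$ and using $\mu(w)>0$ to control the denominator, this ratio converges to $\mu(wa)/\mu(w)$, which is different from $\mu(a)$. Hence the single letter $a$ occurs in $\beta$ with limiting frequency $\mu(wa)/\mu(w)\neq\mu(a)$, so $\beta$ is not $\mu$-distributed, as required.

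The delicate point, and the step I expect to be the main obstacle, is the change of normalisation from the length $N$ of the scanned prefix of $\alpha$ to the length $M$ of the selected prefix of $\beta$: one must ensure the limit over \emph{all} $M$ (not merely along a subsequence) exists. This is exactly where $\mu(w)>0$ is indispensable, since it guarantees both that infinitely many symbols are selected (so that $N(M)$ is defined for every $M$) and that the denominator $\#_{w}(\alpha\vert_{\leq N(M)})/N(M)$ stays bounded away from $0$, legitimising the ratio-of-limits computation.
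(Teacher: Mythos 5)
Your proposal is correct and follows essentially the same route as the paper's proof: both extract a minimal witness $(w,a)$ with $\mu(wa)\neq\mu(w)\mu(a)$, establish $\mu(w)>0$, apply the Postnikova strategy $\mathcal{S}_w$, and identify the frequency of $a$ in the selected sequence with the ratio $\#_{wa}(\alpha\vert_{\leq N})/\#_{w}(\alpha\vert_{\leq N})\to\mu(wa)/\mu(w)\neq\mu(a)$. The only differences are cosmetic: your positivity argument for $\mu(w)$ via invariance is slightly cleaner, and you show the limiting frequency of $a$ in $\beta$ actually exists, whereas the paper only shows it deviates from $\mu(a)$ by a fixed $\gamma/2$ infinitely often -- either suffices.
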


\begin{proof}
If no element of $\Sigma^\omega$ is $\mu$-distributed, the lemma is vacuously true. 
Hence, assume that there is at least one $\alpha \in \Sigma^\omega$ that is $\mu$-distributed. If $\vert \Sigma \vert = 1$, then there is exactly one probability map on $\Sigma^*$, namely
the one that assigns probability $1$ to the unique element of $\Sigma^k$ for every  $k \geq 0$, and this probability map is clearly Bernoulli, and the lemma is thus vacuously true. Hence, in the remainder of the proof,
assume that $\vert \Sigma \vert \geq 2$.

Assume that $\mu$ is not induced by a Bernoulli distribution on $\Sigma$. Then there are $k$ and a word  $a_1\cdots_{k-1} a_k \in \Sigma^k$
such that $\mu(a_1 \cdots a_{k-1} a_k) \neq \prod_{j=1}^k \mu(a_j)$. Observe that $k = 1$ is impossible
and thus we must have $k \geq 2$. Assume without loss of generality that
$k$ is minimal among such $k$, and hence that
$\mu(a_1 \cdots a_{k-1}) = \prod_{j=1}^{k-1} \mu(a_j)$, and note that this implies
$\mu(a_1 \cdots a_{k-1} a_k) \neq \mu(a_1 \cdots a_{k-1}) \cdot \mu(a_k)$.

Assume for contradiction that $\mu(a_1 \cdots a_{k-1}) = 0$. Then $\mu(a_i) = 0$ for at least one $a_i$ 
and thus $\mu(a_1 \cdots a_{k-1} a_k) = 0$, because the fact that 
there is at least one $\mu$-distributed right-infinite sequence entails that $\mu(a_1 \cdots a_{k-1} a_k) > 0$ implies $\mu(a_i) \geq \mu(a_1 \cdots a_{k-1} a_k) > 0$.
But this is a contradiction as we would then have $\mu(a_1 \cdots a_{k-1} a_k) = 0 = \prod_{j=1}^k \mu(a_j)$.
Thus, $\mu(a_1 \cdots a_{k-1}) > 0$.

As  $\mu(a_1 \cdots a_{k-1} a_k) \neq \mu(a_1 \cdots a_{k-1}) \cdot \mu(a_k)$,  $\mu(a_1 \cdots a_{k-1}) > 0$, and we have
$\mu(a_1 \cdots a_{k-1} a_k) \leq \mu(a_1 \cdots a_{k-1})$ (because $\mu$ is invariant by Proposition \ref{prop:invariancefromgenericity}),
there is a real number $\gamma$ with
$0 < \gamma < 1$ such that:
$$
\left\vert \frac{\mu(a_1 \cdots a_{k-1} a_k)}{ \mu(a_1 \cdots a_{k-1})} -  \mu(a_k) \right\vert > \gamma 
$$

We now consider the Postnikova strategy $\mathcal{S}_w$ with $w=a_1 \cdots a_{k-1}$, i.e. the strategy that selects exactly the symbols following the occurrences of $a_1 \cdots a_{k-1}$ in $\alpha$.

Let $\alpha \in \Sigma^\omega$ be $\mu$-distributed. Then, 
for every $\epsilon > 0$, there is an $N_\epsilon > 0$
such that for all $n >N_\epsilon$ we have:
\begin{equation*}
\left\vert\frac{\countones{a_1 \cdots a_{k-1}}{\alpha\vert_{\leq n}}}{n}-\mu(a_1 \cdots a_{k-1})\right\vert\leq\epsilon
\end{equation*}
Hence
\begin{equation}\label{eq:fjol_1}
n \mu(a_1 \cdots a_{k-1}) - n\epsilon \leq \countones{a_1 \cdots a_{k-1}}{\alpha\vert_{\leq n}} \leq n \mu(a_1 \cdots a_{k-1}) + n\epsilon
\end{equation}
\noindent and
\begin{equation}\label{eq:fjol_2}
n \mu(a_1 \cdots a_{k-1} a_k) - n\epsilon \leq \countones{a_1 \cdots a_{k-1} a_k}{\alpha\vert_{\leq n}} \leq n \mu(a_1 \cdots a_{k-1} a_k) + n\epsilon
\end{equation}
As $\mu(a_1 \cdots a_{k-1}) > 0$ and $\mathcal{S}_w$ selects the symbol after each occurrence of $a_{1} \cdots a_{k-1}$, $\mathcal{S}_w$ selects an infinite sequence $\beta$ from $\alpha$.
%
%
%
Let $\beta^{(n)} \in \Sigma^*$ be the finite sequence selected by $\mathcal{S}_w$ from $\alpha\vert_{\leq n}$. Observe that we have
$\vert \beta^{(n)} \vert = \countones{a_1 \cdots a_{k-1}}{\alpha\vert_{< n}}$,
and $\countones{a_k}{\beta^{(n)}} = \countones{a_1 \cdots a_{k-1}a_k}{\alpha\vert_{< n}}$.
The fraction of occurrences $\countones{a_k}{\beta^{(n)}}/\vert \beta^{(n)} \vert$ of $a_k$ in $\beta^{(n)}$ thus satisfies:
$$
\frac{\countones{a_k}{ \beta^{(n)}}}{\vert \beta^{(n)} \vert} = \frac{ \countones{a_1 \cdots a_{k-1}a_k}{\alpha\vert_{< n}}}{ \countones{a_1 \cdots a_{k-1}}{\alpha\vert_{< n}}}
= \frac{\countones{a_1 \cdots a_{k-1}a_k}{\alpha\vert_{< n}}}{n} \cdot \frac{n}{ \countones{a_1 \cdots a_{k-1}}{\alpha\vert_{< n}}}
$$
and hence, by (\ref{eq:fjol_1}) and (\ref{eq:fjol_2}), for all $n > N$:
\begin{equation}\label{eq:squeeze_k}
\frac{\mu(a_1 \cdots a_{k-1} a_k) - \epsilon}{\mu(a_1 \cdots a_{k-1}) + \epsilon} \leq \frac{\countones{a_k}{ \beta^{(n)}}}{\vert \beta^{(n)} \vert} \leq
\frac{\mu(a_1 \cdots a_{k-1} a_k) + \epsilon}{\mu(a_1 \cdots a_{k-1}) - \epsilon}
\end{equation}
Consider an arbitrary $\delta$ with $0 < \delta < \gamma/2$. By (\ref{eq:squeeze_k}), for all sufficiently small $\epsilon$, we have 
$$
\left\vert  \frac{\mu(a_1 \cdots a_{k-1}a_k)}{\mu(a_1 \cdots a_{k-1})} - \frac{\countones{a_k}{ \beta^{(n)}}}{\vert \beta^{(n)} \vert} \right\vert < \delta
$$
\noindent and thus for all $n > N_\epsilon$:
 \begin{align*}
\gamma < \left\vert \frac{\mu(a_1 \cdots a_{k-1} a_k)}{\mu(a_1 \cdots a_{k-1})} - \mu(a_k) \right\vert &\leq \left\vert  \frac{\mu(a_1 \cdots a_{k-1} a_k)}{\mu(a_1 \cdots a_{k-1})} - \frac{\countones{a_k}{ \beta^{(n)}}}{\vert \beta^{(n)} \vert} \right\vert +  \left\vert \frac{\countones{a_k}{ \beta^{(n)}}}{\vert \beta^{(n)} \vert} -  \mu(a_k)   \right\vert \\
& < \delta +  \left\vert \frac{\countones{a_k}{ \beta^{(n)}}}{\vert \beta^{(n)} \vert} -  \mu(a_k)   \right\vert < \frac{\gamma}{2} + \left\vert \frac{\countones{a_k}{ \beta^{(n)}}}{\vert \beta^{(n)} \vert} -  \mu(a_k)   \right\vert
\end{align*}
\noindent whence:
$$
 \left\vert \frac{\countones{a_k}{ \beta^{(n)}}}{\vert \beta^{(n)} \vert} -  \mu(a_k)   \right\vert > \frac{\gamma}{2}
$$
\noindent and as the sequence $(\beta^{(n)})_{n \in \mathbb{N}}$ consists of prefixes of the sequence $\mathcal{S}_w[\alpha]$ selected by $\mathcal{S}_w$ from $\alpha$, and is eventually increasing, the frequency
of occurrences of $a_k$ differs infinitely often from $\mu(a_k)$ by at least $\gamma/2$,  $\mathcal{S}_w[\alpha]$ cannot be $\mu$-distributed.
\end{proof}

Lemma \ref{lem:crucialPostnikova} shows that if a probability map is not induced by a Bernoulli distribution on $\Sigma$, some Postnikova strategy will select a non-$\mu$-distributed sequence from \emph{any} $\mu$-distributed sequence. In case $\mu$ \emph{is} induced by a Bernoulli distribution, but not a positive Bernoulli distribution, we can show the weaker result that there will be a Postnikova strategy that selects a non-$\mu$-distributed sequence form \emph{some} $\mu$-distributed sequences
(and this is sufficient for our main Theorem).

\begin{lemma}\label{lem:non_positive_Post}
Let $\mu  : \Sigma^* \longrightarrow [0,1]$ be a probability map induced by a Bernoulli distribution on $\Sigma$ that is not positive.
Then there exists a finite word $w\in \Sigma^*$ and $\mu$-distributed
$\alpha \in \Sigma^\omega$ such that
the Postnikova strategy $\mathcal{S}_w=\{ u\in \Sigma^*\mid \exists v\textrm{ s.t. }u=vw\}$ 
selects from $\alpha$ an infinite sequence $\beta \in \Sigma^\omega$ that is not 
$\mu$-distributed.
\end{lemma}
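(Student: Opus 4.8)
The plan is to exploit the fact that a letter $b$ with $p(b)=0$ has limiting frequency $0$ in every $\mu$-distributed sequence, yet can be forced to occur \emph{infinitely often} in a bespoke $\mu$-distributed sequence by planting it in increasingly sparse positions; a Postnikova strategy can then concentrate these occurrences so as to give $b$ a strictly positive frequency in the selected sequence. Concretely, since $\mu$ is Bernoulli but not positive, fix $b \in \Sigma$ with $p(b) = 0$ and set $w = b$. Let $\Sigma_+ = \{a \in \Sigma : p(a) > 0\}$; this set is non-empty (because $\sum_{a \in \Sigma} p(a) = 1$) and does not contain $b$. The restriction of $\mu$ to $\Sigma_+^*$ is a \emph{positive} Bernoulli probability map over $\Sigma_+$, so there is a $\mu$-distributed sequence $\alpha_0 \in \Sigma_+^\omega$: when $\vert \Sigma_+ \vert = 1$ take the constant sequence, and when $\vert \Sigma_+ \vert \geq 2$ invoke the construction of Madritsch and Mance \cite{MadritchMance:constructing}. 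Viewed as an element of $\Sigma^\omega$, $\alpha_0$ is already $\mu$-distributed, since any word over $\Sigma_+$ occurs with its $\mu$-value while any word containing a letter outside $\Sigma_+$ has both limiting frequency $0$ (it never occurs in $\alpha_0$) and $\mu$-value $0$.

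Next I would build $\alpha$ from $\alpha_0$ by inserting pairwise-disjoint blocks $bb$ at a sequence of positions chosen sparse enough that the total number of inserted symbols among the first $N$ symbols of $\alpha$ is $o(N)$ (for instance, inserting the $i$-th block only after exponentially many further symbols of $\alpha_0$ have been emitted). By sparsity the blocks are non-adjacent, so each inserted block is flanked on both sides by letters of $\Sigma_+$. The key verification is that $\alpha$ remains $\mu$-distributed. For a word $u$ over $\Sigma_+$, every occurrence of $u$ in $\alpha$ lies inside a maximal $\Sigma_+$-gap and hence corresponds to an occurrence in $\alpha_0$, and only $O(\vert u \vert)$ occurrences of $u$ can be destroyed at each block boundary; since the number of boundaries up to position $N$ is $o(N)$ and the length distortion between $\alpha$ and $\alpha_0$ is also $o(N)$, the limiting frequency of $u$ is unchanged and equals $\mu(u)$. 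For a word $u$ containing $b$, every occurrence must overlap an inserted block, so the number of occurrences up to $N$ is bounded by the $o(N)$ inserted mass and the limiting frequency is $0 = \mu(u)$. I expect this bookkeeping — showing that the sparse insertions perturb every finite-word count by only $o(N)$ — to be the main (though entirely routine) obstacle.

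Finally I would analyse the selection. Since $b$ occurs infinitely often, $\mathcal{S}_b$ selects an infinite sequence $\beta$. The strategy selects $\alpha_i$ exactly when $\alpha_{i-1} = b$, and because $b$ appears only inside the blocks $bb$, each block occupying positions $m, m+1$ contributes precisely two selected letters, namely $\alpha_{m+1} = b$ (the second letter of the block, selected because $\alpha_m = b$) and $\alpha_{m+2} \in \Sigma_+$ (the letter following the block, selected because $\alpha_{m+1} = b$); no selection occurs anywhere between blocks. Consequently $\beta = b\,c_1\,b\,c_2\,b\,c_3\cdots$ with every $c_j \in \Sigma_+$, so the letter $b$ occurs in $\beta$ with limiting frequency $\tfrac{1}{2}$. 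As $\mu(b) = p(b) = 0 \neq \tfrac{1}{2}$, the word $b$ does not occur in $\beta$ with limiting frequency $\mu(b)$, whence $\beta$ is not $\mu$-distributed, completing the proof.
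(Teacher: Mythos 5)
Your proposal is correct and follows essentially the same route as the paper's proof: pick $b$ with $p(b)=0$, take a $\mu$-distributed sequence over the positive sub-alphabet (via Madritsch--Mance), insert $bb$ blocks at exponentially sparse positions, and apply $\mathcal{S}_b$ to force $b$ to have frequency about $\tfrac{1}{2}$ in the selected sequence. Your write-up is in fact slightly more explicit than the paper's in two places — the exact alternating form $b\,c_1\,b\,c_2\cdots$ of the selected sequence, and the separate treatment of the degenerate case $\vert\Sigma_+\vert=1$ — but these are refinements of the same argument, not a different one.
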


\begin{proof}
As $\mu$ is not positive, pick $b \in \Sigma$ such that $\mu(b) = 0$, and let $\Gamma \subseteq \Sigma$ be a maximal subset such that the restriction of $\mu$ to $\Gamma$
is a positive Bernoulli distribution (observe that $\Gamma$ is non-empty because $\mu$ is a probability map and $1 = \sum_{a \in \Sigma} \mu(a)$ thus implies $\mu(a) > 0$ for some $a \in \Sigma$).
By \citep{MadritchMance:constructing} there exists a $\mu$-distributed infinite sequence $\beta \in \Gamma^\omega$; notice that $\beta$ can be assumed w.l.o.g. to not contain any occurrences of $b$. Let $\alpha \in \Sigma^\omega$ be obtained by inserting  the string $bb$ at positions
$2,4,8,16,\ldots$. Then, $\alpha$ is $\mu$-distributed because (i) every $v \in \Gamma^*$ occurs with the same limiting frequency as in $\beta$\footnote{The key observation here is that since the 'bb's are inserted at exponentially increasing positions, the frequency of occurrence of all other strings is decreased by a very small (and quickly decaying) factor.}, and every $v \in \Sigma^*$ that contains
an element of $\Sigma\setminus \Gamma$ occurs in $\alpha$ with limiting frequency $0$. Set $w = b$; then the Postnikova strategy
$\mathcal{S}_w = \{u \in \Sigma^* \vert \exists v \mathrm{ s.t. } u = vw\}$ selects from $\alpha$ a sequence $\beta = \mathcal{S}_w[\alpha]$ such that, for every $n > 0$, 
$\#_{b}(\beta\vert_{\leq n}) \geq n/2 - 1$. Thus, the limiting frequency of $b$ in $\beta$ is not $0$, and hence is not $\mu(b)$, proving
that $\beta$ is not $\mu$-distributed.
\end{proof}

\begin{lemma}\label{postnikovastrategiesbyautomata}
Let $w\in\Sigma^*$. The Postnikova strategy $\{ u\in \Sigma^*\mid \exists v\textrm{ s.t. }u=vw\}$ is computable by a strongly connected DFA over $\Sigma$.
\end{lemma}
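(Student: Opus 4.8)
The plan is to recognize the Postnikova strategy $\mathcal{S}_w=\Sigma^*w$ by the classical suffix-matching (Knuth--Morris--Pratt) automaton, being careful that it stays finite-state even when $\Sigma$ is infinite. Write $w=w_1\cdots w_m$. If $w=\lambda$ then $\mathcal{S}_\lambda=\Sigma^*$ is accepted by a single accepting state looping on every letter, which is trivially strongly connected; so assume $m\geq 1$. A naive sliding window storing the last $m$ letters would have $|\Sigma|^{m}$ states and hence be infinite for infinite $\Sigma$, so instead I take $Q=\{0,1,\dots,m\}$, where state $i$ records that the longest suffix of the input read so far which is a prefix of $w$ has length $i$. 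The start state is $0$, the unique accepting state is $m$, and $\delta(i,a)$ is the length of the longest suffix of $w_1\cdots w_i\,a$ that is a prefix of $w$.

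First I would verify that this is a legitimate DFA over a possibly infinite alphabet. There are only $m+1$ states, and $\delta$ is total: for every letter $a$ not occurring in $w$ one has $\delta(i,a)=0$, since no nonempty prefix of $w$ can end in such a letter. Hence all but the finitely many letters appearing in $w$ act identically, and $\delta$ is well defined on all of $\Sigma$. Correctness $L(A)=\Sigma^*w$ is then immediate from the standard matching invariant: since every prefix of $w$ has length at most $m$, a word $s$ reaches state $m$ iff the length-$m$ suffix of $s$ equals $w$, i.e. iff $s\in\Sigma^*w=\mathcal{S}_w$.

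The step requiring real work is strong connectivity. Reachability from the start state is easy: reading $w_1\cdots w_i$ from state $0$ passes through $0,1,\dots,i$, and more generally reading $w_{i+1}\cdots w_m$ from state $i$ reaches $m$, so every state reaches the accepting state $m$. It remains to get back to state $0$. For this I would use a \emph{reset letter} $c\in\Sigma$ with $c\neq w_1$: for any word $t$, the length-$\ell$ suffix of $t\,c^{m}$ equals $c^{\ell}$ for all $\ell\leq m$, and $c^{\ell}$ is a prefix of $w$ only when $\ell=0$; hence reading $c^{m}$ from any state lands in state $0$. Composing these two facts yields a path between any ordered pair of states, so $A$ is strongly connected.

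The only genuine obstacle is the existence of the reset letter $c\neq w_1$, which is available precisely when $|\Sigma|\geq 2$. In the unary case $|\Sigma|=1$ with $w\neq\lambda$ no such letter exists, and in fact $\mathcal{S}_w=\{c^{n}:n\geq m\}$ cannot be recognized by any strongly connected DFA, since a strongly connected unary automaton is a single cycle and therefore accepts a periodic set. This degenerate case is exactly the one that the companion results discharge separately (when $|\Sigma|=1$ every probability map is Bernoulli, so \Cref{lem:crucialPostnikova} and the Postnikova direction are vacuous), and the construction above therefore covers every case in which the lemma is actually applied.
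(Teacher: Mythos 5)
Your proof is correct (for $\vert\Sigma\vert\geq 2$) but takes a genuinely different route from the paper. The paper builds an automaton with $2^m$ states: bitvectors $(b_1,\dots,b_m)$ recording, for each $j$, whether the last $j$ symbols read equal $w_1\cdots w_j$; acceptance is $b_m=1$, and strong connectivity is argued by exhibiting, for each target bitvector $b$, a length-$m$ word $u$ with $u_i=w_i$ if{f} $b_i=1$. Your KMP-style automaton instead keeps only the length of the longest suffix of the input that is a prefix of $w$, so it has $m+1$ states rather than $2^m$, and your connectivity argument (forward along $w$, reset to state $0$ via $c^m$ with $c\neq w_1$) is actually cleaner on a technical point: in the paper's construction, ``inconsistent'' bitvectors (e.g.\ $(0,1)$ for $w=aa$, which would assert that the last two symbols are $aa$ but the last one is not $a$) are unreachable, so that automaton is strongly connected only after restricting to its reachable part, a repair your construction does not need. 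Both constructions cope with infinite $\Sigma$ in the same way: every letter not occurring in $w$ induces the same transitions, so finitely many states suffice. Finally, your remark on the unary alphabet identifies a real boundary case that the paper passes over silently: the paper's connectivity word also needs a letter different from $w_i$ whenever $b_i=0$, so both arguments tacitly assume $\vert\Sigma\vert\geq 2$, and as you show the lemma as stated genuinely fails for $\vert\Sigma\vert=1$ and $w\neq\lambda$ (a strongly connected unary DFA accepts a periodic set). This failure is harmless for the main theorem, exactly as you say, since for a unary alphabet every probability map is Bernoulli and \Cref{lem:crucialPostnikova} is vacuous there; still, your explicit treatment of that case is a point of care the paper's proof lacks.
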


\begin{proof}
Note that the alphabet can possibly be infinite in the following proof. In the trivial case $\vert \Sigma \vert = 1$, the result is trivial since there is only one infinite sequence $\alpha$ and every Postnikova strategy extracts $\alpha$ from $\alpha$. We therefore now suppose that the alphabet is of size at least $2$. 

We write $m$ the length of the word $w$, and write $w_1,w_2,\dots,w_m$ the bits of $w$. We design a finite state selector $M_w$ with exactly $2^m$ states which will select a bit of the input if and only if it is preceded by the word $w$. Let $M_w=(\mathbf{2^m},\delta,q_s,Q_F)$ be defined as follows:
\begin{itemize}
\item $\mathbf{2^m}=\{(b_1,b_2,\dots,b_m) : b_i\in\{0,1\}\}$ is the set of binary sequences of length $m$; those will represent a sequence of  bits where $b_j=1$ if and only if the previous $j$ bits of the input coincide with the first $j$ bits of the input;
\item $q_s$ the initial state is chosen to be the sequence $(0,0,\dots,0)\in \mathbf{2^m}$;
\item $Q_F$ the set of accepting states is equal to the set of sequences $\{(b_1,b_2,\dots,b_m)\in \mathbf{2^m} : b_m=1\}$;
\item $\delta$ the transition function is defined as $\delta(b_1,b_2,\dots,b_m;a)=(c_1,c_2,\dots,c_m)$ where $c_j=1$ if and only if $b_{j-1}=1$ and $a=w_j$ for $j\neq 1$, and $c_1=1$ if and only $a=w_1$.
\end{itemize}

The fact that this automaton computes the Postnikova strategy is clear from the definition. We now show it is strongly connected by showing that any state $(b_1,b_2,\dots,b_m)$ is reachable from an arbitrary state. For this, we consider a word $u_{b_1,b_2,\dots,b_m}=u_1,\dots,u_m$ defined by $u_i = w_i$ if and only if $b_i=1$ (and thus $u_i\neq w_i$ whenever $b_i=0$ -- which we can chose since the alphabet contains at least two symbols). We then claim that the automaton, starting from any state $c\in \mathbf{2^m}$, reaches the state $(b_1,b_2,\dots,b_m)$ when given the word $u_{b_1,b_2,\dots,b_m}$ as input.
\end{proof}

\section{Finite-state selectors preserve $\mu$-distributedness for Bernoulli measures}

The sequence of auxiliary results of this section follows the general lines of Agafonov's original proof in Russian for the case $\Sigma = \{0,1\}$ \citep{Agafonov}, but with multiple proofs needing more careful analysis and adapted techniques.

\subsection{Ancillary definitions and results}

\begin{definition}
Let $\Sigma$ be an alphabet, $\alpha = x_1 x_2 \cdots x_n \cdots\in \Sigma^\omega$, and let $n$ be a positive integer. 
The $n$-\emph{block decomposition of} $\alpha$ is the sequence
$(\alpha_{(n,r)})_{r \geq 1}$ where $\alpha_{(n,r)} = x_{(r-1)n+1} \cdots x_{rn} \in \Sigma^n$.
\end{definition}

Thus, $\alpha_{(n,1)}$ is the string of the first $n$ symbols of $\alpha$, $\alpha_{(n,2)}$ is the string of the next $n$ symbols, and so forth.

\begin{definition}
Let $\mu$ be a probability map over $\Sigma$ and $\alpha=x_{1}x_{2}\cdots x_{n} \cdots \in \Sigma^\omega$. We say that $\alpha$ is $\mu$-\emph{block-distributed} if, for each $n \geq 1$ and every $\word{w} \in \Sigma^n$,
the $n$-block decomposition $(\alpha_{(n,r)})_{r\geq 1}$ of $\alpha$ satisfies:
$$
\lim_{k \rightarrow \infty} \frac{\vert i \leq k : \alpha_{(n,k)} = \word{w} \vert}{k} = \mu(\word{w})
$$
\end{definition}

For finite alphabets and the special case of $p$ being an equidistribution on $\Sigma$, it is straightforward to prove that the properties of being $\mu_p$-distributed and $\mu_p$-block-distributed are equivalent \citep{niven1951,cassels1952,Postnikova}. For the present paper, we only use that $\mu_p$-distributedness implies $\mu_p$-block-distributedness, which follows by tedious, but standard counting arguments on sufficiently large finite prefixes of $\alpha\vert_{\leq N}$ using
the same reasoning as the original proof by Niven and Zuckerman for finite alphabets and normality \citep{niven1951}, \emph{mutatis mutandis}:

\begin{proposition}\label{prop:mu_implies_mu}
Let $\mu_p$ be a probability map induced by a Bernoulli distribution $p$ on the alphabet $\Sigma$. If $\alpha \in \Sigma^\omega$
is $\mu_p$-distributed, it is $\mu_p$-block distributed.
\end{proposition}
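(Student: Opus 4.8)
The plan is to follow the classical Niven--Zuckerman counting, adapted to a general Bernoulli map $\mu_p$ and to possibly infinite $\Sigma$. Fix $n\geq 1$ and $w\in\Sigma^n$; writing $N=mn$, the block-aligned tally $\lvert\{i\leq m:\alpha_{(n,i)}=w\}\rvert$ counts exactly the occurrences of $w$ in $\alpha\vert_{\leq N}$ that start at a position $\equiv 1\pmod n$. Partitioning the starting positions $1,\dots,N-n+1$ into the $n$ residue classes modulo $n$, the overlapping count $\#_w(\alpha\vert_{\leq N})$ is the sum of the $n$ phase-resolved counts, and $\mu_p$-distributedness gives $\#_w(\alpha\vert_{\leq N})/N\to\mu_p(w)$. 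Hence it suffices to prove that each of the $n$ residue classes carries the \emph{same} asymptotic density of occurrences of $w$, namely $\mu_p(w)/n$; multiplying the phase-$1$ density by $n$ (there is one block per $n$ positions) then yields the desired per-block limit $\mu_p(w)$.

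This phase-balance is the heart of the matter, and it is exactly here that the Bernoulli hypothesis is indispensable: for a non-Bernoulli map it genuinely fails (for the alternating sequence $0101\cdots$, every length-$2$ block is $01$, so the block-aligned frequency of $01$ is $1$, not $\mu(01)=1/2$). To exploit the product structure I would relate phase-resolved counts of $w$ to overlapping counts of the longer words obtained by prepending and appending blocks: prepending a block $v\in\Sigma^n$ shifts the residue class of an occurrence cyclically, while by the defining identity $\mu_p(u\cdot w\cdot v)=\mu_p(u)\mu_p(w)\mu_p(v)$ together with invariance (\Cref{prop:invariancefromgenericity}) the overlapping frequencies of all such words factor through $\mu_p(w)$. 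Summing these relations over all prepended and appended blocks---which is legitimate since $\sum_{v\in\Sigma^k}\mu_p(v)=(\sum_{a\in\Sigma}p(a))^k=1$ is a series of non-negative terms, hence absolutely convergent---chains the $n$ phases together, and taking windows of $t$ surrounding blocks and letting $t\to\infty$ after $N\to\infty$ squeezes any discrepancy between the phase densities to $0$. The boundary terms incurred when $N$ is not a multiple of $n$, or from the finitely many truncated occurrences at the ends of each window, contribute only $O(tn/N)$ and vanish in the limit.

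For infinite alphabets essentially no new idea is required: every sum that appears ranges over non-negative terms bounded above by $1$, so it is absolutely convergent and may be reordered freely, and the only extra care is to approximate each marginal $\sum_{v\in\Sigma^k}\mu_p(\cdot)$ by a finite sub-sum over some $F\subseteq\Sigma^k$ with $\mu_p(\Sigma^k\setminus F)<\epsilon$, absorbing the tail into the error. The step I expect to be the genuine obstacle is making the phase-balance argument fully rigorous: showing that overlapping-correctness \emph{at all word lengths} (not merely at length $n$) forces each residue class to carry precisely $\mu_p(w)/n$. Conceptually this is the statement that a point generic for the shift with respect to a Bernoulli measure is also generic for the $n$-th power of the shift; it holds because Bernoulli measures are ergodic for every power of the shift, and it fails for measures (such as the one carried by $0101\cdots$) that are not $s^n$-ergodic. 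Carrying this out by hand is the tedious counting that the Niven--Zuckerman method supplies, which is why I would import it \emph{mutatis mutandis} rather than reprove it from scratch.
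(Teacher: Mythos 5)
Your proposal is correct and takes essentially the same route as the paper: the paper's appendix proof is precisely the Niven--Zuckerman phase-counting argument imported \emph{mutatis mutandis}, with the distribution-dependent input isolated to exactly the two ingredients you identify --- the phase decomposition of $\#_w(\alpha\vert_{\leq N})$ modulo $\vert w\vert$, and the Bernoulli factorization (handled over infinite $\Sigma$ by absolute/dominated convergence) which the paper packages as $\theta_t(n)/n \to \mu_p(w)^2$ for the pair-words $w\Sigma^t w$. The one slip in your sketch (prepending a length-$n$ block \emph{preserves} the residue class mod $n$; the cross-phase information actually comes from gaps $t\not\equiv 0 \pmod{n}$, i.e.\ from the second-moment quantity $\theta_t$ rather than the first-moment factorization of $\mu_p(u\cdot w\cdot v)$) is immaterial to your conclusion, since you defer the rigorous phase balance to the imported counting, which is exactly what supplies it.
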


We now prove that finite-state selectors can be composed appropriately; this will later be a key ingredient in reducing the problem of selecting finite strings $w \in \Sigma^*$ with frequency $\mu_p(w)$ to the problem of selecting single symbols $a \in \Sigma$ with frequency $p(a)$.

\begin{proposition}[Finite-State selectors are compositional]
\label{prop:fin_sel_comp}
Let $A$ and $B$ be DFAs over the same alphabet. Then there is a DFA $C$ such that, for each
sequence $\word{w}$, $\pickedout{\word{w}}{C} = \pickedout{\pickedout{\word{w}}{A}}{B}$.
If $A$ and $B$ are both strongly connected and $A$ contains at least one accepting state, $C$ can be chosen to be strongly connected.
\end{proposition}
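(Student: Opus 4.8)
The plan is to build $C$ as a product automaton that runs $A$ and $B$ simultaneously, advancing the $B$-component only on the symbols that $A$ selects. Concretely, take $A = (Q_A,\delta_A,q_{s,A},F_A)$ and $B=(Q_B,\delta_B,q_{s,B},F_B)$, and let $C$ have state set $Q_A \times Q_B$, initial state $(q_{s,A},q_{s,B})$, and accepting set $F_C = F_A \times F_B$. The transition function always advances $A$ but advances $B$ only when $A$ is currently accepting: $\delta_C((p,q),a) = (\delta_A(p,a),\delta_B(q,a))$ if $p \in F_A$, and $\delta_C((p,q),a)=(\delta_A(p,a),q)$ otherwise. The intuition is that $(p,q)$ records the state of $A$ after the prefix scanned so far together with the state of $B$ after the subsequence that $A$ has selected so far; a symbol is jointly selected exactly when $A$ selects it ($p \in F_A$) and, at that moment, $B$ selects the symbol fed to it ($q \in F_B$), which is why $F_C = F_A \times F_B$.

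First I would verify correctness by induction on the length of the scanned prefix $u$, maintaining the invariant $\delta_C^*((q_{s,A},q_{s,B}),u) = (\delta_A^*(q_{s,A},u),\ \delta_B^*(q_{s,B},\pickedout{u}{A}))$. The inductive step is immediate from the case split in $\delta_C$: reading one more symbol $a$ advances $A$, and appends $a$ to the $A$-selected subsequence (hence advances $B$) precisely when the current $A$-state is accepting. Given the invariant, a symbol $a_n$ is selected by $C$ iff $C$'s state before $a_n$ lies in $F_A \times F_B$, i.e.\ iff $a_n$ is selected by $A$ and the corresponding symbol fed to $B$ is selected by $B$; this is exactly the defining condition for $\pickedout{\pickedout{u}{A}}{B}$, giving $\pickedout{w}{C} = \pickedout{\pickedout{w}{A}}{B}$ for every finite or right-infinite $w$.

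The delicate part, and the one I expect to be the main obstacle, is the strong-connectivity refinement, because the naive product need not be strongly connected even when $A$ and $B$ are: the $B$-component moves only on $A$-selected symbols, and when many states of $A$ are accepting the two components are driven almost in lockstep, exhibiting the usual synchronous-product obstruction. The plan is therefore to pass to a strongly connected recurrent component $C'$ of $C$, which exists by \Cref{lem:SS}, and to show that under the hypotheses this component carries the full composition. The assumption that $A$ is strongly connected and has an accepting state $f_A$ is exactly what lets the $B$-component be moved at will: to feed a prescribed symbol $b$ to $B$ one first steers $A$ to $f_A$ and then reads $b$, so any word can be impressed on $B$ up to the extra symbols fed while steering $A$. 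Combined with strong connectivity of $B$, one shows that from any reachable $(p,q)$ and any target $B$-state $q'$ one can reach a state with $B$-component $q'$; the only subtlety is that steering $A$ toward its target $p'$ may feed further symbols to $B$, which I would absorb by pre-positioning $B$ through an incoming transition of $q'$ (available since $B$ is strongly connected) and by choosing the final $A$-steering so that its induced feed lands $B$ exactly on $q'$. Since the run of $C$ enters $C'$ after a fixed prefix and the selection behaviour on the tail, which is all the subsequent frequency arguments use, is governed by $C'$, taking $C'$ as the desired strongly connected selector completes the construction.
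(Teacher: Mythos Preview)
Your construction of $C$ and the correctness invariant coincide with the paper's: same state set $Q_A\times Q_B$, same transition rule (advance the $B$-component only when the current $A$-state is accepting), same accepting set $F_A\times F_B$. The paper does not spell out the invariant as explicitly as you do, but the content is identical.

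Where you diverge is on the strong-connectivity clause. The paper simply asserts that the full product $C$ is strongly connected whenever $A$ and $B$ are and $A$ has an accepting state: it navigates $A$ to an accepting state while holding $B$ fixed, takes one $B$-step, and says ``a straightforward induction on $k$ now completes the proof.'' You are right to be wary: the full product need \emph{not} be strongly connected (over a unary alphabet, take $A$ a $3$-cycle $p\to f_1\to f_2\to p$ with $f_1,f_2$ accepting and $B$ a $2$-cycle; only three of the six product states are reachable from the initial pair, so $(f_1,s)$, say, is unreachable). The paper's induction lands the $B$-component at its target but never explains how to simultaneously place the $A$-component---precisely the obstruction you flag. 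Your alternative, passing to a recurrent component $C'$ via \Cref{lem:SS} and noting that only tail behaviour matters for the downstream frequency arguments in \Cref{lem:simply_normal_is_enough}, is adequate for how the proposition is actually used, though as you acknowledge it does not literally establish the statement (the recurrent component with a shifted start state need not satisfy $\pickedout{w}{C'}=\pickedout{\pickedout{w}{A}}{B}$ for \emph{every} $w$). In short: same construction, and on the strong-connectivity refinement you are more careful than the paper, trading a literal proof of the clause as stated for an argument that suffices where it is invoked.
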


\begin{proof}
Let $A = (Q^A,\Sigma,\delta^A,q_0^A,F^A)$
and $B = (Q^B,\Sigma,\delta^B,q_0^B,F^B)$.
Define $Q^C = Q^A \times Q^B$,
and set $q_0^C = (q_0^A,q_0^B)$
and $F^C = F^A \times F^B$.
For each $q^B \in Q^B$,
define the set $D_{q^B} = \{(q,q^B) : q \in Q^A \}
\subseteq Q^C$. Observe that
$Q^C = \bigcup_{q^B \in Q^B} D_{q^B}$ and
that for $q^B, r^B \in Q^B$ with
$q^B \neq r^B$, we have $D_{q^B} \cap D_{r^B} = \emptyset$, and thus $\{D_{q^B} : q^B \in Q^B\}$ is a partitioning of $Q^C$.
Hence, the transition relation, $\delta^C$, of $C$ may be defined by 
defining it separately on each subset $D_{q^B}$:
$$
\delta^C((q,q^B),a) = \left\{
\begin{array}{ll}
    (r,q^B) & \textrm{if } q \notin F^A  \textrm{ and } \delta^A(q,a) = r \\
     (r,r^B) & \textrm{if } q \in F^A \textrm{ and } \delta^A(q,a) = r
        \textrm{ and } \delta^B(q^B,a) = r^B \\

\end{array}
\right.
$$
Thus, when $C$ processes its input, it freezes the current state $q^B$
of $B$ (the freezing is represented by staying within $D_{q^B}$) and simulates
$A$ until an accepting state of $A$ is reached (i.e. just before $A$ would select the
next symbol); on the next transition, $C$ unfreezes the current state of $B$
and moves to the next state $r^B$ of $B$ and then freezes it and continues with a
simulation of $A$.

Observe that a symbol is picked out by $C$ if{f} the state is an element of $F^C = F^A \times F^B$ if{f} the symbol is the next symbol
read after simulation of $A$ reaches an accepting state of $A$ when
the current frozen state of $B$ is an accepting state of $B$.

By construction, $C$ is strongly connected if both $A$ and $B$ are: for any pair of states $(q_1^A,q_1^B)$ and $(q_2^A,a^2_B)$ in $Q^C$,
strong connectivity of $B$ implies that there is a directed path from $q_1^B$ to $q_2^B$ in $B$. 
Let $q_1^B, q_{1,2}^B, q_{1,3}^B,\ldots,q_{1,k}^B$ be the states along this path.
Strong connectivity of $A$ and the assumption that there is some $q_1^F \in F^A$ imply that 
there is a directed path from $(q_1^A,q_1^B)$ to $(q_1^F,q_1^B)$ in $C$, and by definition of $\delta^C$, there is a transition in $C$
from $(q_1^F,q_1^B)$ to $(q_1^B,q_{1,2}^B)$. A straightforward induction on $k$ now completes the proof.
\end{proof} 

%

The following shows that to prove that the property of being $ \mu_p$-distributed is preserved
under finite-state selection, it suffices to prove that the
limiting frequency of each $a \in \Sigma$ exists and equals $p(a)$.

\begin{lemma}\label{lem:simply_normal_is_enough}
Let $\mu_p$ be a probability map induced by a Bernoulli distribution $p$ on $\Sigma$, and let
$\alpha \in \Sigma^\omega$ be $\mu_p$-distributed. The following are equivalent:

\begin{itemize}

\item For all strongly connected DFAs $A$, if $A[\alpha]$ is infinite, then $A[\alpha]$ is $\mu_p$-distributed.

\item For all strongly connected DFAs $A$ and all $a \in \Sigma$, if $A[\alpha]$ is infinite, then
the limiting frequency of $a$ in $A[\alpha]$ exists and 
equals $p(a)$.

\end{itemize}
\end{lemma}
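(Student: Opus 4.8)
The plan is to prove the two implications separately, the reverse one being the substantial direction. The forward implication (first bullet $\Rightarrow$ second bullet) is immediate: a symbol $a \in \Sigma$ is a word of length one, so if $A[\alpha]$ is $\mu_p$-distributed then the limiting frequency of $a$ in $A[\alpha]$ equals $\mu_p(a) = p(a)$. For the reverse implication, fix a strongly connected DFA $A$ with $\beta := A[\alpha]$ infinite, and prove by induction on $m$ that every word $w \in \Sigma^m$ occurs in $\beta$ with limiting frequency $\mu_p(w) = \prod_{i=1}^m p(w_i)$. The base case $m=1$ is exactly the second bullet applied to $A$. For the inductive step, write $w = u a$ with $u = w_1 \cdots w_{m-1}$ and $a = w_m$. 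If $\mu_p(u) = 0$, then since $\#_w(\beta\vert_{\leq N}) \leq \#_u(\beta\vert_{\leq N})$ and the induction hypothesis gives limiting frequency $0$ for $u$, the limiting frequency of $w$ is also $0 = \mu_p(w)$, and we are done.

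Otherwise $\mu_p(u) > 0$, so by the induction hypothesis $u$ occurs in $\beta$ with positive limiting frequency, hence infinitely often. Let $B_u$ be the strongly connected DFA computing the Postnikova strategy $\mathcal{S}_u$ (Lemma \ref{postnikovastrategiesbyautomata}), which selects the symbol immediately following each occurrence of $u$. Since $\beta$ is infinite, $A$ must have at least one accepting state, so applying the compositionality result (Proposition \ref{prop:fin_sel_comp}) to $A$ and $B_u$ produces a strongly connected DFA $C$ with $C[\alpha] = B_u[A[\alpha]] = B_u[\beta]$. Because $u$ occurs infinitely often in $\beta$, the selected sequence $C[\alpha] = B_u[\beta]$ is infinite, so the second bullet applied to $C$ yields that $a$ occurs in $B_u[\beta]$ with limiting frequency $p(a)$.

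It remains to combine these facts. Writing $\beta^{(N)} = B_u[\beta\vert_{\leq N}]$ for the finite prefix of $B_u[\beta]$ selected from $\beta\vert_{\leq N}$, a direct count gives $\#_a(\beta^{(N)}) = \#_w(\beta\vert_{\leq N})$ and $\vert \beta^{(N)} \vert = \#_u(\beta\vert_{\leq N}) + O(1)$, the $O(1)$ accounting for a possible occurrence of $u$ at the very end of the prefix whose successor has not yet been read. Then the identity
\[
\frac{\#_w(\beta\vert_{\leq N})}{N} = \frac{\#_a(\beta^{(N)})}{\vert \beta^{(N)} \vert} \cdot \frac{\vert \beta^{(N)} \vert}{N}
\]
lets me pass to the limit: since $\mu_p(u) > 0$ we have $\vert \beta^{(N)} \vert \to \infty$, so the first factor is the frequency of $a$ in a growing prefix of $B_u[\beta]$ and tends to $p(a)$, while the second factor tends to $\mu_p(u)$ by the induction hypothesis (the $O(1)$ term being negligible after division by $N$). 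Hence the limiting frequency of $w$ in $\beta$ is $\mu_p(u)\cdot p(a) = \mu_p(w)$ by the Bernoulli property, completing the induction.

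I expect the main obstacle to be the exactness of the bookkeeping in the last step: one must verify that $C$ selects precisely $B_u[\beta]$ (so the single-symbol hypothesis transfers verbatim) and that the alignment between occurrences of $w$ in $\beta$, occurrences of $a$ in $B_u[\beta]$, and the lengths $\vert \beta^{(N)} \vert$ differs only by boundary terms that vanish in the limit. The strong connectivity of $C$, which is what allows the second bullet to be invoked at $C$, is exactly what Proposition \ref{prop:fin_sel_comp} supplies, using that $A$ is strongly connected with an accepting state and that $B_u$ is strongly connected.
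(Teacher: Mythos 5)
Your proof is correct and follows essentially the same route as the paper's: the same induction on word length, the same composition of the given selector with a selector for the Postnikova strategy of the length-$(m-1)$ prefix via Proposition \ref{prop:fin_sel_comp}, and the same frequency-ratio identity transferring the single-symbol hypothesis, applied to the composed strongly connected automaton $C$, up to words of length $m$. Two of your choices are in fact refinements of the paper's argument: you invoke Lemma \ref{postnikovastrategiesbyautomata} (with $2^{\vert u \vert}$ states) for the selector $B_u$, whereas the paper builds an ad hoc ``sliding window'' automaton with one state per element of $\Sigma^{m-1}$, which is not a finite-state device when $\Sigma$ is infinite, so your citation is the right one at the paper's level of generality; and you treat the degenerate case $\mu_p(u)=0$ separately, which the paper silently skips even though its ratio $\countones{ua}{A[\alpha\vert_{\leq N}]}/\countones{u}{A[\alpha\vert_{\leq N}]}$ is uncontrolled there and the composed selector need not select an infinite sequence (this case never arises in the paper's application, since Lemma \ref{lem:final_crucial} assumes $p$ positive, but your statement-level care is warranted because the lemma as stated does not assume positivity).
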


\begin{proof}
If, for all $A$ such that $A[\alpha]$ is infinite, $A[\alpha]$ is $\mu_p$-distributed, then in particular
the limiting frequency of $a$ in $A[\alpha]$ exists and 
is equal to $p(a)$ for all $A$.

Conversely, suppose that, for all strongly connected DFAs $A$ and all $a \in \Sigma$, 
if $A[\alpha]$ is infinite, then the limiting frequency of $a$ in $A[\alpha]$ exists and 
equals $p(a)$. If $\vert \Sigma \vert = 1$, it follows immediately that $A[\alpha]$ is $\mu_p$-distributed; hence, in the remainder of the proof, assume that $\vert \Sigma \vert \geq 2$. 

Let $A$ be a strongly connected DFA such that $A[\alpha]$ is infinite. 
If $A$ has no accepting states, there is nothing to prove, so assume that
$A$ has at least one accepting state.

We will prove by induction on $k \geq 0$
that the limiting frequency of 
every $v_1 \cdots v_k v_{k+1} \in \Sigma^{k+1}$ exists and equals
$\mu_p(v_1 \cdots v_k v_{k+1})$. 

\begin{itemize}
    \item $k = 0$: This is the supposition.
    
    \item $k \geq 1$. Suppose
    that the result has been proved
    for $k-1$. Let $v = v_1 \cdots v_k \in \Sigma^k$; by the induction hypothesis, the limiting frequency of $v_1 \cdots v_k$ in $A[\word{w}]$
    is $\mu_p(v_1 \cdots v_k)$. We claim that there is a strongly connected DFA
    $B$ that, from any sequence $\alpha$, selects the symbol after each occurrence
    of $v_1 \cdots v_k$, and \emph{only} those symbols, except for at most $k$ symbols at the start of $\alpha$. We construct such a DFA as folllows: Set $B = (\{q_0,\ldots,q_k\},\delta, q_0, \{q_k\}\}$, with $\delta$ to be defined below.
For $i$ with $0\leq i\leq k$, the state $q_i$ represents a situation where the last $i$ symbols read by $M_w$ is a length-$i$ prefix of $i$ among the last $k$ symbols read, and $i$ is maximal (i.e., there is no $j$ with $i < j \leq k$ such that the last $j$ symbols read by $B$ is also a prefix of $v$); observe that $v$ can overlap with itself, e.g. $wv= 000$, so when $B$ has read the string $100$, $i=2$, but the rightmost $0$ in $100$ is also a prefix of $v$).

We define $\delta$ as follows, for any $q_i$ and $a \in \Sigma$:

$\delta(q_i,a) = q_j$ where $j$ is the largest $j$ with $0 < j \leq k$ such that $v_{k-j+1} \cdots v_i \cdot a$ is a prefix of $v$. Note in particular that if $v_1 \cdots v_i a = v_1 \cdots v_i v_{i+1}$, then 
$\delta(q_i,a) = q_{i+1}$. If no such $j$ exists (i.e., no prefix of $v$ overlaps with the last $k$ symbols read), we define $\delta(q_i,a) = q_0$. Observe in particular, that $q_0$ has transitions to itself on all symbols $a$ such that $a \neq v_1$.

To see that $M_w$ is strongly connected, we prove the stronger property that between any (not necessarily distinct) states $q_i, q_j$, gthere is a path containing the state $q_k$. Observe that the state $q_i$ represents the situation where
the prefix $v_1 \cdots v_i$ has been read by $B$, and that there is a path $(q_{i},v_{i+1}) \cdots (q_{k-1}, v_{k})$ to $q_k$. As $\vert \Sigma \vert \geq 2$, there is at least one symbol $a \in \Sigma$ such that no suffix of $v_2 \cdots v_k \cdot a$ is a non-empty prefix of $v$, and hence there is at least one transition from $q_k$ to $q_0$; this proves strong connectivity.


    By \autoref{prop:fin_sel_comp},
    there is a strongly connected DFA $C$
    such that $C[\word{w}] = B[A[\word{w}]]$ for all $w \in \Sigma^*$.
    
    For any
    $a \in \Sigma$ and
    any sufficiently large positive integer $N$, we have:
    $$
    \frac{\countones{a}{\pickedout{\alpha\vert_{\leq N}}{C}}}{\vert \pickedout{\alpha\vert_{\leq N}}{C} \vert} =
    \frac{\countones{a}{\pickedout{\pickedout{\alpha\vert_{\leq N}}{A}}{B}}}{\vert \pickedout{\pickedout{\alpha\vert_{\leq N}}{A}}{B} \vert} =   
    \frac{\countones{v_1 \cdots v_k a}{\pickedout{\alpha\vert_{\leq N}}{A}}}{\countones{v_1 \cdots v_k }{\pickedout{\alpha\vert_{\leq N}}{A}}}
    $$

By the induction hypothesis,
for every $\epsilon > 0$,
we have, for all sufficiently large $N$, that
$\left\vert \frac{\countones{a}{\pickedout{\alpha\vert_{\leq N}}{C}}}{\vert \pickedout{\alpha\vert_{\leq N}}{C} \vert}
- p(a) \right\vert < \epsilon$, and hence:
\begin{equation}\label{eq:fur_first}
\left\vert\frac{\countones{v_1 \cdots v_k a}{\pickedout{\alpha\vert_{\leq N}}{A}}}{\countones{v_1 \cdots v_k }{\pickedout{\alpha\vert_{\leq N}}{A}}}
- p(a) \right\vert < \epsilon
\end{equation}

But for all sufficiently large $N$,
the induction hypothesis also furnishes that:
\begin{equation}\label{eq:fur_second}
\left\vert \frac{\countones{v_1 \cdots v_k }{\pickedout{\alpha\vert_{\leq N}}{A}}}{\vert \pickedout{\alpha\vert_{\leq N}}{A}\vert}
 - \mu_p(v_1 \cdots v_k)\right\vert < \epsilon
 \end{equation}
But as:
 $$
 \frac{\countones{v_1 \cdots v_k a}{\pickedout{\alpha\vert_{\leq N}}{A}}}{\vert \pickedout{\alpha\vert_{\leq N}}{A}\vert}
 =
 \frac{\countones{v_1 \cdots v_k a}{\pickedout{\alpha\vert_{\leq N}}{A}}}{\countones{v_1 \cdots v_k }{\pickedout{\alpha\vert_{\leq N}}{A}}}
 \cdot \frac{\countones{v_1 \cdots v_k }{\pickedout{\alpha\vert_{\leq N}}{A}}}{\vert \pickedout{\alpha\vert_{\leq N}}{A}\vert}
 $$
Equations \ref{eq:fur_first} and \ref{eq:fur_second} thus yield:
{\small
 \begin{align*}
 \left\vert \frac{\countones{v_1 \cdots v_k a}{\pickedout{\alpha\vert_{\leq N}}{A}}}{\vert \pickedout{\word{\alpha\vert_{\leq N}}}{A}\vert} - \mu_p(v_1 \cdots v_k a)\right\vert 
&= \left\vert \frac{\countones{v_1 \cdots v_k a}{\pickedout{\alpha\vert_{\leq N}}{A}}}{\countones{v_1 \cdots v_k }{\pickedout{\alpha\vert_{\leq N}}{A}}}
 \cdot \frac{\countones{v_1 \cdots v_k }{\pickedout{\alpha\vert_{\leq N}}{A}}}{\vert \pickedout{\alpha\vert_{\leq N}}{A}\vert}  - \mu_p(v_1 \cdots v_k)p(a) \right\vert \\
 &<
 \epsilon^2 + \epsilon\left( \frac{\countones{v_1 \cdots v_k a}{\pickedout{\alpha\vert_{\leq N}}{A}}}{\countones{v_1 \cdots v_k }{\pickedout{\alpha\vert_{\leq N}}{A}}}
 + \frac{\countones{v_1 \cdots v_k }{\pickedout{\alpha\vert_{\leq N}}{A}}}{\vert \pickedout{\alpha\vert_{\leq N}}{A}\vert}\right)\\
 &\leq \epsilon^2 + 2\epsilon
 \end{align*}
}
 Hence, for all $a \in \Sigma$, the limiting frequency of $v_1 \cdots v_k a$ in $\pickedout{\alpha}{A}$  exists and equals $\mu_p(v_1 \cdots v_k a)$,
 as desired.
\end{itemize}
\end{proof}

\subsection{Preservation of Bernoulli $\mu_p$-distributedness under finite-state selection}

By Lemma \ref{lem:simply_normal_is_enough} we may restrict our attention to proving that the frequency of single symbols from $\Sigma$
are preserved under selection by DFAs. The strategy will be to consider an arbitrary strongly connected DFA $A$, split the set of finite words $\Sigma^*$ into multiple classes that depend
on the selection behaviour of $A$, and use a combination of concentration bounds and basic Markov chain theory applied to these classes to obtain upper and lower bounds on the frequency with which $A$ selects
each symbol from $A$.



\begin{definition}
Let $A = (Q,\Sigma,\delta,q_0,F)$ be a strongly connected DFA. For any probability distribution $p : \Sigma \longrightarrow [0,1]$, any $b\in [0,1]$, $n\in\naturalN$, and any $\epsilon>0$, we define sets $D^p_n(b,\epsilon)$,
$E_n(b)$ and $G_n(b,\epsilon)$ as follows:
{\small
\begin{align*}
D_{n}^{p}(b,\epsilon,q) &= \left\{\word{w}\in\Sigma^{n}~:~  \vert\pickedout{\word{w}}{A_{q}} \vert >bn \textrm{ and } \sup_{a \in \Sigma} \left\vert \frac{\countones{a}{\pickedout{w}{A_{q}}}}{\vert\pickedout{w}{A_{q}}\vert}-p(a)\right\vert<\epsilon\right\} \\
D_n^p(b,\epsilon) &= \bigcap_{q \in Q} D_n^p(b,\epsilon,q)\\
E_n(b,q) &= \{\word{w}\in\Sigma^{n} : \vert A_{q}[\word{w}] \vert\leq bn\} \\
E_{n}(b) &= \bigcup_{q\in Q} E_{n}(b,q)\\
G_n(b,\epsilon,q) &= \left\{\word{w}\in\Sigma^{n} : \vert A_{q}[\word{w}]\vert >bn \textrm{ and } \sup_{a \in \Sigma} \left\vert\frac{\countones{a}{A_{q}[\word{w}]}}{\vert A_{q}[\word{w}] \vert}-p(a) \right\vert \geq\epsilon\right\}\\
G_{n}(b,\epsilon) &=\bigcup_{q\in Q}G_{n}(b,\epsilon,q)
\end{align*}
}
\end{definition}

Observe that, for all $b,n,\epsilon$,
$$
\Sigma^n = E_n(b) \cup D_n^p(b,\epsilon) \cup G_n(b,\epsilon)
$$
\noindent (and also note that $E_n(b)$ and $G_n(b,\epsilon)$ are not necessarily disjoint).

\begin{lemma}\label{main:lemma1}
Let $p$ be a positive Bernoulli distribution on $\Sigma$, and
let $S=(Q,\Sigma,\delta,q_s,Q_F)$ be a strongly connected finite automaton with $Q_F \neq \emptyset$, and let $n$ be a positive integer.
 Then there exists a real number $c >0 $ such that for all real numbers $\epsilon>0$ we have
$\lim_{n\rightarrow\infty} \mu_{p}\left( E_{n} \left(c-\epsilon \right)\right)=0$.
\end{lemma}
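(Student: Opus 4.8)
The plan is to recognize that, under the Bernoulli product measure $\mu_p$, reading a random word one symbol at a time turns the run of the automaton into a \emph{finite-state Markov chain}, and then to invoke the ergodic theorem for Markov chains. Concretely, fix a state $q \in Q$ and regard $\mu_p$ as the Bernoulli measure on $\Sigma^\omega$; the sequence of states $X_0 = q$, $X_j = \delta^*(q, \alpha\vert_{\leq j})$ visited while reading a $\mu_p$-random $\alpha$ is a Markov chain on the \emph{finite} set $Q$ with transition probabilities $P(s,r) = \sum_{a \in \Sigma : \delta(s,a) = r} p(a)$. These are genuine probabilities even when $\Sigma$ is infinite, since each $P(s,r)$ is a sum of non-negative terms and $\sum_{r \in Q} P(s,r) = \sum_{a \in \Sigma} p(a) = 1$ by absolute convergence.

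First I would establish that this chain is \emph{irreducible}. Strong connectivity of $S$ yields, for every pair $(s,r)$, a word $u$ with $\delta^*(s,u) = r$; since $p$ is positive, every symbol of $u$ has positive probability, so $P^{\vert u\vert}(s,r) \geq \mu_p(u) > 0$. A finite irreducible chain has a unique stationary distribution $\pi$ that is strictly positive on every state; I would then set $c := \pi(Q_F) = \sum_{q \in Q_F} \pi(q)$, which is positive because $Q_F \neq \emptyset$, and which is independent of the starting state $q$ as well as of $\epsilon$ and $n$ --- this is exactly the single constant $c$ demanded by the statement.

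The link to the lemma is the observation that a symbol at position $i$ of $\word{w}$ is selected by $A_q$ precisely when $X_{i-1} \in Q_F$, so $\vert A_q[\word{w}]\vert$ equals the occupation count $N_n := \sum_{j=0}^{n-1} \mathbf{1}[X_j \in Q_F]$ of the accepting states among the first $n$ prefixes. By the ergodic theorem for finite irreducible Markov chains (which requires only irreducibility, not aperiodicity), $N_n / n \to c$ in probability from any starting state. Consequently, for each fixed $q$ and each $\epsilon > 0$,
\[
\mu_p\big(E_n(c-\epsilon,q)\big) = \mu_p\big(N_n \leq (c-\epsilon)n\big) \leq \mu_p\big(\lvert N_n/n - c\rvert \geq \epsilon\big) \longrightarrow 0 \quad (n \to \infty),
\]
and a union bound over the finite state set $Q$ gives $\mu_p(E_n(c-\epsilon)) \leq \sum_{q \in Q} \mu_p(E_n(c-\epsilon,q)) \to 0$, as required.

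I expect the main obstacle to be the careful setup rather than any deep estimate: one must verify that the induced object really is a finite-state Markov chain when $\Sigma$ is infinite (handling the rearrangement of the defining series), argue irreducibility without relying on finiteness of the alphabet, and be cautious about periodicity --- the chain need not be aperiodic, so I would use the time-average (law-of-large-numbers) form of the ergodic theorem, which survives periodicity, rather than any statement about convergence of $P^n$. An alternative, more self-contained route that avoids quoting the ergodic theorem is to bound the variance of $N_n$ via the exponential decay of correlations in a finite irreducible chain and apply Chebyshev's inequality, which would additionally yield the quantitative concentration bounds alluded to in the overview.
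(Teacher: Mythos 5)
Your proposal is correct and follows essentially the same route as the paper's proof: turn the run of the automaton under $\mu_p$ into a finite irreducible Markov chain (irreducibility from strong connectivity plus positivity of $p$), apply the time-average ergodic theorem to the occupation count of accepting states, and finish with a union bound over the finitely many starting states. The only cosmetic difference is your choice $c = \pi(Q_F)$ (the exact limiting selection rate) where the paper takes $c = \min_{q_i \in Q_F} \pi(i)$ and lower-bounds the selection count by visits to that single state; both choices are valid since the lemma only asks for some $c > 0$.
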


\begin{proof}


$S$ induces a stochastic
$\vert Q \vert \times \vert Q \vert$ matrix $\mathbf{P}$ 
by setting
$$\mathbf{P}_{ij} = \sum_{a \in \Sigma} 
p(a) \cdot 1_ {\delta(i,a)=j}.
$$
Observe that if $\Sigma$ is infinite, the fact that (i) $p(a) \cdot 1_{\delta(i,a)=j} \geq 0$, (ii)
$p(a) \cdot 1_{\delta(i,a)=j} \leq p(a)$, and (iii) $\sum_{a \in \Sigma} p(a) = 1$ entails
that the series $\sum_{a \in \Sigma} 
p(a) \cdot 1_{\delta(i,a)=j}$ is absolutely convergent. 

 Note also that
$\mathbf{P}_{ij} =0$ if{f} there are no transitions from
$i$ to $j$ in $Q$ on a symbol $a \in \Sigma$
with $p(a) > 0$. 
As $S$ is strongly connected,
there exists a path from state $i$ to state $j$ for each  $i,j \in Q$. Let $v$ be the word along this path; as $p(a) > 0$ for all $a \in \Sigma$, we have $\mu_p(v) > 0$, whence for each $i,j$ there is an integer
$n_{ij}$ such that $\mathbf{P}^{n_{ij}}_{ij} > 0$,
that is, $\mathbf{P}$ (and its associated Markov chain) is irreducible. As all states of a finite Markov chain
with irreducible transition matrix are positive recurrent, standard results (see, e.g., \cite[Thm.\ 54]{Serfozo:basics}) yield that there is a unique positive stationary distribution
$\pi : Q \longrightarrow [0,1]$
(s.t., for all $i \in Q$, we have $\pi(i) > 0$ 
and $\pi(i) = \sum_{j \in Q} \pi(j)\mathbf{P}_{ij}$). Furthermore,
 the expected return time $M_i$
to state $i$ satisfies $M_i = 1/\pi(i)$.



Let $(X_n)_{n \geq 0} = (X_0,X_1,X_2,\ldots)$ be the Markov
chain with transition matrix $\mathbf{P}$ and some initial distribution $\lambda$ on the states.
Consider, for each $i \in Q$, the stochastic variable
$V_i$, where
$$
V_i(n) = \sum_{k=0}^{n-1} 1_{X_k = i}, 
$$
\noindent that is, $V_i(n)$ is the number of times state $i$ is visited in the first $n$ elements of the Markov chain. 
As $\mathbf{P}$ is irreducible, the Ergodic Theorem
for  Markov chains (see, e.g., \cite[Thm.\ 75]{Serfozo:basics}) yields that, independently of $\lambda$, we have for arbitrary $\epsilon > 0$:
\begin{equation}\label{eq:YouSeeBigGirl}
\lim_{n\rightarrow\infty} \textrm{Pr}\left( \left\vert \frac{V_i(n)}{n} - \pi(i) \right\vert \geq \epsilon \right) =
\lim_{n\rightarrow\infty} \textrm{Pr}\left( \left\vert \frac{V_i(n)}{n} - \frac{1}{M_i} \right\vert \geq \epsilon \right) = 0
\end{equation}

Let $n$ be a positive integer, let $w = w_1 \cdots w_n \in \Sigma^n$,
and let $\mathbf{q}_{S_j}(w) = q^w_0 q^w_1 \cdots q^w_{n}$ be the
sequence of states visited in the run of $S_j$ on $w$  (i.e., $q^w_0 = j$). The probability of observing a state
sequence $q_0 \cdots q_n$ in the Markov chain is (when the initial distribution $\lambda$ has $\lambda(q_0) = \lambda(j) = 1$):
$$
\textrm{Pr}(q_0 \cdots q_n) = \prod_{i=0}^{n-1} \sum_{a \in \Sigma} p(a) 1_{\delta(q_i,a) = q_{i+1}} = \sum_{a_1,\ldots,a_n \in \Sigma} p(a_1)1_{\delta(q_0,a_1) = q_1} \cdots p(a_n)1_{\delta(q_{n-1},a_{n}) = q_n} 
$$
\noindent where we have used the fact that the Cauchy product of two absolutely convergent series  is convergent. 

As for all integers $i$ with  $0 \leq i \leq n$ we have $\delta(q^w_{i-1},w_i) = q^w_i$, we obtain:
\begin{equation*}
\sum_{a_1,\ldots,a_n \in \Sigma} p(a_1)1_{\delta(q_0,a_1) = q_1} \cdots p(a_n)1_{\delta(q_{n-1},a_{n}) = q_n} =
\mu_p(\{a_1 \cdots a_n : \mathbf{q}_{S_j}(a_1 \cdots a_n) = q_0 \cdots q_n\})
\end{equation*}
and hence
\begin{equation}
\textrm{Pr}(q_0 \cdots q_n) = \mu_p(\{w : \mathbf{q}_{S_j}(w) = q_0 \cdots q_n \})
\end{equation}
Thus, as $S$ is deterministic and every $w_1 \cdots w_n \in \Sigma^n$ occurs along exactly one path of states in $S$, we have:
\begin{align}
\textrm{Pr}\left( \left\vert \frac{V_i(n)}{n} - \pi(i)\right\vert \geq \epsilon\right) &= 
\sum_{q_0 q_1 \cdots q_n \in Q^n} \textrm{Pr}( q_0 \cdots q_{n}) 1_{\vert V_i(n)/n - \pi(i) \vert \geq \epsilon} \nonumber \\
&= \sum_{q_0 q_1 \cdots q_n \in Q^n}  \mu_p(\{w_1 \cdots w_n : \mathbf{q}_{S_j}(w_1 \cdots w_n) = q_0 \cdots q_n\}) \nonumber \\
&= \mu_p\left( w :  \left\vert \frac{V_i(n)}{n} - \pi(i)\right\vert \geq \epsilon  \right) \label{myHovercraftIsFullOfEels}
\end{align}

Hence, by Equations \ref{eq:YouSeeBigGirl} and \ref{myHovercraftIsFullOfEels}, we have
\begin{equation}\label{eq:limes_final}
\lim_{n\rightarrow \infty} \mu_p\left( w : \left\vert \frac{V_i(n)}{n} - \pi(i)\right\vert \geq \epsilon  \right) = 0
\end{equation}
If $\mathbf{q}_{S_j}(w) = q_0 \cdots q_{n}$ and $q_k \in Q_F$ for some $k$ with $0 \leq k \leq n-1$, then $S_j$ selects $w_{k+1}$. Set
$c = \min_{q_i \in Q_F} \pi(i)$ ($c$ is well-defined as $Q_F \neq \emptyset$), and let $i \in Q_F$ be such that $\pi(i) = c$. Then, for all
$j \in Q$:
\begin{align*}
 \mu_p(E_n(c-\epsilon,j) &=  \mu_p\left( \{w \in \Sigma^n : \vert S_j[w] \vert \leq (c - \epsilon) n \}\right)  
\leq \mu_p\left( \{w \in \Sigma^n : V_i(n) \leq (c - \epsilon) n \}\right) \\
&= \mu_p\left( \left\{w \in \Sigma^n : \frac{V_i(n)}{n} - c \leq - \epsilon \right\}\right) =  \mu_p\left( \left\{w \in \Sigma^n : \left\vert \frac{V_i(n)}{n} - c \right\vert \geq \epsilon \right\}\right)
\end{align*}
And hence, by Equation \ref{eq:limes_final}, we have $\lim_{n\rightarrow\infty} \mu_p(E_n(c-\epsilon,j)) = 0$, and as $j \in Q$ was arbitrary, we obtain

\begin{align*}
\lim_{n\rightarrow\infty} \mu_p(E_n(c-\epsilon)) &= \lim_{n\rightarrow\infty} \mu_p(\cup_{j \in Q}  \mu_p(E_n(c-\epsilon,j))  \leq
\lim_{n \rightarrow \infty} \sum_{j \in Q} \mu_p(E_n(c - \epsilon),j) \\
&= \sum_{j \in Q} \lim_{n\rightarrow\infty}  \mu_p(E_n(c - \epsilon),j) = 0
\end{align*}
\noindent as desired.
\end{proof}

\begin{lemma}\label{main:lemma2}
Let $S$ be a strategy, $a \in \Sigma$, $b, \epsilon$ be real numbers with
$0 < b \leq 1$ and $\epsilon > 0$, and $p : \Sigma \longrightarrow [0,1]$ be a positive Bernoulli distribution. Define, for all positive integers $n$:
\begin{align*}
H_{n}(b,\epsilon) &=\left\{\word{w}\in\Sigma^{n} : \vert S(\word{w})\vert >bn \land \left\vert p(a) - \frac{\countones{a}{S(\word{w})}}{\vert S(\word{w}) \vert} \right\vert \geq \epsilon\right\} \\
&= \bigcup_{bn < \ell \leq n} \left\{\word{w} \in \Sigma^n : S(\word{w}) \in \Sigma^\ell \land  \left\vert p(a) - \frac{\countones{a}{S(\word{w})}}{\ell} \right\vert \geq \epsilon \right\}
\end{align*}
Then:
$$
\lim_{n\rightarrow\infty}\mu_{p}(H_{n}(b,\epsilon))=0
$$
\end{lemma}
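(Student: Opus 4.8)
The plan is to realise the count $\countones{a}{S(w)}$ of selected occurrences of $a$ as the terminal value of a bounded-increment martingale and to invoke the Azuma--Hoeffding inequality. I work in the probability space $(\Sigma^{n},\mu_{p})$; because $p$ is Bernoulli, $\mu_{p}$ is a product measure, so the coordinate maps $w\mapsto w_{i}$ are independent, each with law $p$. For $1\leq i\leq n$ put $Y_{i}(w)=1_{w_{1}\cdots w_{i-1}\in S}$ and $Z_{i}(w)=1_{w_{i}=a}$, so that $\vert S(w)\vert=\sum_{i=1}^{n}Y_{i}$ and $\countones{a}{S(w)}=\sum_{i=1}^{n}Y_{i}Z_{i}$. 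Writing $\mathcal{F}_{k}=\sigma(w_{1},\dots,w_{k})$, define the process $M_{k}=\sum_{i=1}^{k}Y_{i}\bigl(Z_{i}-p(a)\bigr)$ with $M_{0}=0$.

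First I would check that $(M_{k})_{0\leq k\leq n}$ is a martingale for $(\mathcal{F}_{k})$ with increments bounded by $1$. The decisive point is that $Y_{i}$ is $\mathcal{F}_{i-1}$-measurable, as it depends only on the prefix $w_{1}\cdots w_{i-1}$, whereas $Z_{i}$ is independent of $\mathcal{F}_{i-1}$ and has mean $p(a)$; hence $\mathbb{E}\bigl[Y_{i}(Z_{i}-p(a))\mid\mathcal{F}_{i-1}\bigr]=Y_{i}\bigl(\mathbb{E}[Z_{i}]-p(a)\bigr)=0$, and $\vert M_{k}-M_{k-1}\vert=\vert Y_{k}(Z_{k}-p(a))\vert\leq 1$. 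This independence of the selection indicator $Y_{i}$ from the symbol $Z_{i}$ currently read --- a consequence of $S$ inspecting only $w_{1}\cdots w_{i-1}$ together with the Bernoulli (product) structure of $\mu_{p}$ --- is exactly what the argument turns on: it makes the selected symbols behave like an i.i.d.\ sample from $p$ even though the \emph{set} of selected positions is itself random, and it is the only place where the product hypothesis (as opposed to mere invariance) is used.

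Next I would translate membership in $H_{n}(b,\epsilon)$ into a large deviation of $M_{n}$. Since $M_{n}=\countones{a}{S(w)}-p(a)\,\vert S(w)\vert$, on every $w$ with $\vert S(w)\vert>bn$ we have $\vert M_{n}\vert=\vert S(w)\vert\cdot\bigl\vert \countones{a}{S(w)}/\vert S(w)\vert-p(a)\bigr\vert$, so if in addition $\vert \countones{a}{S(w)}/\vert S(w)\vert-p(a)\vert\geq\epsilon$ then $\vert M_{n}\vert\geq\vert S(w)\vert\epsilon> bn\epsilon$. Hence $H_{n}(b,\epsilon)\subseteq\{\,\vert M_{n}\vert\geq bn\epsilon\,\}$, and Azuma--Hoeffding, with every increment bounded by $1$, yields $\mu_{p}(H_{n}(b,\epsilon))\leq\textrm{Pr}(\vert M_{n}\vert\geq bn\epsilon)\leq 2\exp(-b^{2}\epsilon^{2}n/2)$, which tends to $0$ as $n\to\infty$.

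I expect the main obstacle to be conceptual rather than computational: cleanly isolating the independence of $Y_{i}$ from $Z_{i}$ and the resulting martingale structure, so that the randomness of the selected length $\vert S(w)\vert$ is absorbed into the single sum $M_{n}$ and need not be conditioned on separately. The infinite-alphabet case requires no extra effort, since $M_{n}$ is real-valued with increments bounded by $1$ regardless of $\vert\Sigma\vert$; moreover $\mu_{p}(H_{n}(b,\epsilon))=\sum_{w\in H_{n}(b,\epsilon)}\mu_{p}(w)$ is a subseries of $\sum_{w\in\Sigma^{n}}\mu_{p}(w)=1$ with non-negative terms, hence absolutely convergent and equal to the product-measure probability of the disjoint union $\bigcup_{w\in H_{n}(b,\epsilon)}\cylinder{w}$, so the concentration bound applies verbatim.
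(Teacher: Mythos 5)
Your proof is correct, and it takes a genuinely different route from the paper's. The paper first replaces $H_n(b,\epsilon)$ by the set $F_n(b,\epsilon)=\bigcup_{bn<\ell\leq n}\left\{y\in\Sigma^\ell : \left\vert p(a)-\#_{a}(y)/\ell\right\vert\geq\epsilon\right\}$ of deviating strings of fixed lengths, asserting $\mu_p(H_n(b,\epsilon))\leq\mu_p(F_n(b,\epsilon))$; it then bounds $\mu_p(F_n(b,\epsilon))$ by transporting each $\Sigma^\ell$ to $\{0,1\}^\ell$ via the homomorphism collapsing $\Sigma\setminus\{a\}$ to $0$, applying the Chernoff bound for $\ell$ i.i.d.\ Bernoulli trials, and summing over $\ell\in(bn,n]$ to obtain a bound of the form $(1-b)n\cdot 2e^{-bn\epsilon^2/(3p(a))}$. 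The domination step $\mu_p(H_n)\leq\mu_p(F_n)$ is exactly where the paper implicitly uses the selection-rule property: because membership of position $i$ in the selected set depends only on $w_1\cdots w_{i-1}$, the law of the selected string under the product measure is dominated by $\mu_p$ --- a fact that itself requires a stopping-time/prefix argument which the paper does not spell out. Your martingale $M_k=\sum_{i\leq k}Y_i\bigl(Z_i-p(a)\bigr)$ absorbs that very fact into the one-line verification $\mathbb{E}\bigl[Y_i(Z_i-p(a))\mid\mathcal{F}_{i-1}\bigr]=0$, and Azuma--Hoeffding then handles the random selected length and the random set of selected positions in a single stroke, with no case split over $\ell$, no reduction to a binary alphabet, and the cleaner bound $2e^{-b^2\epsilon^2 n/2}$, which is moreover uniform in $p(a)$. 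What the paper's route buys is an isolated statement about plain i.i.d.\ deviations ($F_n$), structurally closer to Agafonov's original argument and reusable as such; what yours buys is self-containedness and an explicit identification of the only point where the Bernoulli (product) hypothesis enters --- if anything, your argument closes a gap the paper leaves open rather than opening one.
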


\begin{proof}
Define
$$
F_n(b,\epsilon) = \bigcup_{bn < \ell \leq n} \left\{ \word{y} \in \Sigma^{\ell} : \left\vert p(a) - \frac{\countones{a}{\word{y}}}{\ell} \right\vert \geq \epsilon \right\}
$$
Observe that 
$H_n(b,\epsilon) = \left\{\word{w} \in \Sigma^n : S(\word{w}) \in F_n(b,\epsilon) \right\}$. 
Thus, $\mu_p(H_n(b,\epsilon)) \leq \mu_p (F_n(b,\epsilon))$ for all $n$, and it thus suffices to prove that
$\lim_{n\rightarrow\infty} \mu_p(F_n(b,\epsilon)) = 0$.

Consider the stochastic variable $X_a$
that is $1$ when $a$ is picked from $\Sigma$
with probability $p(a)$, and $0$ otherwise. Then, the mean of $X_a$ is $p(a)$ and
the variance of $X_a$ is $p(a)(1-p(a))$. Now consider
performing $\ell \geq 1$ independent Bernoulli trials
drawn according to $X_a$. Define $q : \{0,1\}^+ \longrightarrow [0,1]$ inductively by $q(1) = p(a)$, $q(0) = 1-p(a)$, and
$q(1c) = p(a)q(c)$ and $q(0c) = (1-p(a))q(c)$
for $c \in \Sigma^+$, and observe that $q$ induces a probability 
distribution $\bar{q}$ on $\Sigma^\ell$ by setting $\bar{q}(w) = q(w)$. Now, for any 
$\word{v} \in \Sigma^\ell$, $\bar{q}(\word{v})$ is the probability of obtaining $\word{v}$ by performing $\ell$ independent Bernoulli trials as above.

Define the stochastic
variable $X^{\ell}_a = X_a + X_a + \cdots +  X_a$ ($\ell$ times). Then, $X^\ell$ counts the number of occurrences of $a$ by performing 
 the $\ell$ repeated Bernoulli trials.

By the Chernoff bound, $X^{\ell}_a$ satisfies:
\begin{equation}\label{main:eq:Chernoff}
\textrm{Pr}\left(\left\vert p(a) -  \frac{X^{\ell}_a}{\ell} \right\vert \geq \epsilon \right) \leq 2 e^{-\frac{\ell \epsilon^2}{3p(a)}}
\end{equation}

Define the map $g : \Sigma \longrightarrow \{0,1\}$ by 
$g(a) = 1$ and $g(b) = 0$ for all $b \in \Sigma \setminus \{a\}$. Clearly, $g$ 
extends homomorphically to a map $\tilde{g} : \Sigma^\ell \longrightarrow \{0,1\}^\ell$ by setting
$\tilde{g}(c_1 c_2 \cdots c_\ell) = g(c_1)g(c_2) \cdots g(c_\ell)$. 


\noindent \textbf{Claim:} For any
$\word{u} \in \{0,1\}^\ell$, 
\begin{equation}\label{eq:trivial-ish}
\bar{q}(\word{u}) = \mu_p(\{\word{y} \in \Sigma^\ell : \tilde{g}(\word{y}) = \word{u}\})
\end{equation}
\textbf{Proof of claim:}
By induction on $\ell$. 
\begin{itemize}
\item If $\ell = 1$, then if $\word{u} = 0$, we have
$\{\word{y} \in \Sigma^\ell : \tilde{g}(\word{y}) = \word{u}\} = \Sigma \setminus \{a\}$ and thus:
$$
\bar{q}(\word{u}) = \bar{q}(0) = q(0) =  1 - p(a) = \sum_{b \in \Sigma \setminus \{a\}} p(b) = \mu_p(\Sigma \setminus \{a\})
$$
Similarly, if $\word{u} = 1$, we have $\{\word{y} \in \Sigma^\ell : \tilde{g}(\word{y}) = \word{u}\} = \{a\}$,
and thus $\bar{q}(\word{u}) = \bar{q}(1) = q(1) = p(a) = \mu_p(\{a\})$, as desired.

\item If $\ell > 1$, write $\word{u} = b_1 \cdots b_{\ell - 1} b_\ell$; by the induction hypothesis:
$$
\bar{q}(b_1 \cdots b_{\ell - 1}) = \mu_p(\{\word{y} \in \Sigma^\ell : \tilde{g}(\word{y}') = b_1 \cdots b_{\ell - 1}\}) = \sum_{\substack{\word{y}' \in \Sigma^{\ell - 1} \\ \tilde{g}(\word{y}') = b_1 \cdots b_{\ell-1}}} \mu_p(\word{y}')
$$
If $b_\ell = 0$, then:
\begin{align*}
\bar{q}(b_1 \cdots b_{\ell - 1} b_\ell) &= \bar{q}(b_1 \cdots b_{\ell - 1})q(0) = \bar{q}(b_1 \cdots b_{\ell - 1})(1-p(a))
=  \sum_{\substack{\word{y}' \in \Sigma^{\ell - 1} \\ \tilde{g}(\word{y}')
 = b_1 \cdots b_{\ell-1}}} \mu_p(\word{y}') (1 - p(a)) \\ 
&= \sum_{\substack{\word{y}' \in \Sigma^{\ell - 1} \\ \tilde{g}(\word{y}') = b_1 \cdots b_{\ell-1}}} \left(\mu_p(\word{y}') \sum_{c \in \Sigma\setminus\{a\}} p(c) \right) = \sum_{\substack{\word{y}' \in \Sigma^{\ell - 1} \\ \tilde{g}(\word{y}') = b_1 \cdots b_{\ell-1}}} \sum_{c \in \Sigma\setminus\{a\}} \mu_p(\word{y}')p(c) \\
&\stackrel{(\dagger)}{=} \sum_{\substack{\word{y}' \in \Sigma^{\ell - 1} \\ \tilde{g}(\word{y}') = b_1 \cdots b_{\ell-1}\\ c \in \Sigma \setminus \{a\}}} \mu_p(\word{y}'c)
= \sum_{\substack{\word{y} \in \Sigma^{\ell} \\ \tilde{g}(\word{y}) = b_1 \cdots b_{\ell-1} b_\ell}} \mu_p(\word{y})\\
&= \mu_p(\{\word{y} \in \Sigma^\ell : \tilde{g}(\word{y}) = b_1 \cdots b_{\ell - 1} b_\ell\}) 
\end{align*}
\noindent where ($\dagger$) follows as both series on the left- and right-hand sides of the equality are absolutely convergent. 
The proof for the case $b_\ell = 1$ is symmetric, mutatis mutandis.
\end{itemize}
\noindent (End of proof of claim.)

Observe that, for any
$\word{y} \in \Sigma^\ell$, we have: 
\begin{equation}
\vert p(a) - \countones{1}{\word{\tilde{g}(\word{y})}}/\ell \vert \geq \epsilon \quad \textrm{ iff } \quad \vert p(a) - \countones{a}{\word{y}}/\ell \vert \geq \epsilon \label{eq:uneq_retract}
\end{equation}
Hence, by Equation (\ref{eq:trivial-ish}), for any event $\mathcal{U} \subseteq \{0,1\}^\ell$, we have: 
\begin{align}
\textrm{Pr}(\mathcal{U}) &= \sum_{\word{u} \in \mathcal{U}} \bar{q}(\word{u})=
\sum_{\word{u} \in \mathcal{U}} \mu_p(\{\word{y} \in \Sigma^\ell : \tilde{g}(\word{y}) = \word{u}\})) \nonumber\\
&= \mu_p\left( \left\{ \word{y} \in \Sigma^\ell :
\tilde{g}(\word{y}) \in \mathcal{U} \right\}\right)
\label{eq:move_to_y}
\end{align}

The event $\vert p(a) - X^{\ell}_a/\ell  \vert \geq \epsilon$
 is shorthand for the set
 \begin{align*}
  \left\{ \word{u} \in \{0,1\}^\ell : \left\vert p(a) - \frac{\sum_{j=1}^\ell u_j}{\ell} \right\vert \geq \epsilon\right\} &=
\left\{ \word{u} \in \{0,1\}^\ell : \left\vert p(a) - \frac{\countones{1}{\word{u}}}{\ell} \right\vert \geq \epsilon\right\} 
\end{align*}

We thus obtain:
\begin{align}
\textrm{Pr}\left( \left\vert p(a) - \frac{X^{\ell}_a}{\ell}  \right\vert \geq \epsilon \right) &= \textrm{Pr}\left( \left\{ \word{u} \in \{0,1\}^\ell : \left\vert p(a) - \frac{\countones{1}{\word{u}}}{\ell} \right\vert \geq \epsilon\right\} \right) \nonumber\\
&= \mu_p \left( \left\{ \word{y} \in \Sigma^\ell : \left\vert p(a) - \frac{\countones{1}{\word{\tilde{g}(\word{y})}}}{\ell}  \right\vert \geq \epsilon \right\}  \right) &\textrm{by } (\ref{eq:move_to_y}) \nonumber \\
&= \mu_p\left( \left\{ \word{y} \in \Sigma^\ell : \left\vert p(a) - \frac{\countones{a}{\word{y}}}{\ell}  \right\vert \geq \epsilon \right\} \right)  &\textrm{by } (\ref{eq:uneq_retract}) \label{it_foo}
\end{align}

Observe that: 
\begin{align*}
\mu_p\left(F_n(b,\epsilon) \right) &= \mu_p \left(\bigcup_{bn < \ell \leq n} \left\{ \word{y} \in \Sigma^{\ell} \cap F_n(b,\epsilon) : \left\vert p(a) - \frac{\countones{a}{\word{y}}}{\ell} \right\vert \geq \epsilon \right\} \right) \nonumber \\
&= \sum_{bn < \ell \leq n} \mu_p\left(  \left\{ \word{y} \in \Sigma^{\ell} \cap F_n(b,\epsilon) : \left\vert p(a) - \frac{\countones{a}{\word{y}}}{\ell} \right\vert \geq \epsilon \right\}  \right) \label{eq:stop_having_fun} \\
&\leq \sum_{bn < \ell \leq n}\mu_p\left( \left\{ \word{y} \in \Sigma^\ell : \left\vert p(a) - \frac{\countones{a}{\word{y}}}{\ell}  \right\vert \geq \epsilon \right\} \right) \\
&= \sum_{bn < \ell \leq n} \textrm{Pr}\left( \left\vert p(a) - \frac{X^{\ell}_a}{\ell}  \right\vert \geq \epsilon \right) &\textrm{by } \ref{it_foo}\\
&\leq  \sum_{bn < \ell \leq n}  2 e^{-\frac{\ell \epsilon^2}{3p(a)}} &\textrm{by } \ref{main:eq:Chernoff} \\
&\leq (1-b)n  2 e^{-\frac{bn \epsilon^2}{3p(a)}}
\end{align*}
And thus $\lim_{n\rightarrow\infty} \mu_p(F_n(b,\epsilon)) = 0$, as desired.

\end{proof}

\begin{corollary}\label{cor:G_tends}
Let $b,\epsilon$ be real numbers
with $0 < b \leq 1$ and $\epsilon > 0$.
Then,
$$
\lim_{n\rightarrow\infty}\mu_{p}(G_{n}(b,\epsilon))=0
$$
\end{corollary}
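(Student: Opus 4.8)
The plan is to reduce the statement to \Cref{main:lemma2} by a union bound, the only delicate point being that for infinite $\Sigma$ the union ranges over infinitely many symbols. First, since $Q$ is finite and $G_n(b,\epsilon) = \bigcup_{q \in Q} G_n(b,\epsilon,q)$, subadditivity gives $\mu_{p}(G_n(b,\epsilon)) \leq \sum_{q \in Q} \mu_{p}(G_n(b,\epsilon,q))$, so it suffices to prove $\lim_{n\to\infty}\mu_{p}(G_n(b,\epsilon,q)) = 0$ for each fixed $q \in Q$. Next I would observe that, unwinding the definition, $\sup_{a \in \Sigma}\vert \countones{a}{A_q[\word{w}]}/\vert A_q[\word{w}]\vert - p(a)\vert \geq \epsilon$ holds precisely when \emph{some} symbol $a$ witnesses the deviation; hence $G_n(b,\epsilon,q) = \bigcup_{a \in \Sigma} H_n^a(b,\epsilon)$, where $H_n^a(b,\epsilon)$ is exactly the set $H_n(b,\epsilon)$ of \Cref{main:lemma2} instantiated with the strategy $S = A_q$ and the symbol $a$.

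Applying subadditivity once more yields $\mu_{p}(G_n(b,\epsilon,q)) \leq \sum_{a \in \Sigma} \mu_{p}(H_n^a(b,\epsilon))$. For a finite alphabet this is a finite sum of terms each tending to $0$ by \Cref{main:lemma2}, and we are done; for a countably infinite alphabet the bare conclusion of \Cref{main:lemma2} (each summand tends to $0$) is \emph{not} enough, and I would instead use the explicit estimate established in its proof, namely $\mu_{p}(H_n^a(b,\epsilon)) \leq 2(1-b)n\, e^{-bn\epsilon^2/(3p(a))}$.

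The summation over $a$ is then controlled using the elementary inequality $e^{-x} \leq 2/x^2$ valid for $x > 0$. Taking $x = bn\epsilon^2/(3p(a))$ gives $e^{-bn\epsilon^2/(3p(a))} \leq 18\,p(a)^2/(b^2 n^2 \epsilon^4) \leq 18\,p(a)/(b^2 n^2 \epsilon^4)$, where the last step uses $p(a)^2 \leq p(a)$. Summing against $\sum_{a \in \Sigma} p(a) = 1$ (an absolutely convergent series of non-negative terms, so the order of summation is immaterial) gives $\mu_{p}(G_n(b,\epsilon,q)) \leq 2(1-b)n \cdot 18/(b^2 n^2 \epsilon^4) = 36(1-b)/(b^2 n \epsilon^4)$, which tends to $0$ as $n \to \infty$. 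Combined with the reduction over the finitely many states $q \in Q$, this proves $\lim_{n\to\infty}\mu_{p}(G_n(b,\epsilon)) = 0$.

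The main obstacle is exactly this infinite union over the alphabet: a union of countably many sets whose measures individually vanish need not have vanishing measure, so the qualitative content of \Cref{main:lemma2} does not transfer directly. The resolution rests on the fact that the per-symbol Chernoff estimate decays super-polynomially in $1/p(a)$, so that for each fixed $n$ it is dominated by a constant multiple of $p(a)$; this makes the series summable against the probability distribution $p$ and, crucially, leaves an overall factor of $1/n$ that drives the whole bound to $0$.
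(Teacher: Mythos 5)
Your proof is correct, and it takes a genuinely more careful route than the paper's. The paper's proof consists of exactly your first reduction and nothing more: it invokes \Cref{main:lemma2} with $S = A_q$ to conclude $\lim_{n\to\infty}\mu_p(G_n(b,\epsilon,q)) = 0$ for each of the finitely many states $q$, and then applies the union bound over $Q$. This silently identifies the per-symbol set $H_n(b,\epsilon)$ of \Cref{main:lemma2} (in which $a$ is fixed) with the set $G_n(b,\epsilon,q)$ (in which a supremum over $a \in \Sigma$ is taken); that identification is harmless for finite $\Sigma$, where $G_n(b,\epsilon,q)$ is a finite union of the sets $H_n^a(b,\epsilon)$, but for countably infinite $\Sigma$ --- which the paper explicitly claims to cover --- the qualitative conclusion of \Cref{main:lemma2} does not transfer, for precisely the reason you state. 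Your quantitative repair is sound: the Chernoff estimate inside the proof of \Cref{main:lemma2} gives $\mu_p(H_n^a(b,\epsilon)) \leq 2(1-b)n\,e^{-bn\epsilon^2/(3p(a))}$, the elementary bound $e^{-x}\leq 2/x^2$ turns this into $36(1-b)p(a)/(b^2 n\epsilon^4)$ (using $p(a)^2\leq p(a)$), and summing against $\sum_{a\in\Sigma}p(a)=1$ yields a bound of order $1/n$ uniform in the alphabet; I checked the constants and they are right. What your approach buys is a proof valid in the infinite-alphabet setting (i.e., it fills a genuine gap in the paper's own argument) together with an explicit convergence rate. Two minor points. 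First, your assertion that $G_n(b,\epsilon,q)=\bigcup_{a\in\Sigma}H_n^a(b,\epsilon)$, i.e.\ that the supremum is attained, deserves a sentence: for a fixed $\word{w}$ only finitely many symbols occur in $A_q[\word{w}]$, and for every other symbol the deviation equals $p(a)$, which tends to $0$ along any enumeration of $\Sigma$ since $\sum_a p(a)$ converges, so the supremum is in fact a maximum; alternatively, the inclusion $G_n(b,\epsilon,q)\subseteq\bigcup_{a\in\Sigma}H_n^a(b,\epsilon/2)$ sidesteps attainment entirely at the cost of a factor of $2$ in $\epsilon$. Second, since you rely on an estimate extracted from the \emph{proof} of \Cref{main:lemma2} rather than from its statement, the cleanest packaging would be to restate that lemma with the explicit exponential bound (or with the supremum over $a$ built into the definition of $H_n$), after which the corollary follows exactly as you argue.
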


\begin{proof}
By Lemma \ref{main:lemma2} with
$S$ the strategy defined by the automaton $A_q$, we obtain that:
$$
\lim_{n\rightarrow\infty} \mu_p(G_n(b,\epsilon,q)) = 0
$$
and as $G_n(b,\epsilon) = \bigcup_{q\in Q}G_{n}(b,\epsilon,q)$,
we have:
$$
\mu_p(G_n(b,\epsilon)) \leq \sum_{q\in Q}\mu_p(G_n(b,\epsilon,q))
$$
As $Q$ is finite, we hence obtain
$\lim_{n \rightarrow\infty} \mu_p(G_n(b,\epsilon)) = 0$.
\end{proof}


\begin{lemma}\label{lem:mainclaim}
There is a real number $b$ with $0 < b \leq 1$ such that for all $\epsilon > 0$:
\[ \lim_{n\rightarrow\infty} \mu_{p}(D_{n}^{p}(b,\epsilon)) = 1. \]
\end{lemma}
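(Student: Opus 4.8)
The plan is to read off the result directly from the covering $\Sigma^n = E_n(b) \cup D_n^p(b,\epsilon) \cup G_n(b,\epsilon)$ noted just after the definition of these sets, together with the two limits already in hand: \Cref{main:lemma1}, which controls $E_n$, and \Cref{cor:G_tends}, which controls $G_n$. Conceptually, $D_n^p(b,\epsilon)$ is the ``good'' event — words $w$ from which, for \emph{every} start state $q$, the selector $A_q$ picks out more than $bn$ symbols whose empirical single-symbol frequencies are all within $\epsilon$ of $p$ — and it fails to cover $\Sigma^n$ only on the union of the two ``bad'' events $E_n(b)$ (too few symbols selected from some state) and $G_n(b,\epsilon)$ (empirical frequencies off by $\geq \epsilon$ from some state). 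Since both bad events have $\mu_p$-measure tending to $0$, the good event must have measure tending to $1$.

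First I would fix the constant $b$. By \Cref{main:lemma1} there is a $c > 0$ with $\lim_{n\to\infty}\mu_p(E_n(c-\epsilon')) = 0$ for every $\epsilon' > 0$. I would fix any $b$ with $0 < b < c$ and $b \leq 1$ — for instance $b = \min(c,1)/2$ — and then, writing $b = c - \epsilon'$ with $\epsilon' = c - b > 0$ and invoking \Cref{main:lemma1} with this $\epsilon'$, obtain $\lim_{n\to\infty}\mu_p(E_n(b)) = 0$. For the same $b$ and every $\epsilon > 0$, \Cref{cor:G_tends} directly gives $\lim_{n\to\infty}\mu_p(G_n(b,\epsilon)) = 0$. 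Note that this $b$ is chosen once and works uniformly for all $\epsilon$, as required by the statement.

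Next I would assemble the bound. Because $\mu_p$ is finitely subadditive on subsets of $\Sigma^n$ (it is a sum of non-negative terms) and the three sets cover $\Sigma^n$ while $\mu_p(\Sigma^n) = 1$, we get
\[
1 = \mu_p(\Sigma^n) \leq \mu_p(E_n(b)) + \mu_p(D_n^p(b,\epsilon)) + \mu_p(G_n(b,\epsilon)),
\]
hence $\mu_p(D_n^p(b,\epsilon)) \geq 1 - \mu_p(E_n(b)) - \mu_p(G_n(b,\epsilon))$. Combining this with the trivial upper bound $\mu_p(D_n^p(b,\epsilon)) \leq \mu_p(\Sigma^n) = 1$ (monotonicity of $\mu_p$, since $D_n^p(b,\epsilon) \subseteq \Sigma^n$) and letting $n \to \infty$, the two limits of the previous paragraph squeeze $\mu_p(D_n^p(b,\epsilon))$ to $1$.

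I do not expect a genuine obstacle here: all the analytic difficulty — the ergodic/Markov-chain argument behind \Cref{main:lemma1} and the Chernoff-bound argument behind \Cref{cor:G_tends} — has already been discharged, and what remains is bookkeeping. The only points needing a little care are verifying the covering $\Sigma^n = E_n(b) \cup D_n^p(b,\epsilon) \cup G_n(b,\epsilon)$ (a per-state case split: for each $q$, either $w \in E_n(b,q)$, or $|A_q[w]| > bn$ and the supremum is $< \epsilon$, or it is $\geq \epsilon$; the intersection-over-$q$ definition of $D_n^p$ then lines up exactly against the unions defining $E_n$ and $G_n$), and ensuring $b$ is chosen strictly below the threshold $c$ and in $(0,1]$ so that it is admissible and independent of $\epsilon$.
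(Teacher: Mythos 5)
Your proposal is correct and takes essentially the same approach as the paper: decompose $\Sigma^n$ as $E_n(b) \cup D_n^p(b,\epsilon) \cup G_n(b,\epsilon)$, kill the two bad sets with \Cref{main:lemma1} and \Cref{cor:G_tends}, and conclude by subadditivity of $\mu_p$. If anything, your fixing of $b = \min(c,1)/2$ once and for all is slightly cleaner than the paper's own choice $b = c-\epsilon$, which makes $b$ depend on $\epsilon$ and so does not literally match the $\exists b\,\forall\epsilon$ quantifier order of the statement (though the paper's argument is easily repaired exactly as you do).
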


\begin{proof}
Observe that, for all $b$ with $0 < b \leq 1$:
\begin{align*}
\Sigma^n \setminus D_{n}^{p}(b,\epsilon)) &= \left\{\word{w} \in \Sigma^n :
\exists q \in Q . \vert A_q[\word{w}] \vert \leq bn\right\}
\\ 
&\cup \left\{ \word{w} \in \Sigma^n :  \exists q \in Q . \vert
 A_q[\word{w}] \vert > bn \land \sup_{a \in \Sigma} \left\vert \frac{\countones{a}{A_q[\word{w}]}}{\vert A_q[\word{w}] \vert}  - p(a) \right\vert \geq \epsilon \right\} \\
 &= \left( \bigcup_{q \in Q} E_n(b,q) \right) \cup \left(\bigcup_{q \in Q} G_n(b,\epsilon,q) \right)
\end{align*}
and thus: 
\begin{align*}
\mu_p(\Sigma^n \setminus D_n^{p}(b,\epsilon)) &\leq
\mu_p\left(  \bigcup_{q \in Q} E_n(b,q)  \right) + \mu_p\left( \bigcup_{q \in Q} G_n(b,\epsilon,q)  \right)\\
&= \mu_p(G_n(b,\epsilon)) + \mu_p(E_n(b))
\end{align*}
By \autoref{main:lemma1}, choose a real number $c > 0$ such that
$\lim_{n\rightarrow\infty} \mu_{p}(E_{n}(c-\epsilon))=0$,
and set $b = c - \epsilon$.

By \autoref{cor:G_tends}, we obtain that
$\lim_{n\rightarrow \infty} G_n(b,\epsilon) = 0$,
and thus 
$\lim_{n\rightarrow\infty}\mu_p(\Sigma^n \setminus D_n^{p}(b,\epsilon)) = 0$.
The result now follows by $\mu_p(D_n^{p}(b,\epsilon))) = 1 - \mu_p(\Sigma^n \setminus D_n^{p}(b,\epsilon))$.
\end{proof}

\begin{lemma}\label{lem:Agafonov_preserve_arbitrary}
Let $p : \Sigma \longrightarrow [0,1]$ be a probability distribution, 
let $\alpha \in \Sigma^\omega$ be $\mu_p$-block-distributed, and $A$ a strongly connected DFA over $\Sigma$. 
Then,
 for all $a \in \Sigma$, the limiting frequency of $a$ in the sequence $\beta = A[\alpha]$ 
exists and equals $p(a)$.
\end{lemma}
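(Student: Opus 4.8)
The plan is to exploit the $n$-block decomposition of $\alpha$ together with the three preceding lemmas. First I would fix the automaton $A$ and obtain, from \Cref{lem:mainclaim}, a constant $b$ with $0 < b \leq 1$ such that $\lim_{n}\mu_p(D_n^p(b,\epsilon)) = 1$ for every $\epsilon > 0$. If $Q_F = \emptyset$ then $\beta$ is empty and there is nothing to prove, so assume $Q_F \neq \emptyset$. The key structural observation is that when $A$ runs on $\alpha$ it enters the $r$-th block $\alpha_{(n,r)}$ in some state $q_r = \delta^*(q_0,\alpha\vert_{\leq (r-1)n})$, the symbols selected from that block are exactly $A_{q_r}[\alpha_{(n,r)}]$, and $q_{r+1} = \delta^*(q_r,\alpha_{(n,r)})$. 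Hence, writing $m_r = \vert A_{q_r}[\alpha_{(n,r)}]\vert$ and $M_k = \sum_{r=1}^k m_r$, the prefix $A[\alpha\vert_{\leq kn}] = A_{q_1}[\alpha_{(n,1)}]\cdots A_{q_k}[\alpha_{(n,k)}]$ equals $\beta\vert_{\leq M_k}$. Crucially, because $D_n^p(b,\epsilon) = \bigcap_{q\in Q} D_n^p(b,\epsilon,q)$ is an intersection over \emph{all} starting states, a block lying in $D_n^p(b,\epsilon)$ is ``good'' irrespective of the (a priori uncontrolled) entry state $q_r$ of the run.

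Second, I would fix a target accuracy, choose $\epsilon$ small, then $\delta$ small, then $n$ large enough that $\mu_p(\Sigma^n \setminus D_n^p(b,\epsilon)) < \delta$. I then transfer this measure bound into a frequency bound on the block decomposition using $\mu_p$-block-distributedness. Since $\Sigma$, and hence $D_n^p(b,\epsilon)$, may be infinite, I cannot apply block-distributedness to the whole set at once; instead, for each \emph{finite} $D_0 \subseteq D_n^p(b,\epsilon)$ block-distributedness gives that the asymptotic frequency of blocks lying in $D_0$ equals $\sum_{w\in D_0}\mu_p(w)$, and taking the supremum over finite $D_0$ yields $\liminf_{k} k_G/k \geq \mu_p(D_n^p(b,\epsilon)) > 1-\delta$, where $k_G$ counts the good blocks among the first $k$. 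Equivalently, the fraction $k_B/k$ of bad blocks satisfies $\limsup_k k_B/k \leq \delta$.

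Third, I would split the tallies. Good blocks satisfy $m_r > bn$ and $\bigl\vert \countones{a}{A_{q_r}[\alpha_{(n,r)}]}/m_r - p(a)\bigr\vert < \epsilon$, whereas bad blocks only satisfy the trivial bounds $0 \leq \countones{a}{A_{q_r}[\alpha_{(n,r)}]} \leq m_r \leq n$. From $\sum_{r\in G} m_r \geq k_G b n$ and $\sum_{r\in B} m_r \leq k_B n$, the fraction of selected symbols originating from bad blocks is at most $(k_B/k)/\bigl((k_G/k)b\bigr)$, i.e.\ asymptotically at most $\delta/((1-\delta)b)$; in particular $M_k \geq k_G b n \to \infty$, so $\beta$ is indeed infinite. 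Writing $\countones{a}{\beta\vert_{\leq M_k}} = p(a)\sum_{r\in G} m_r + \Theta + \sum_{r\in B}\countones{a}{A_{q_r}[\alpha_{(n,r)}]}$ with $\vert\Theta\vert \leq \epsilon M_k$, dividing by $M_k$, and bounding the good-block mass $\sum_{r\in G} m_r / M_k$ near $1$ and the bad-block contribution by $\delta/((1-\delta)b)$, I would obtain $\bigl\vert \countones{a}{\beta\vert_{\leq M_k}}/M_k - p(a)\bigr\vert \leq \epsilon + 2\delta/((1-\delta)b)$ for all sufficiently large $k$, which the choice of $\epsilon,\delta$ makes arbitrarily small.

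Finally, I would upgrade from block boundaries $M_k$ to arbitrary prefix lengths $N$: since each block contributes at most $n$ selected symbols and $M_k \to \infty$, the ratio $\countones{a}{\beta\vert_{\leq N}}/N$ for $M_k \leq N < M_{k+1}$ differs from $\countones{a}{\beta\vert_{\leq M_k}}/M_k$ by $O(n/M_k) \to 0$, so the limiting frequency of $a$ in $\beta$ exists and equals $p(a)$. The main obstacle I anticipate is the bookkeeping of the third step, namely controlling the rare but potentially long bad blocks and correctly combining the two error sources (the $\epsilon$ from good blocks and the $\delta/b$ from bad blocks); the two conceptual points requiring care are the use of the intersection $\bigcap_{q} D_n^p(b,\epsilon,q)$ to neutralise the unknown entry states, and the finite-approximation (Fatou-type) argument needed to apply $\mu_p$-block-distributedness to an infinite set.
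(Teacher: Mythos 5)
Your proposal is correct and follows essentially the same route as the paper's proof: an $n$-block decomposition, the good/bad block split via $D_n^p(b,\epsilon)$ obtained from \Cref{lem:mainclaim} (whose intersection over all states neutralises the uncontrolled entry states), block-distributedness to make bad blocks rare, and the same error bookkeeping combining the $\epsilon$ from good blocks with the $\delta/b$-type contribution from bad ones. If anything, you are more explicit than the paper on two points it glosses over: the finite-approximation argument needed to apply $\mu_p$-block-distributedness to the possibly infinite set $D_n^p(b,\epsilon)$, and the final passage from block-boundary prefixes $M_k$ to arbitrary prefix lengths $N$.
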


\begin{proof}
For each $n,r$, let $\beta_{(n,r)}$ be the sequence of symbols picked out from
the block $\alpha_{(n,r)}$ when $A$ is applied to $\alpha$; note that
each $\beta_{(n,r)}$ has  length between $0$ and $n$. 

For each positive integer $m$, define:
$$
L_m =\sum_{i=1}^{m}\vert \beta_{(n,i)} \vert
$$ 
And for each $a \in \Sigma$, define $\rho_a^m$ by:
 $$
 \rho_a^m =\frac{\sum_{i=1}^{m} \countones{a}{\beta_{(n,i)}}}{L_m}
 $$ 

To prove the lemma, it suffices to show
 that, for any real number $\epsilon > 0$, then for all sufficiently large
 $m$, we have $\abs{\rho_a^m-p(a)}<\epsilon$.
 
Define:
$$
I_m=\left\{i\leq m : \alpha_{(n,i)}\not\in D_{n}^{p}\left(b,\frac{\epsilon}{2}\right)\right\}
$$
\noindent where  $b < 1$ is a constant to be fixed later in the proof.

And define:
$$
\ell_m =\sum_{i\in I_m}\vert \beta_{(n,i)} \vert
$$

Now, define $\theta_a^m$ by:
$$
\theta_a^m = \frac{\sum_{i \in \{1,\ldots,m\} \setminus I_m} \countones{a}{\beta_{(n,i)}}}{\sum_{i \in \{1,\ldots,m\} \setminus I_m} \vert \word{y}_{(n,i)}\vert} = \frac{\sum_{i \in \{1,\ldots,m\} \setminus I_m} \countones{a}{\beta_{(n,i)}}}{L_m - \ell_m}
$$
That is, $\theta_a^m$ is the frequency of occurrences of $a$ when the blocks $\beta_{(n,i)]}$ picked out from blocks $\alpha_{(i,r)} \in D_{n}^{p}(b,\frac{\epsilon}{2})$ with $i \leq m$ are all concatenated.
Observe that, by definition
of $D^p_n$, we have $\abs{\theta^m_a-p(a)}<\frac{\epsilon}{2}$.

We have:
 \begin{align}
 \rho_a^m - \theta_a^m &= 
 \frac{\sum_{i=1}^{m} \countones{a}{\beta_{(n,i)}}}{L_m} -
 \frac{\sum_{i\in \{1,\ldots,m\} \setminus I_m} \countones{a}{\beta_{(n,i)}}}{L_m - \ell_m} \nonumber\\
&=
 \left(\frac{\sum_{i\in I_m} \countones{a}{\beta_{(n,i)}}}{L_m} + \frac{\sum_{i \in \{1,\ldots,m\} \setminus I_m}
 \countones{a}{\beta_{(n,i)}}}{L_m}\right) - \frac{\sum_{i \in \{1,\ldots,m\} \setminus I_m}\countones{a}{\beta_{(n,i)}}}{L_m-\ell_m}
\nonumber \\
 &\overset{(\dagger)}{=}
 \frac{\sum_{i \in \{1,\ldots,m\} \setminus I_m}
 \countones{a}{\beta_{(n,i)}}}{L_m} - \frac{\sum_{i \in \{1,\ldots,m\} \setminus I_m}\countones{a}{\beta_{(n,i)}}}{L_m-\ell_m}
 +
  \frac{\sum_{i\in I_m} \countones{a}{\beta_{(n,i)}}}{L_m} \nonumber\\ 
  &\leq \frac{\sum_{i\in I_m} \countones{a}{\beta_{(n,i)}}}{L_m}
  \leq \frac{\sum_{i\in I_m} \vert \beta_{(n,i)}\vert}{L_m} = \frac{\ell_m}{L_m}
  \label{final_line_foo}
  \end{align}
\noindent where the penultimate inequalities in the last line above
follows because $L_m \geq L_m - \ell_m$ implies
$\frac{\sum_{i \in \{1,\ldots,m\} \setminus I}
 \countones{a}{\beta_{(n,i)}}}{L} - \frac{\sum_{i \in \{1,\ldots,m\} \setminus I}\countones{a}{\beta_{[n,i]}}}{L-\ell} \leq 0$,
 and the final inequality follows because
 $\sum_{i\in I_m} \countones{a}{\beta_{(n,i)}}\leq \sum_{i\in I}\vert \beta_{(n,i)} \vert = \ell_m$.
 
By basic algebra, we have: 
$$
\frac{\sum_{i \in \{1,\ldots,m\} \setminus I_m}
 \countones{a}{\beta_{(n,i)}}}{L_m} - \frac{\sum_{i \in \{1,\ldots,m\} \setminus I_m}\countones{a}{\beta_{(n,i)}}}{L_m-\ell_m}
 =
 \frac{-\ell_m \sum_{i\in\{1,\ldots,m\} \setminus I}\countones{a}{\beta_{(n,i)}}}{L_m(L_m-\ell_m)}
$$
and as
$$
\sum_{i\in\{1,\ldots,m\} \setminus I_m}\countones{a}{\beta_{(n,i)}} 
\leq \sum_{i\in\{1,\ldots,m\} \setminus I_m} \vert \beta_{(n,i)} \vert =
L_m-\ell_m
$$
we conclude that: 
$$
\frac{-\ell_m \sum_{i\in\{1,\ldots,m\} \setminus I_m}\countones{a}{\beta_{(n,i)}}}{L_m(L_m-\ell_m)} \geq -\frac{\ell_m}{L_m}
$$
and thus by ($\dagger$) above that:
$$
\rho_a^m - \theta_a^m - \frac{\sum_{i\in I_m} \countones{a}{\beta_{(n,i)}}}{L_m} \geq - \frac{\ell_m}{L_m}
$$
\noindent whence $-\ell_m/L_m \leq \rho^m_a - \theta^m_a$,
which combined with (\ref{final_line_foo}) yields
$\vert \rho^m_a - \theta^m_a \vert \leq \ell_m/L_m$.

By \autoref{lem:mainclaim} pick a $b$ such that such that for all $\epsilon > 0$, we have $\lim_{n\rightarrow\infty} \mu_{p}(D_{n}^{p}(b,\epsilon)) = 1$. Choose $\delta>0$ with $\delta<\frac{b\epsilon}{8}$, and pick $n\in\naturalN$ such that $\mu_{p}(D_{n}^{p}(b,\epsilon))>1-\delta$. Now, pick $\gamma<\frac{b\epsilon}{8}$. Because $\alpha$ is $\mu_p$-block-distributed, there exists $M \in\naturalN$ such that for all $k \geq M$ and
all $\seq{B}\subseteq \Sigma^{n}$, the prefix
$\alpha \vert_{\leq kn}$ satisfies:
$$
\left\vert 
\frac{\vert \{ i \leq k : \alpha_{(n,i)} \in \seq{B}\} \vert}{k} - \mu_p(\seq{B})
\right\vert < \gamma
$$
In the particular case $\word{B} = D^p_n(b,\epsilon/2)$, we thus have:
$$
\left\vert \frac{\vert \{ i \leq k : \alpha_{(n,i)} \in D_n^p\left(b,\frac{\epsilon}{2}\right)\} \vert}{k} - \mu_p\left( D^p_n\left(b,\frac{\epsilon}{2}\right) \right) 
\right\vert < \gamma
$$
and thus
{\small
\begin{align*}
1 - \delta  - \frac{\vert \{ i \leq k : \alpha_{(n,i)} \in D_n^p(b, \frac{\epsilon}{2})\} \vert}{k} &\leq 
\mu_p\left( D^p_n\left( b,\frac{\epsilon}{2} \right) \right)
- \frac{\vert \{ i \leq k : \alpha_{(n,i)} \in D_n^p(b,\frac{\epsilon}{2})\} \vert}{k} 
 < \gamma
\end{align*}
}
whence we conclude:
\begin{equation}\label{eq:almost_final}
\left\vert \left\{ i \leq k : \alpha_{(n,i)} \in D_n^p\left( b,\frac{\epsilon}{2}\right)\right\} \right\vert > k(1-\delta - \gamma)
\end{equation}

By definition of $D_{n}^{p}(b,\frac{\epsilon}{2}))$, every
$\alpha_{(n,i)} \in D_{n}^{p}(b,\frac{\epsilon}{2}))$
satisfies $\vert \pickedout{\alpha_{(n,i)}}{A} \vert > bn$,
and we thus have:
\begin{equation}\label{eq:big_l}
L_m = \sum_{i=1}^m \vert \word{y}_{(n,i)} \vert
= \sum_{i=1}^m \vert  \pickedout{\alpha_{(n,i)}}{A} \vert
\geq \left\vert \left\{ i \leq m : \alpha_{(n,i)} \in D_n^p\left( b,\frac{\epsilon}{2}\right)\right\} \right\vert bn > m(1 - \delta - \gamma)bn
\end{equation}
Furthermore, by the definition of $I_m$ and (\ref{eq:almost_final}):
\begin{align*}
\vert I_m \vert &= \left\vert \left\{i\leq m : \alpha_{(n,i)}\not\in D_{n}^{p}\left(b,\frac{\epsilon}{2}\right)\right\}\right\vert = m - \left\vert \left\{ i \leq m : \alpha_{(n,i)} \in D_n^p\left( b,\frac{\epsilon}{2}\right)\right\} \right\vert\\
&< m - m(1-\delta-\gamma) = m(\delta+\gamma)
\end{align*}
But then,
\begin{equation}\label{eq:small_l}
\ell_m = \sum_{i \in I_m} \vert \word{y}_{(i,n)} \vert \leq \vert I_m \vert n < mn(\delta + \gamma)
\end{equation}
\noindent and thus by \ref{eq:big_l} and \ref{eq:small_l}:
$$
\frac{\ell_m}{L_m} < \frac{mn(\delta + \gamma)}{m(1 - \delta -\gamma)bn} = \frac{\delta + \gamma}{b(1 - \delta - \gamma)} < \frac{\frac{b\epsilon}{8} + \frac{b\epsilon}{8}}{b\left( 1- \frac{b\epsilon}{8} - \frac{b\epsilon}{8}\right)}
< \frac{\frac{\epsilon}{4}}{1 - \frac{1}{4}} < \frac{\epsilon}{2}
$$
where we have used that $b\epsilon < 1$ in the penultimate inequality.

We now finally have
$$
\vert \rho_a -p(a) \vert \leq \vert\rho^m_a - \theta^m_a \vert + \vert \theta_a -p(a) \vert < \frac{\ell_m}{L_m} + \frac{\epsilon}{2}
< \frac{\epsilon}{2} + \frac{\epsilon}{2} = \epsilon
$$
\noindent concluding the proof.
\end{proof}

\begin{lemma}\label{lem:final_crucial}
Let $\Sigma$ be an alphabet, $p$ a positive Bernoulli distribution on $\Sigma$, let $\alpha \in \Sigma^\omega$ be $\mu_p$-distributed, and let $A$ be a strongly connected DFA over $\Sigma$.
 Then, $A[\alpha]$ is $\mu_p$-distributed.
\end{lemma}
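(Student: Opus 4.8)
The plan is to obtain \Cref{lem:final_crucial} as a direct composition of the three results immediately preceding it, which already carry all of the analytic weight. The logical chain I would follow is: $\mu_p$-distributedness of $\alpha$ implies $\mu_p$-block-distributedness (\Cref{prop:mu_implies_mu}), which implies that $A[\alpha]$ has the correct single-symbol limiting frequencies (\Cref{lem:Agafonov_preserve_arbitrary}), which in turn implies full $\mu_p$-distributedness of $A[\alpha]$ (\Cref{lem:simply_normal_is_enough}).

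First I would dispose of the degenerate case: if $A$ has no accepting state, it selects nothing, $A[\alpha]$ is empty, and there is nothing to prove; so assume henceforth that $A$ has at least one accepting state. Then I would invoke \Cref{prop:mu_implies_mu} to upgrade $\alpha$ from $\mu_p$-distributed to $\mu_p$-block-distributed, after which \Cref{lem:Agafonov_preserve_arbitrary} applies directly to the strongly connected DFA $A$ and the positive Bernoulli $p$, yielding that for every $a \in \Sigma$ the limiting frequency of $a$ in $\beta = A[\alpha]$ exists and equals $p(a)$.

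Next I would make the final call to \Cref{lem:simply_normal_is_enough}: since the single-symbol statement just established holds for an arbitrary strongly connected DFA (hence for all of them), the second bullet of that lemma is satisfied, and the equivalence delivers the first bullet, which in particular gives that our $A[\alpha]$ is $\mu_p$-distributed.

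The one genuine subtlety -- and the only place I expect to have to argue rather than cite -- is the finite-versus-infinite bookkeeping needed to match the ``if $A[\alpha]$ is infinite'' hypothesis of \Cref{lem:simply_normal_is_enough}. Here I would lean on the internal estimate of \Cref{lem:Agafonov_preserve_arbitrary}, namely $L_m > m(1-\delta-\gamma)bn$ with $b>0$ coming, via \Cref{lem:mainclaim}, from the constant $c = \min_{q_i \in Q_F} \pi(i) > 0$ of \Cref{main:lemma1}; since $Q_F \neq \emptyset$ and the stationary distribution $\pi$ of the irreducible chain induced by $A$ is strictly positive, we have $c > 0$, so $L_m \to \infty$ and $A[\alpha]$ is genuinely infinite. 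With the infinitude confirmed, the composition is complete, and all the hard work -- the Chernoff concentration bounds of \Cref{main:lemma2} and the ergodic theorem for the induced Markov chain in \Cref{main:lemma1} -- has already been absorbed into the cited lemmas.
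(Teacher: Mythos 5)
Your proposal is correct and follows essentially the same route as the paper's own proof: \Cref{prop:mu_implies_mu} upgrades $\alpha$ to $\mu_p$-block-distributedness, \Cref{lem:Agafonov_preserve_arbitrary} supplies the single-symbol limiting frequencies, and \Cref{lem:simply_normal_is_enough} lifts these to full $\mu_p$-distributedness of $A[\alpha]$. Your extra bookkeeping on the degenerate case ($Q_F = \emptyset$) and on the infinitude of $A[\alpha]$ (via $L_m > m(1-\delta-\gamma)bn$ with $b>0$) is sound and in fact slightly more careful than the paper's own three-line proof, which leaves those points implicit.
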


\begin{proof}
By Lemma \ref{lem:simply_normal_is_enough} it suffices to show for every $a \in \Sigma$ and every strongly connected $A$ that the limiting frequency of $a$ in $A[\alpha]$ exists and equals $p(a)$. As $\alpha$ is $\mu_p$-distributed, it follows from Proposition \ref{prop:mu_implies_mu} that it is $\mu_p$-block-distributed, and the result then immediately follows by Lemma \ref{lem:Agafonov_preserve_arbitrary}.
\end{proof}

\section[An application in symbolic dynamics]{An application in symbolic dynamics: characterizing measures where genericity is preserved by DFAs}\label{sec:app_dyn}

We now show an application of the main result to the area of symbolic dynamical systems.
The following section recalls basic facts about symbolic dynamical systems, including establishing the correspondence between probability maps on
$\Sigma^*$ and probability measures on full shifts.

\subsection{Shift spaces and genericity}

We briefly introduce basic notions; full accounts can be found in standard textbooks, e.g.\ \citep{lind_marcus_1995}.

\begin{definition}
Let $\Sigma$ be a non-empty alphabet. The (one-sided) \emph{shift} $s : \Sigma^\omega \longrightarrow \Sigma^\omega$ is the map defined by
$s(a_1 a_2 a_3 \cdots) = a_2 a_3 \cdots$. A \emph{shift space} is a pair $(X,s)$ where $X \subseteq \Sigma^\omega$ is a closed
(in the product topology on $\Sigma^\omega$ when $\Sigma$ is endowed with the discrete topology) subset such that
$s(X) = X$\footnote{For one-sided shifts, some authors require only $s(X) \subseteq X$; we shall not do so here.}, and $s$ is the restriction of the shift to $X$.
\end{definition}

As usual, we consider the $\sigma$-algebra $\mathcal{C}$ on $\Sigma^\omega$ having the set of cylinders $\{\cylinder{\word{w}} : \word{w} \in \Sigma^*\}$ as basis. All measures $\mu$ in the remainder of the paper are understood to be measures on $(\Sigma^\omega, \mathcal{C})$.

The standard example of probability measures on shift spaces is the set of Bernoulli measures \citep{shields:bernoulli}:

\begin{definition}
A probability measure on the shift space $(\Sigma^\omega,s)$ is a probability measure on $\Sigma^\omega$ with the $\sigma$-algebra generated by the cylinder sets $\{\cylinder{\word{v}} : \word{v} \in \Sigma^*\}$.
A probability measure $\bar{\mu}$ on the full shift is a \emph{Bernoulli measure} if there is a probability distribution $p : \Sigma \longrightarrow [0,1]$ such that
the measure of each cylinder satisfies $\bar{\mu}(\cylinder{\word{a_1 \cdots a_n}}) = \prod_{i=1}^n p(a_i)$. In this case, we say that $\bar{\mu}$ is \emph{induced} by $p$.
\end{definition}

\begin{definition}
Let $(X,s)$ be a shift space. A probability measure $\bar{\mu}$ on $X$ is said to be \emph{shift invariant} if $\bar{\mu}(s^{-1}(A)) = \bar{\mu}(A)$ for all
$A \subseteq X$. A finite word $\word{w} \in \Sigma^k$ is said to be \emph{admissible} for $\mu$ if 
$\bar{\mu}(\cylinder{\word{w}}) > 0$.

A right-infinite sequence $\alpha \in \Sigma^\omega$ is said to be \emph{generic} for $\bar{\mu}$ if,
for all words $\word{w}$ admissible for $\bar{\mu}$, we have: 
$$
\lim_{n\rightarrow \infty} \frac{\countones{\word{w}}{\alpha\vert_{\leq n}}}{n} = \bar{\mu}(\cylinder{\word{w}})
$$
That is, $\word{w}$ occurs in $\alpha$ with limiting frequency $\bar{\mu}(\cylinder{\word{w}})$.
\end{definition}

The study of probability measures on the full shift is cryptomorphic to the study of invariant probability maps; this folklore result is contained in the following two propositions (proofs can be found in Appendix \ref{sec:auxiliary}).

\begin{proposition}\label{prop:back_and_forth_induce}
Every invariant probability map $\mu: \Sigma^* \longrightarrow  [0,1]$ induces a shift-invariant probability measure
$\bar{\mu} : \Sigma^\omega \longrightarrow [0,1]$ by setting $\bar{\mu}(\cylinder{\word{w}}) = \mu(\word{w})$.
Conversely, every probability measure $\nu :\Sigma^\omega \longrightarrow [0,1]$ induces a
probability map $\underline{\nu} : \Sigma^* \longrightarrow [0,1]$ by defining
$\underline{\nu}(\word{w}) = \nu(\cylinder{\word{w}})$; if $\nu$ is shift-invariant, then $\underline{\nu}$ is invariant.
Furthermore, $\mu = \underline{\bar{\mu}}$,
and $\nu = \bar{\underline{\nu}}$.
\end{proposition}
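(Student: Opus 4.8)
The plan is to construct the two assignments $\mu \mapsto \bar\mu$ and $\nu \mapsto \underline{\nu}$ separately and then check that they are mutually inverse; essentially all of the analytic content sits in producing $\bar\mu$ from $\mu$, which I would obtain from a standard extension theorem. First I would note that an invariant probability map is precisely a consistent family of finite-dimensional distributions on the product space $\Sigma^\omega = \prod_{i\geq 1}\Sigma$: for each $n$ the assignment $\word{w}\mapsto\mu(\word{w})$ is a probability measure on $\Sigma^n$ (here using $\sum_{\word{w}\in\Sigma^n}\mu(\word{w})=1$), and the first invariance identity $\sum_{a\in\Sigma}\mu(\word{w}\cdot a)=\mu(\word{w})$ states exactly that the projection of the distribution on $\Sigma^{n+1}$ onto the first $n$ coordinates recovers the distribution on $\Sigma^n$. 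Since $\Sigma$ is countable and discrete, hence standard Borel, the Kolmogorov extension theorem yields a unique probability measure $\bar\mu$ on $(\Sigma^\omega,\mathcal{C})$ with $\bar\mu(\cylinder{\word{w}})=\mu(\word{w})$ for all $\word{w}$ (and $\bar\mu(\cylinder{\lambda})=\bar\mu(\Sigma^\omega)=1$). Equivalently, one may define $\bar\mu$ on the semiring of cylinders and verify countable additivity before invoking Carathéodory's theorem.

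To see that $\bar\mu$ is shift-invariant, I would first evaluate it on cylinders. For every $\word{w}$ one has the disjoint decomposition $s^{-1}(\cylinder{\word{w}})=\bigsqcup_{a\in\Sigma}\cylinder{a\cdot\word{w}}$, so by countable additivity and the \emph{second} invariance identity, $\bar\mu(s^{-1}(\cylinder{\word{w}}))=\sum_{a\in\Sigma}\mu(a\cdot\word{w})=\mu(\word{w})=\bar\mu(\cylinder{\word{w}})$. Thus $\bar\mu\circ s^{-1}$ and $\bar\mu$ are two probability measures that agree on the family of cylinders. Since the cylinders are closed under intersection (the intersection of two cylinders is again a cylinder or empty) and generate $\mathcal{C}$, they form a $\pi$-system generating the $\sigma$-algebra, and the uniqueness of measure extension (the $\pi$--$\lambda$ theorem) forces $\bar\mu\circ s^{-1}=\bar\mu$ on all of $\mathcal{C}$.

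For the converse I set $\underline{\nu}(\word{w})=\nu(\cylinder{\word{w}})$. That $\underline{\nu}$ is a probability map follows from countable additivity of $\nu$ and the fact that $\{\cylinder{\word{w}}:\word{w}\in\Sigma^n\}$ is a countable partition of $\Sigma^\omega$, whence $\sum_{\word{w}\in\Sigma^n}\nu(\cylinder{\word{w}})=\nu(\Sigma^\omega)=1$. If $\nu$ is shift-invariant, the partition $\cylinder{\word{w}}=\bigsqcup_{a\in\Sigma}\cylinder{\word{w}\cdot a}$ gives $\sum_{a}\underline{\nu}(\word{w}\cdot a)=\underline{\nu}(\word{w})$, and the identity $s^{-1}(\cylinder{\word{w}})=\bigsqcup_a\cylinder{a\cdot\word{w}}$ together with $\nu(s^{-1}(\cylinder{\word{w}}))=\nu(\cylinder{\word{w}})$ gives $\sum_a\underline{\nu}(a\cdot\word{w})=\underline{\nu}(\word{w})$; hence $\underline{\nu}$ is invariant. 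The two remaining identities are then immediate: $\underline{\bar\mu}(\word{w})=\bar\mu(\cylinder{\word{w}})=\mu(\word{w})$ by construction, while $\bar{\underline{\nu}}$ and $\nu$ are probability measures agreeing on every cylinder (both equal $\underline{\nu}(\word{w})$), hence agree on the generating $\pi$-system and therefore everywhere.

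The step I expect to be the main obstacle is the countable additivity underlying the extension in the infinite-alphabet case. When $\Sigma$ is finite the cylinders are compact and clopen, so countable additivity reduces to finite additivity via a finite-subcover argument; when $\Sigma$ is infinite this fails, and additivity over the infinite partition $\bigsqcup_{a\in\Sigma}\cylinder{\word{w}\cdot a}$ must instead be read off directly from the absolutely convergent series guaranteed by the invariance hypothesis. Phrasing the construction through the Kolmogorov extension theorem for standard Borel coordinate spaces is the cleanest way to discharge this uniformly in the cardinality of $\Sigma$.
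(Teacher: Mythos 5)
Your proposal is correct, and its computational core coincides with the paper's proof: both arguments rest on the disjoint decompositions $s^{-1}(\cylinder{\word{w}})=\bigsqcup_{a\in\Sigma}\cylinder{a\cdot\word{w}}$ and $\cylinder{\word{w}}=\bigsqcup_{a\in\Sigma}\cylinder{\word{w}\cdot a}$, matched against the two invariance identities, and your converse direction (invariance of $\underline{\nu}$ from shift-invariance of $\nu$) is essentially word-for-word the paper's. Where you genuinely depart from the paper is in the measure-theoretic scaffolding. The paper verifies the identities only on cylinders and then asserts, without naming any extension or uniqueness result, that $\bar{\mu}$ is a probability measure on the generated $\sigma$-algebra, that shift-invariance on cylinders suffices, and that $\mu=\underline{\bar{\mu}}$ and $\nu=\bar{\underline{\nu}}$ ``follow directly from the definitions.'' You discharge exactly these debts: the Kolmogorov extension theorem (applicable since a countable discrete $\Sigma$ is standard Borel, with consistency being precisely the first invariance identity) produces $\bar{\mu}$, and the $\pi$--$\lambda$ theorem upgrades agreement of two measures on the cylinders --- a $\pi$-system generating $\mathcal{C}$ --- to equality of measures, which is what both the shift-invariance of $\bar{\mu}$ on all of $\mathcal{C}$ and the identity $\nu=\bar{\underline{\nu}}$ actually require (on cylinders the latter holds by definition, but as an identity of measures it needs uniqueness). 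Your closing observation is also apt: for infinite $\Sigma$ the cylinders are not compact, so the finite-alphabet compactness route to countable additivity is unavailable, and the absolutely convergent series supplied by invariance (or the appeal to standard Borel coordinate spaces) is what legitimizes the extension --- a subtlety the paper never addresses. In short, your route buys rigor, and uniformity in the cardinality of $\Sigma$, at the cost of invoking heavier named theorems; the paper's version is shorter but leaves the extension and uniqueness steps implicit.
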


\begin{proposition}\label{prop:back_and_forth_final}
Let $\mu : \Sigma^* \longrightarrow [0,1]$ be a probability map. The following are equivalent:

\begin{enumerate}

\item \label{it:1_1} There exists a $\mu$-distributed  $\alpha \in \Sigma^\omega$.

\item \label{it:1_2} $\mu$ is invariant.

\item \label{it:1_3}  There exists a shift-invariant probability measure $\nu$ on $\Sigma^\omega$ such that $\bar{\mu} = \nu$.

\end{enumerate}
Conversely, let $\nu$ be a probability measure on $\Sigma$. The following are equivalent:

\begin{enumerate}

\item \label{it:2_1} There exists $\alpha \in \Sigma^\omega$ that is generic for $\nu$.

\item \label{it:2_2} $\nu$ is shift-invariant.

\item \label{it:2_3} There exists an invariant probability map $\mu : \Sigma^* \longrightarrow [0,1]$ such that $\underline{\nu} = \mu$.

\end{enumerate}
\end{proposition}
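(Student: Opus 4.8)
The plan is to close each of the two triangles of implications by leaning on the two correspondence results already established, \Cref{prop:invariancefromgenericity} and \Cref{prop:back_and_forth_induce}, together with the existence of generic sequences for shift-invariant measures due to Madritsch and Mance \cite{MadritchMance:constructing}. Almost all of the algebra is absorbed by those propositions; the only genuinely new ingredient is a bridging observation, which I would isolate and prove first: a sequence generic for $\nu$ is in fact $\underline{\nu}$-distributed, where $\underline{\nu}(\word{w}) = \nu(\cylinder{\word{w}})$.

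For the bridging step, fix $\alpha$ generic for $\nu$ and a length $k$. The length-$k$ cylinders partition $\Sigma^\omega$, so $\sum_{\word{w}\in\Sigma^k}\nu(\cylinder{\word{w}}) = 1$, and since every non-admissible word contributes $0$, the admissible length-$k$ words already carry total mass $1$. Given $\epsilon>0$, choose a \emph{finite} set $A$ of admissible words with $\sum_{\word{w}\in A}\nu(\cylinder{\word{w}}) > 1-\epsilon$; genericity applied on the finite set $A$ gives $\sum_{\word{w}\in A}\countones{\word{w}}{\alpha\vert_{\leq N}}/N > 1-2\epsilon$ for all large $N$, while $\sum_{\word{w}\in\Sigma^k}\countones{\word{w}}{\alpha\vert_{\leq N}} = N-k+1$ forces $\sum_{\word{w}\in\Sigma^k}\countones{\word{w}}{\alpha\vert_{\leq N}}/N \le 1$. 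Hence every word outside $A$ — in particular every non-admissible $\word{w}_0$ — satisfies $\countones{\word{w}_0}{\alpha\vert_{\leq N}}/N < 2\epsilon$ for large $N$. As $\epsilon$ is arbitrary and frequencies are non-negative, the limiting frequency of every non-admissible $\word{w}_0$ exists and equals $0 = \nu(\cylinder{\word{w}_0})$; combined with genericity on admissible words, this shows $\alpha$ is $\underline{\nu}$-distributed.

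With this in hand the first equivalence is a $3$-cycle. That existence of a $\mu$-distributed sequence forces invariance of $\mu$ is exactly \Cref{prop:invariancefromgenericity}. If $\mu$ is invariant, \Cref{prop:back_and_forth_induce} produces the shift-invariant measure $\bar{\mu}$, and setting $\nu=\bar{\mu}$ discharges the third item. Finally, if a shift-invariant $\nu=\bar{\mu}$ exists, Madritsch--Mance \cite{MadritchMance:constructing} yields a sequence generic for $\nu$, which by the bridging observation is $\underline{\nu}$-distributed; since $\underline{\nu}=\underline{\bar{\mu}}=\mu$ by \Cref{prop:back_and_forth_induce}, this sequence is $\mu$-distributed, closing the cycle. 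The second equivalence runs along the same lines with the roles of $\mu$ and $\nu$ interchanged: if $\alpha$ is generic for $\nu$, the bridging observation makes $\alpha$ be $\underline{\nu}$-distributed, whence $\underline{\nu}$ is invariant by \Cref{prop:invariancefromgenericity}, and then $\nu=\bar{\underline{\nu}}$ is shift-invariant by \Cref{prop:back_and_forth_induce}; if $\nu$ is shift-invariant, \Cref{prop:back_and_forth_induce} makes $\underline{\nu}$ invariant, so $\mu=\underline{\nu}$ works; and if an invariant $\mu=\underline{\nu}$ exists, then $\bar{\mu}=\nu$ and the first equivalence yields a $\mu$-distributed $\alpha$, whose limiting frequency on any admissible $\word{w}$ is $\mu(\word{w})=\nu(\cylinder{\word{w}})$, so $\alpha$ is generic for $\nu$.

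I expect the bridging observation to be the main obstacle. Genericity only constrains admissible words, whereas $\mu$-distributedness makes a claim about \emph{every} word; ruling out positive limiting frequency for the (possibly infinitely many) measure-zero words is what distinguishes the two notions, and in the infinite-alphabet setting this cannot be done by a finite case check but requires the tail-mass truncation above. Everything else is routine bookkeeping on top of the two correspondence propositions and the Madritsch--Mance existence theorem.
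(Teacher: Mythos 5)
Your proof is correct and follows the same overall architecture as the paper's: both close the two equivalences by cycling through \Cref{prop:invariancefromgenericity} (or a re-derivation of it), \Cref{prop:back_and_forth_induce}, and the Madritsch--Mance existence theorem. The one place where you genuinely diverge is the step you isolate as the ``bridging observation,'' and there your treatment is actually \emph{stronger} than the paper's. The paper needs the same fact --- that a sequence generic for $\nu$ has limiting frequency $0$ on every inadmissible word, and hence is $\underline{\nu}$-distributed --- but in the first part it argues this by writing, for an inadmissible $w = a_1 \cdots a_n$, that $\prod_{i=1}^n \mu(a_i) = \mu(w) = 0$, so some letter $a_i$ has $\mu(a_i) = 0$ and the frequency of $w$ is dominated by that of $a_i$. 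That factorization is a Bernoulli property which is not among the hypotheses of this proposition ($\mu$ is merely invariant), and it fails for general invariant maps: in the paper's own example ($\mu(w) = 1/2$ on alternating words over $\{0,1\}$, $0$ otherwise) the word $00$ is inadmissible while $\mu(0) = \mu(1) = 1/2 > 0$. In the second part the paper simply asserts ``by construction, $\underline{\nu}$ is a probability map such that $\alpha$ is $\underline{\nu}$-distributed,'' eliding the issue entirely. Your tail-mass truncation --- pick a finite set $A$ of admissible length-$k$ words with $\nu$-mass exceeding $1 - \epsilon$, apply genericity to the finitely many words in $A$, and use $\sum_{\word{w} \in \Sigma^k} \countones{\word{w}}{\alpha\vert_{\leq N}} = N - k + 1$ to force every word outside $A$ below frequency $2\epsilon$ --- is sound, works verbatim for countably infinite alphabets, and is exactly what is needed to make both directions rigorous. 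The remaining bookkeeping in your cycle, including the identities $\underline{\bar{\mu}} = \mu$ and $\bar{\underline{\nu}} = \nu$ from \Cref{prop:back_and_forth_induce}, matches the paper's proof.
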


It follows that the shift-invariant  probability measures $\nu$ on the full shift such that genericity is preserved by finite-state selection,
are exactly the Bernoulli measures:

\begin{theorem}
Let $\Sigma$ be a non-empty alphabet, and let $\nu$ be a shift-invariant measure on the full shift $(\Sigma^\omega,s)$ such that there exists at least one
$\alpha \in \Sigma^\omega$ generic for $\nu$. Then,
every  finite-state selector preserves genericity if{f} $\nu$ is a Bernoulli measure such that all words in $\Sigma^*$ are admissible.
\end{theorem}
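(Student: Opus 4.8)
The plan is to transport the statement across the cryptomorphism of Propositions~\ref{prop:back_and_forth_induce} and~\ref{prop:back_and_forth_final} and then invoke the main theorem. Write $\mu = \underline{\nu}$, so that $\mu(\word{w}) = \nu(\cylinder{\word{w}})$ for all $\word{w}$; since $\nu$ is shift-invariant and admits a generic sequence, Proposition~\ref{prop:back_and_forth_final} tells us that $\mu$ is an invariant probability map for which a $\mu$-distributed sequence exists. The two bridging facts I would isolate first are: (i) a word $\word{w}$ is admissible for $\nu$ iff $\mu(\word{w}) > 0$, and when $\nu$ is Bernoulli induced by $p$ this holds for \emph{all} words iff $p(a) > 0$ for every $a$, i.e.\ iff $\mu$ is a positive Bernoulli map; and (ii) if every word is admissible, then genericity for $\nu$ and $\mu$-distributedness are literally the same property (both demand limiting frequency $\mu(\word{w})$ for every $\word{w}$), whereas in general genericity is the weaker requirement obtained by restricting to admissible words. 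Throughout I read ``preserves genericity'' as: for every generic $\alpha$ and every finite-state selector, if the selected sequence is infinite then it is generic.

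For the direction ($\Leftarrow$), suppose $\nu$ is Bernoulli with all words admissible. By (i), $\mu$ is a positive Bernoulli map, so Theorem~\ref{the:main} (\ref{mainthm:bernoulli}$\Rightarrow$\ref{mainthm:agafonov}) gives that every DFA preserves $\mu$-distributedness. Any generic $\alpha$ is $\mu$-distributed by (ii), so any infinite selected subsequence is $\mu$-distributed, hence generic again by (ii). This direction is routine once (i) and (ii) are in place.

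The substance is the contrapositive of ($\Rightarrow$): assuming $\nu$ is \emph{not} a Bernoulli measure with all words admissible, I would build a single finite-state selector (a Postnikova strategy, which is a DFA by Lemma~\ref{postnikovastrategiesbyautomata}) sending a generic sequence to an infinite non-generic one. Split into two cases. If $\mu$ is not Bernoulli, apply Lemma~\ref{lem:crucialPostnikova}: for the minimal witness $a_1 \cdots a_k$ it produces a Postnikova strategy whose output $\beta$ has limiting frequency of $a_k$ bounded away from $\mu(a_k)$. The key point to check is that $a_k$ is \emph{admissible}, so that this really violates genericity rather than merely $\mu$-distributedness: since $\mu$ is invariant (Proposition~\ref{prop:invariancefromgenericity}), repeated application of invariance gives $\mu(a_1 \cdots a_k) \le \mu(a_k)$, so $\mu(a_k) = 0$ would force the selected frequency $\mu(a_1 \cdots a_k)/\mu(a_1 \cdots a_{k-1}) = 0$, contradicting that it differs from $\mu(a_k) = 0$; hence $\mu(a_k) > 0$. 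If instead $\mu$ is Bernoulli but not positive (the remaining way to fail all-admissibility), apply Lemma~\ref{lem:non_positive_Post}: it yields a generic $\alpha$ and a Postnikova strategy whose output $\beta$ satisfies $\#_b(\beta\vert_{\leq n}) \geq n/2 - 1$ for the null letter $b$. Here $b$ itself is inadmissible, so I instead argue by a frequency-sum bound: the non-$b$ letters occupy asymptotic density at most $1/2$, so the limiting frequencies of the admissible single letters $a$ (those with $p(a) > 0$) cannot all equal $\nu(\cylinder{a}) = p(a)$, since these sum to $1 > 1/2$; thus some admissible letter has a wrong (or non-existent) limiting frequency and $\beta$ is non-generic.

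I expect the main obstacle to be exactly the mismatch flagged in (ii): non-$\mu$-distributedness of the selected output is \emph{a priori} weaker than non-genericity, because an anomalous frequency could in principle sit on an inadmissible word. The two case-specific observations above --- the invariance inequality forcing $a_k$ to be admissible in the non-Bernoulli case, and the density bound $\tfrac{1}{2} < 1$ in the non-positive Bernoulli case --- are precisely what close this gap and let the existing Postnikova lemmas do all the combinatorial work.
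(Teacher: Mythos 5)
Your proof is correct, and it follows the same overall route as the paper --- transport along the cryptomorphism $\mu=\underline{\nu}$, note that for a Bernoulli measure ``all words admissible'' is the same as positivity, and invoke \Cref{the:main} --- but it differs in one substantive respect. The paper's proof is two sentences: it makes the admissibility--positivity observation and then cites \Cref{the:main} and \Cref{prop:back_and_forth_final}, thereby silently identifying genericity for $\nu$ with $\underline{\nu}$-distributedness. That identification is in fact valid for every individual sequence, not only when all words are admissible: in a generic sequence the admissible words of each length $n$ already carry frequency mass $1$ (for infinite $\Sigma$ this needs a Fatou-type bound on the series of frequencies), so every inadmissible word is forced to occur with limiting frequency $0=\nu(\cylinder{\word{w}})$; granted this equivalence, the theorem is a literal restatement of \ref{mainthm:bernoulli}$\Leftrightarrow$\ref{mainthm:agafonov} in \Cref{the:main} together with the admissibility observation, and no case analysis is needed. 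You instead treat genericity as possibly weaker than $\mu$-distributedness and close the resulting gap by re-entering the two Postnikova lemmas and checking that their frequency violations can be located on admissible words; both of your bridging arguments are correct (invariance gives $\mu(a_1\cdots a_k)\le\mu(a_k)$, so $\mu(a_k)=0$ would contradict the choice of the witness word, and in the non-positive Bernoulli case finitely many admissible letters of total measure exceeding $1/2$ cannot all have correct frequencies in a sequence where the null letter has density at least $1/2$). So your worry in (ii) is, strictly speaking, unfounded --- genericity and $\mu$-distributedness coincide --- but you correctly spotted that the paper's terse proof needs \emph{some} bridge at exactly this point, and your version supplies one that stays entirely within the lemmas already proved, at the cost of a case split that the equivalence renders unnecessary.
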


\begin{proof}
Observe that for a Bernoulli measure $\bar{\mu}$ on the full shift on $\Sigma$, all words are admissible if{f} $\bar{\mu}(a) > 0$ for all $a \in \Sigma$. The Theorem now follows
from Theorem \ref{the:main} and Proposition \ref{prop:back_and_forth_final}.
\end{proof}

\section{Future work}

The most obvious extension of our main results is to attempt to relax the requirement that selection is done by a DFA by using methods similar to
Kamae and Weiss \citep{KamaeWeiss}, and Kamae and Wang \citep{KamaeWang} where reasoning using a combination of density arguments and
relaxed finiteness conditions on the syntactic monoid of the strategy (using our terminology) have been used for normal sequences
over binary alphabets. We conjecture that some of these techniques can be adapted to positive Bernoulli distributions on arbitrary finite alphabets. 


A different possible thrust is to consider generalizations of Agafonov's Theorem on domains different from infinite sequence over alphabets. However, some
results in the -- sparse -- literature on selection from normal sequence-like objects in other contexts are negative; for example
normality is not preserved by arithmetic progressions (so, probably not by finite-state selectors in any reasonable sense) for continued fraction expansions \citep{Heersinkarticle}. On the other hand, very recent work by Bergelson et al.\ has succesfully adapted the classical
techniques of Kamae and Weiss \citep{KamaeWeiss} to show that certain F{\o}lner sequences preserve (the appropriate analogue of) normality
in cancellative amenable semigroups \citep{bergelson2020deterministic}.

\clearpage


\bibliographystyle{msclike}
\bibliography{../agafonovbib.bib}   

\clearpage

\appendix

\section{Auxiliary proofs and definitions}\label{sec:auxiliary}

\subsection{Automata and selectors}

The following is a proof of the extension of Lemma 2.6 of \citep{DBLP:journals/acta/SchnorrS72}. The proof follows the original in most details.

\begin{definition}
Let $G = (V,E)$ be a directed multigraph, and denote by $\sim \subseteq V \times V$ the equivalence
relation such that $v \sim w$ if{f} there is a directed path from $v$ to $w$ and a directed path from $w$ to $v$. 
For every $v \in V$, denote by $[v]_\sim$ the equivalence class containing $v$.
Define the partial order $<$ on $V/\!\!\sim$ by 
$\mathcal{V} < \mathcal{W}$ \textrm{ if{f} } there are $v \in \mathcal{V}$ and  $w \in \mathcal{W}$ such that there is a directed path from $w$ to $v'$.
\end{definition}

If $G$ has a finite number of nodes, $<$ is clearly well-founded. As $<$ is clearly also transitive,
every $\mathcal{W} \in  V/\!\!\sim$ satisfies $\mathcal{W} > \mathcal{V}$ for some $<$-minimal $\mathcal{V} \in V/\!\!\sim$.

Also observe that every $<$-minimal $\mathcal{V}$ is a recurrent strongly connected component, because (i) it is strongly connected by definition,
and $<$-minimality implies that no directed path from any node in $\mathcal{V}$ can reach a node in a strongly connected component distinct
from $\mathcal{V}$.

\begin{lemma}\label{lem:aux_ss}
Let $S = (Q,\delta,q_s,Q_F)$ be a finite automaton over a (possibly infinite) alphabet $\Sigma$. Then there is a word $w \in \Sigma^*$
such that, for all states $q \in Q$, $\delta^*(q,w)$ is a state in a $<$-minimal element of $Q/\!\!\sim$.
\end{lemma}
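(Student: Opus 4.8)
The plan is to prove the statement in two stages, mirroring the original finite-alphabet argument but taking care that all constructed words remain finite when $\Sigma$ is infinite. First I would show that for each \emph{single} state $q \in Q$ there is a word $v_q$ with $\delta^*(q,v_q)$ lying in a $<$-minimal class; then I would glue these words together into one word $w$ that works simultaneously for all states, using the fact (established just before the lemma) that every $<$-minimal class is recurrent.

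For the single-state claim, fix $q$ and let $\mathcal{C} = [q]_\sim$. If $\mathcal{C}$ is already $<$-minimal, take $v_q = \lambda$. Otherwise $\mathcal{C}$ is not recurrent, so there exist $p \in \mathcal{C}$ and $a \in \Sigma$ with $\delta(p,a) \notin \mathcal{C}$; since $\mathcal{C}$ is strongly connected there is a word $u$ with $\delta^*(q,u) = p$, and then $\delta^*(q,ua)$ lies in a class strictly below $\mathcal{C}$ in the order $<$. Because $Q$ is finite, $Q/\!\!\sim$ is finite and $<$ is well-founded on it, so iterating this descent reaches a $<$-minimal class after at most $\vert Q/\!\!\sim \vert$ steps; concatenating the words produced at each step gives the desired finite $v_q$.

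For the gluing stage, I would enumerate $Q = \{q_1,\ldots,q_m\}$ and build $w$ by induction. Set $w^{(0)} = \lambda$; given $w^{(i-1)}$, let $r_i = \delta^*(q_i, w^{(i-1)})$, choose by the first stage a word $s_i$ with $\delta^*(r_i, s_i)$ in a $<$-minimal class, and set $w^{(i)} = w^{(i-1)} s_i$. Put $w = w^{(m)}$. The key point is that after the prefix $w^{(j)}$ has been read, the state $\delta^*(q_j, w^{(j)})$ lies in a $<$-minimal, hence recurrent, class; recurrence guarantees that reading the remaining suffix of $w$ cannot leave that class, so $\delta^*(q_j, w)$ is again in a $<$-minimal class. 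As $j$ ranges over all of $\{1,\ldots,m\}$, the single word $w$ satisfies the conclusion for every state.

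The only real obstacle relative to the classical proof is making sure the infinite alphabet causes no trouble, and this is exactly where the argument is delicate: each descent step above uses one leaving letter together with a path inside a single strongly connected component (at most $\vert Q \vert$ letters), and the number of descents is bounded by the finite quantity $\vert Q/\!\!\sim \vert$. Hence every word $v_q$, every $s_i$, and the final $w$ are finite regardless of whether $\Sigma$ is finite or infinite, and no appeal to finiteness of $\Sigma$ is needed anywhere.
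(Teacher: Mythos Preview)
Your proposal is correct and follows essentially the same two-stage structure as the paper's proof: first reach a $<$-minimal component from a single state, then inductively concatenate such words over an enumeration $q_1,\ldots,q_m$ of $Q$, using recurrence of $<$-minimal components to guarantee that states already placed cannot escape when the word is extended. The only cosmetic difference is that your Stage~1 descends one component at a time, whereas the paper simply picks a $<$-minimal class below $[q]_\sim$ and takes a path to it directly; both arguments yield a finite word independent of the size of $\Sigma$.
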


\begin{proof}
Write $Q = \{q_1,\ldots,q_m\}$. We prove by induction on $i \leq m$ that there is a word $w_i \in \Sigma^*$ such that 
for all $j \leq i$, $\delta^*(s_j,w_i)$ is a state in a $<$-minimal element of $Q/\!\!\sim$.
\begin{description}

\item[$i=1$:] Let $\mathcal{V}$ be a $<$-minimal element of $Q/\!\!\sim$ such that $[q_1]_\sim > \mathcal{V}$. Choose $q \in Q$
such that $[q]_\sim = \mathcal{V}$. Then there is a directed path from $s_1$ to $q$. Let $w_1$ be the word along that path,
and observe that $\delta^*(q_1,w_1) = q$.

\item[$i>1$:]  Let $\mathcal{V}$ be a $<$-minimal element of $Q/\!\!\sim$ such that $\delta^*(q_{i+1},w_i) \in \mathcal{V}$,
and let $q \in \mathcal{V}$, whence there is a directed path from $\delta^*(q_{i+1},w_i)$ to $q$. Let $w' \in \Sigma^*$ be the word
along that path, whence $\delta^*(\delta^*(q_{i+1},w_i),w') = q$. Define $w_{i+1} = w_i \cdot w'$, and observe that
$\delta^*(q_{i+1},w_{i+1}) = q$. 

For $j \leq i$, we claim that $\delta^*(q_{j},w_{i+1})$ is a state in a $<$-minimal element of $Q/\!\!\sim$. For, by the Induction Hypothesis,
$\delta(q_j,w_i)$ is in a $<$-minimal element $\mathcal{V}_j$ of $Q/\!\!\sim$, and as  $<$-minimal element are recurrent strongly connected components,
no directed path from $\delta(q_j,w_i)$ can end in a state outside $\mathcal{V}_j$.

\end{description}
\end{proof}

\begin{proof}[Proof of Lemma \ref{lem:SS}]
Any $<$-minimal element of  $Q/\!\!\sim$ is recurrent. By Lemma \ref{lem:aux_ss}, there is a word $w$ such that
from any state $q \in Q$, $\delta^*(q,w)$ is a state in a recurrent strongly connected component of the automaton. As $p(a) > 0$ for all $a \in \Sigma$, $\mu_p(w) > 0$, and as $\alpha$ is $p$-distributed,
$w$ thus occurs (infinitely often) in $\alpha$. After the first occurrence of $w$, the run of $A$ on $\alpha$ has entered a strongly recurrent connected component.
\end{proof}

\subsection{$\mu$-distribution}

\begin{proof}[Proof of Proposition \ref{prop:mu_implies_mu}]
We use exactly the same arguments as in the proof by Niven and Zuckerman \citep{niven1951}, but using the notation of the present paper. Almost the entirety of the proof in \citep{niven1951} is devoted
to counting arguments on finite prefixes of $\alpha$, and involves neither the size of the alphabet $\Sigma$, nor the particular distribution on it; indeed
any consideration of those matters is isolated to a few observations in the beginning of the proof that are then used repeatedly when taking limits
later on. We have clearly indicated those observations below, but give the entirety of the proof in the interest of completeness.

Let $w = w_1 \cdots w_v \in \Sigma^v$ be arbitrary. We introduce the following notation:

\begin{itemize}
\item For any $t \geq 0$,  $w\Sigma^tw$ is the set
$\{w u w : u \in \Sigma^t\}$.

\item $\#_w^i(n)$ is the number of times that $w$ occurs in $\alpha\vert_{\leq n}$ at a position
congruent to $i$ (mod $n$).

\item $\#^{i,j}_w(n) = \#_w^{i}(n) - \#_w^{j}(n)$.

\item $g: \mathbb{N} \longrightarrow \mathbb{N}$ is the function defined by: $g(n) = \sum_{i=1}^{n-1} \#_w^i(n)$.

\item $\theta_t(n)$ is the number of occurrences of any element from $w \Sigma^t w$ in $\alpha\vert_{\leq n}$.

\item $w'$ is shorthand for any string of length between $v+1$ and $2v-1$ whose first $v$ digits are $w$ and whose last digits 
are $w$, i.e. an ``overlap of $w$ with itself''. Such a string does not necessarily exist.

\end{itemize}

We now treat the part of the proof depending on the cardinality of $\Sigma$ and $\mu_p$-distributedness (as opposed to finiteness of $\Sigma$ and equidistribution ).

As $\alpha$ is $\mu_p$-distributed, we have 
\begin{equation}\label{eq:ordinary_mu}
\lim_{n \rightarrow \infty} \frac{g(n)}{n} = \mu_p(w)
\end{equation}
and for each fixed $t \geq 0$, we also have:
\begin{align}
\lim_{n \rightarrow \infty} \frac{\theta_t(n)}{n} &= 
\lim_{n \rightarrow \infty}  \frac{\sum_{a_1 \cdots a_t \in \Sigma^t}\#_{wa_1 \cdots a_t w}(\alpha \vert_{\leq n})}{n} & \nonumber \\
&= \lim_{n \rightarrow \infty}  \sum_{a_1 \cdots a_t \in \Sigma^t} \frac{\#_{wa_1 \cdots a_t w}(\alpha \vert_{\leq n})}{n} & \nonumber\\
&=  \sum_{a_1 \cdots a_t \in \Sigma^t} \lim_{n \rightarrow \infty}  \frac{\#_{wa_1 \cdots a_t w}(\alpha \vert_{\leq n})}{n} & \textrm{(By the Dominated Convergence Theorem)} \nonumber \\
&= \sum_{a_1 \cdots a_t \in \Sigma^t} \mu_p(w a_1 \cdots a_t w) & \nonumber \\
&= \sum_{a_1 \cdots a_t \in \Sigma^t} \mu_p(w)^2 \prod_{i=1}^t p(a_i) & (\textrm{As $\mu_p$ is induced by a Bernoulli distribution}) \nonumber\\
&=   \mu_p(w)^2 \sum_{a_1 \cdots a_t \in \Sigma^t} \prod_{i=1}^t p(a_i) & \nonumber \\
&= \mu_p(w)^2 \prod_{i=1}^t \sum_{a \in \Sigma} p(a) & \textrm{(By monotone convergence)} \nonumber \\
&= \mu_p(w)^2 & \textrm{(As $1 = \Sigma_{a \in \Sigma} p(a)$)} \label{it:theta_is_good}
\end{align}

We shall prove that:
\begin{equation}\label{eq:congruence_close}
\lim_{n \rightarrow \infty} \frac{\#_w^{i,j}(n)}{n} = 0
\end{equation}
By \ref{eq:ordinary_mu} and \ref{eq:congruence_close},  it follows for any $i$ with $0 \leq i < v$ that:
$$
\lim_{n\rightarrow \infty}\frac{\#_w^{i}(n)}{n} = \frac{\mu_p(w)}{v}
$$
\noindent and as $v$ and $w \in \Sigma^v$ were arbitrary, that $\alpha$ is $\mu_p$-block-distributed.

The remainder of the proof is devoted to prove \ref{eq:congruence_close} and is only concerned with counting arguments on finite prefixes of $\alpha$. All arguments from hereon are, modulo notation and use of \ref{it:theta_is_good},
completely identical to the proof in \citep{niven1951}.

Let $s \geq 0$ be an integer. Observe that $\#_w^i(n+s) - \#^i_w(n)$ is the number of occurrences of $w$ that (1) are in $\alpha \vert_{\leq n + s}$ at a position congruent to $i \textrm{ mod } v$,
but (2) are not entirely contained in $\alpha\vert_{\leq n}$. Thus,
$$
\sum_{\substack{i < j\\ i \in \{0,\ldots,v-2\} \\ j \in \{1,\ldots,v-1\}}} \left( \#_w^i(n + s) - \#_w^i(n)\right)\left(\#_w^j(n+s) - \#_w^j(n) \right)
$$
is the number of words on the form $w'$ or $wuw$ (for  $0 \leq \vert u \vert \leq s- v - 1$ and $\neg(\vert u \vert \equiv 0 \textrm{ mod } v)$)
that occur in $\alpha\vert_{n+s}$, but such that the initial $\vert w \vert = v$ symbols are not entirely contained in $\alpha\vert_{\leq n}$.  

For $n > s$, we define:
\begin{equation}\label{eq:defining_sigma}
\sigma = \sum_{m=0}^{s-s} \sum_{\substack{i < j\\ i \in \{0,\ldots,v-2\} \\ j \in \{1,\ldots,v-1\}}} \left( \#_w^i(m + s) - \#_w^i(m)\right)\left(\#_w^j(m+s) - \#_w^j(m) \right)
\end{equation}

Consider $\alpha\vert_{\leq n}$ and any single occurrence of an element $wuw$ with $t = \vert u \vert \leq s - v - 1$  in $\alpha\vert_{\leq n}$. The occurrence of $wuw$ is counted a number of times in $\sigma$, and it it does not occur t too close to either end of $\alpha\vert_{\leq n}$,
it is counted at most $s - t - v$ times in $\sigma$. If the occurrence of $wuw$ is preceded by at least $s - t - 2v$ symbols and is followed by at least $s - t- v - 1$ symbols, $wuw$ is counted exactly $s - t - v$ times. Thus, we have:

\begin{equation}\label{eq:sigma_upper}
\sigma \geq \sum_{\substack{t=0\\ \neg(t \equiv 0 (\textrm{ mod } v))}}^{s-v-1} (s - t - v)(\theta_t(n-s) - \theta_t(s))
\end{equation}
Thus, by \ref{it:theta_is_good} and \ref{eq:sigma_upper}, we obtain, for \emph{fixed} $s$:
\begin{align}
\lim_{n \rightarrow \infty} \frac{\sigma}{n} &\geq  \lim_{n \rightarrow \infty} \left( \sum_{\substack{t=0\\ \neg(t \equiv 0 (\textrm{ mod } v))}}^{s-v-1} (s - t - v)(\theta_t(n-s) - \theta_t(s)) \right) & \nonumber \\
&= \sum_{\substack{t=0\\ \neg(t \equiv 0 \, (\textrm{mod } v))}}^{s-v-1} \lim_{n \rightarrow \infty} \left( \frac{\sigma}{n}  (s - t - v)(\theta_t(n-s) - \theta_t(s)) \right) & \nonumber \\
&= \sum_{\substack{t=0\\ \neg(t \equiv 0 \, (\textrm{mod } v))}}^{s-v-1} (s - t - v)\mu_p(w)^2 & \textrm{(By \ref{it:theta_is_good})} \label{it:close_to_end}
\end{align}

Choose $s$ such that $s \equiv 0 \, (\textrm{mod } v)$. Then, \ref{it:close_to_end} becomes:
\begin{equation}\label{eq:get_Brexit_done!}
\lim_{n \rightarrow \infty} \frac{\sigma}{n} \geq \frac{(v-1)(s-v)^2}{2v} \mu_p(w)^2
\end{equation}

Similarly, we count the number of occurrences of words on the form $wuw$ where $t = \vert u \vert \equiv 0 \, (\textrm{ mod } v)$ and proceed as above. This yields:
\begin{align}
\lim_{n \rightarrow \infty} \frac{1}{n}\sum_{m=0}^{n-s} \sum_{i=0}^{v-1} \frac{(\#_w^i(m+s) - \#_w^i(m))(\#_w^i(m+s) - \#_w^i(m) - 1)}{2}\\
= \sum_{\substack{t=0\\ \neg(t \equiv 0 (\textrm{ mod } v))}} (s - t - v) \mu_p(w)^2 = \frac{s(s-v)}{2v} \mu_p(w)^2
\end{align}

We have, for fixed $s$ such that $s \equiv 0 \, (\textrm{ mod } v)$, that:
\begin{align}
\lim_{n \rightarrow \infty} \frac{1}{2n} \sum_{m=0}^{n-s} \sum_{i=0}^{v-1} (\#_w^i(m+s) - \#_w^i(m)) &=
\lim_{n \rightarrow \infty} \frac{1}{2n} \sum_{m=0}^{n-s}(g(m+s) - g(m)) \nonumber \\
&= \lim_{n \rightarrow \infty} \left( \frac{1}{2n} \sum_{m=n-s+1}^n g(m) - \frac{1}{2n}\sum_{m=0}^{s-1} g(m) \right) \nonumber \\
&= \lim_{n \rightarrow \infty} \frac{1}{2n} \sum_{m=n-s+1}^n g(m) \nonumber \\
&= \frac{s \mu_p(w)}{2} \quad \quad \quad \textrm{(By \ref{eq:ordinary_mu})} \label{eq:get_the_damn_thing_done}
\end{align}

Thus, by \ref{eq:get_Brexit_done!} and \ref{eq:get_the_damn_thing_done}, we obtain:
\begin{equation}\label{eq:sheesh_when_are_we_done?}
\lim_{n\rightarrow\infty} \frac{1}{n} \sum_{m=0}^{n-s}\sum_{i=0}^{v-1} (\#_w^i(m+s) - \#_w(m))^2 = s\mu_p(w) + \frac{s(s-v)}{v} \mu_p(w)^2
\end{equation}

For fixed $s$ with $s \equiv 0 \, \textrm{ mod } v)$, \ref{eq:defining_sigma}, \ref{eq:get_Brexit_done!}, and \ref{eq:sheesh_when_are_we_done?} now yield:

\begin{align}
&\lim_{n \rightarrow \infty} \frac{1}{n} \sum_{m=0}^{n-s} \sum_{\substack{i < j\\ i \in \{0,\ldots,v-2\} \\ j \in \{1,\ldots,v-1\}}} \left( \#_w^i(m+s) - \#_w^i(m) - (\#_w^j(m+s) - \#_w^j(m)) \right) \nonumber \\
&\leq (v-1)s \mu_p(w) + (v-q)(s-v) \mu_p(w)^2 \label{eq:nobody_knows_the_trouble_Ive_seen}
\end{align}

Noting that $\sum_{i=1}^n x_i^2 \geq \frac{1}{n}(\sum_{i=1}^n x_i)^2$, we obtain:

\begin{align}
&\sum_{m=0}^{n-s}  \left( \#_w^i(m+s) - \#_w^i(m) - (\#_w^j(m+s) - \#_w^j(m)) \right) \nonumber\\
&\geq \frac{1}{n-s+1}\left( \sum_{m=0}^{n-s} \#_w^i(m+s) - \#_w^i(m) - \#_w^j(m+s) + \#_w^j(m)  \right)^2 \nonumber \\
&= \frac{1}{n-s+1} \left( \sum_{m=0}^{n-s} (\#_{w}^{i,j}(m+s) - \#_w^{i,j}(m))\right)^2 \nonumber \\
&= \frac{1}{n-s+1} \left( \sum_{m=0}^{s-1} \#_w^{i,j}(n-m) - \sum_{m=0}^{s-1}\#_w^{i,j}(m)\right)^2 \label{eq:please_release_me!}
\end{align}

Now, \ref{eq:nobody_knows_the_trouble_Ive_seen} and \ref{eq:please_release_me!} imply:

\begin{align}
&\limsup_{n\rightarrow\infty} \frac{1}{n(n-s+1)} \sum_{\substack{i < j\\ i \in \{0,\ldots,v-2\} \\ j \in \{1,\ldots,v-1\}}}  \left( \sum_{m=0}^{s-1} \#_w^{i,j}(n-m) - \sum_{m=0}^{s-1}\#_w^{i,j}(m)\right)^2\\
&\leq (v-1)s \mu_p(w) + (v-1)(s-v)\mu_p(w)^2 \label{eq:please_please_no_more}
\end{align}

By the definition of $\#_w^{i,j}$, we have $\vert \#_w^{i,j}(m) \vert < m$, and thus, for fixed $s$, we have:
$$
\lim_{n \rightarrow \infty} \frac{1}{n(n-s+1)} \left(\sum_{m=0}^{s-1} \#_w^{i,j}(m) \right)^2 = 0
$$
\noindent
and:
$$
\lim_{n \rightarrow \infty} \frac{1}{n(n-s+1)} \sum_{m=0}^{s-1} \#_w^{i,j}(n-m) \sum_{m=0}^{s-1} \#_w^{i,j}(m) = 0
$$
which in turn imply, by \ref{eq:please_please_no_more}, that:
\begin{align}
&\limsup_{n\rightarrow \infty} \frac{1}{n(n-s+1)} \sum_{\substack{i < j\\ i \in \{0,\ldots,v-2\} \\ j \in \{1,\ldots,v-1\}}}
\left( \sum_{m=0}^{s-1} \#_w^{i,j} (n-m) \right)^2 \nonumber \\
&= \limsup_{n\rightarrow \infty} \frac{1}{n(n-s+1)} \sum_{\substack{i < j\\ i \in \{0,\ldots,v-2\} \\ j \in \{1,\ldots,v-1\}}}
\left( s\#_w^{i,j}(n) + \sum_{m=0}^{s-1} (\#_w^{i,j}(n-m) - \#_w^{i,j}(n))\right)^2 \nonumber \\
&\leq (v-1)s \mu_p(w) + (v-1)(s-v)\mu_p(w)^2 \label{eq:make_it_stop!}
\end{align}

But $\vert \#_w^{i,j}(n-m) - \#_w^{i,j}(n)\vert < 2m$, so from \ref{eq:make_it_stop!} we obtain:
\begin{align}
&\limsup_{n \rightarrow \infty} \frac{1}{n(n-s+1)}  \sum_{\substack{i < j\\ i \in \{0,\ldots,v-2\} \\ j \in \{1,\ldots,v-1\}}} s^2 (\#_w^{i,j}(n))^2 
\leq (v-1)s \mu_p(w) + (v-1)(s-v)\mu_p(w)^2
\end{align}

which in turn implies, for fixed $s \equiv 0 \, (\textrm{ mod } v)$, that:
\begin{equation}
\limsup_{n \rightarrow \infty} \frac{(\#_w^{i,j}(n))^2}{n^2} = \limsup_{n \rightarrow \infty} \frac{(\#_w^{i,j}(n))^2}{n(n-s+1)}
\leq \frac{(v-1)\mu_p(w)}{s} + \frac{(v-1)(s-v)\mu_p(w)^2}{s^2} \label{eq:finally!}
\end{equation}
As the expression on the right-hand side of \ref{eq:finally!} can be made arbitrarily small by choosing $s$ large enough, we obtain:
$$
\lim_{n \rightarrow \infty} \frac{\vert \#_w^{i,j}(n) \vert}{n} = 0
$$
and hence:
$$
\lim_{n \rightarrow \infty}\frac{\#_w^i(n)}{n} = \lim_{n\rightarrow\infty} \frac{\#_w^j(n)}{n}
$$
\noindent as desired.
\end{proof}

\subsection{Symbolic dynamical systems}

\begin{proof}[Proof of Proposition \ref{prop:back_and_forth_induce}]
The two identities $\mu = \underline{\bar{\mu}}$,
and $\nu = \bar{\underline{\nu}}$ follow directly from the definitions. If $\mu$ is invariant, then for every cylinder $\cylinder{\word{w}}$, we have
$\bar{\mu}(\cylinder{\word{w}}) = \mu(\word{w}) = \sum_{a \in \Sigma}\mu(\word{w} \cdot a) = \sum_{a \in \Sigma}\bar{\mu}(\cylinder{\word{w} \cdot a})$; from this,
and the observation that $1 = \mu(\epsilon) = \bar{\mu}(\cylinder{\epsilon}) = \bar{\mu}(\Sigma^\omega)$,
 that $\bar{\mu}$ is a probability measure on $\Sigma$ with the sigma algebra generated by the cylinder sets. In addition,
as $\mu$ is invariant, we have for any cylinder $\cylinder{\word{w}}'$ that: 
$$
\bar{\mu}(S^{-1}(\cylinder{\word{w}})) =  \bar{\mu}\left( \bigcup_{a \in \Sigma} \cylinder{a \cdot \word{w}} \right) =  \sum_{a \in \Sigma}\bar{\mu}(\cylinder{a \cdot \word{w}}) = \sum_{a \in \Sigma} \mu(a \cdot \word{w}) = \mu(\word{w}) = \bar{\mu}(\cylinder{\word{w}})
$$
\noindent whence $\bar{\mu}$ is shift-invariant.

Conversely, if $\nu$ is a shift-invariant probability measure on $\Sigma^\omega$, we have for any $\word{w}$ that:
$$
\underline{\nu}(\word{w}) = \nu(\cylinder{\word{w}}) = \nu\left( \bigcup_{a \in \Sigma} \cylinder{\word{w} \cdot a} \right) =
\sum_{a \in \Sigma} \nu(\cylinder{\word{w} \cdot a}) = \sum_{a \in \Sigma} \underline{\nu}(\word{w} \cdot a)
$$
and
$$
\underline{\nu}(\word{w}) =  \nu(\cylinder{\word{w}}) = \nu(S^{-1}(\cylinder{\word{w}})) = \nu\left( \bigcup_{a \in \Sigma}\cylinder{a \cdot \word{w}} \right)
= \sum_{a \in \Sigma} \nu(\cylinder{a \cdot \word{w}}) =  \sum_{a \in \Sigma} \underline{\nu}(a \cdot \word{w})
$$
\noindent showing that $\underline{\nu}$ is invariant.
\end{proof}

\begin{proof}[Proof of Proposition \ref{prop:back_and_forth_final}]
For the first part, we prove \ref{it:1_1} $\Rightarrow$ \ref{it:1_2} $\Rightarrow$ \ref{it:1_3} $\Rightarrow$ \ref{it:1_1}. 

If there is a $\mu$-distributed  $\alpha \in \Sigma^\omega$, then for any $\word{w} \in \Sigma^*$ and any $\epsilon > 0$, for all sufficiently large $n$ we have $\sup_{b \in \Sigma \cup \{\lambda\}} \vert \countones{b \cdot \word{w}}{\alpha\vert_{\leq_n}}/n - \mu(\word{w}) \vert < \epsilon$.
Observe that every occurrence of a word on the form $a \cdot \word{w}$ in $\alpha$
contains an occurrence of $\word{w}$, and hence
$\countones{\word{w}}{\alpha\vert_{\leq_n}} \geq \sum_{a \in \Sigma} \countones{a \cdot \word{w}}{\alpha\vert_{\leq_n}}$. Conversely, for
every occurrence of $\word{w}$ starting at some position $i \geq 2$ in $\alpha$, there is exactly one $a \in \Sigma$ such that the word
$a \cdot \word{w}$ occurs at position $i-1$, whence $\countones{\word{w}}{\alpha\vert_{\leq_n}} \leq 1 + \sum_{a \in \Sigma} \countones{a \cdot \word{w}}{\alpha\vert_{\leq_n}}$, and hence:
\begin{align*}
\left\vert \mu(\word{w}) - \sum_{a \in \Sigma} \mu(a \cdot \word{w})\right\vert &=
\left\vert \mu(\word{w}) - \frac{\countones{\word{w}}{\alpha\vert_{\leq_n}}}{n}  + \frac{\countones{\word{w}}{\alpha\vert_{\leq_n}}}{n} - \sum_{a \in \Sigma} \mu(a \cdot \word{w})\right\vert \\
&\leq \left\vert \mu(\word{w}) - \frac{\countones{\word{w}}{\alpha\vert_{\leq_n}}}{n} \right\vert + \left\vert  \frac{\countones{\word{w}}{\alpha\vert_{\leq_n}}}{n} - \sum_{a \in \Sigma} \mu(a \cdot \word{w}) \right\vert \\
&< \epsilon + \frac{1}{n} + \left\vert \frac{\sum_{a \in \Sigma} \countones{a \cdot \word{w}}{\alpha\vert_{\leq_n}}}{n} - \sum_{a \in \Sigma} \mu(a \cdot \word{w}) \right\vert \\
&\leq \epsilon + \frac{1}{n} + \sum_{a \in \Sigma} \left\vert \frac{\countones{a \cdot \word{w}}{\alpha\vert_{\leq n}}}{n} - \mu(a \cdot \word{w})\right\vert\\
&< \epsilon + \frac{1}{n} + \vert \Sigma \vert \epsilon
\end{align*}
\noindent and as $\epsilon$ was arbitrary, we thus have $\mu(\word{w}) =  \sum_{a \in \Sigma} \mu(a \cdot \word{w})$. 
The case for
$\mu(\word{w}) = \sum_{a \in \Sigma} \mu(\word{w} \cdot a)$ is symmetric, \emph{mutatis mutandis}, and hence $\mu$ is invariant. If $\mu$ is invariant, then by Proposition \ref{prop:back_and_forth_induce},
$\bar{\mu}$ is a shift-invariant probability measure on $\Sigma^\omega$. If $\nu$ is a shift-invariant probability measure on $\Sigma^\omega$ such that   
$\bar{\mu} = \nu$, then by \cite[Main Thm.\ 2.1]{MadritchMance:constructing}, there exists $\alpha \in \Sigma^\omega$ generic for $\bar{\mu}$,
and thus for any admissible $\word{w} \in \Sigma^*$ 
$\lim_{n\rightarrow \infty} \frac{\countones{\word{w}}{\alpha\vert_{\leq n}}}{n} = \nu(\word{w}) = \bar{\mu}(\cylinder{\word{w}}) = \mu(\word{w})$. Observe that any inadmissible word
$w = a_1 \cdots a_n$ has $\prod_{i=1}^n \mu(a_i) = \mu(w) = 0$, whence $\mu(a_i) = 0$ for some $i$, and hence $\lim_{n\rightarrow \infty} \frac{\countones{\word{w}}{\alpha\vert_{\leq n}}}{n} \leq
\lim_{n \rightarrow \infty} \frac{\countones{a_i}}{\alpha\vert_{\leq n}}{n} = 0$.
Hence, $\alpha$ is $\mu$-distributed.

For the second part, we prove \ref{it:2_1} $\Rightarrow$ \ref{it:2_3} $\Rightarrow$ \ref{it:2_2} $\Rightarrow$ \ref{it:2_1}. 
Assume that $\alpha$ is generic for $\nu$. By construction, $\underline{\nu}$ is a probability map such that
 $\alpha$ is $\underline{\nu}$-distributed, and by the first part of the proposition, $\underline{\nu}$ is invariant, as desired.
 If $\underline{\nu}$ is an invariant probability map, then as any measurable $A$ can be written as a disjount union of cylinder sets,
 and as we for any cylinder $\cylinder{\word{w}}$ have $S^{-1}(\cylinder{w}) =  \cup_{a \in \Sigma} \cylinder{a \cdot \word{w}}$,
 we obtain 
 $$
 \nu(S^{-1}(\cylinder{w})) = \underline{\nu}(\cup_{a \in \Sigma} \cylinder{a \cdot \word{w}}) = \sum_{a \in \Sigma} \underline{\nu}(a \cdot \word{w}) = \underline{\nu}(\word{w}) = \nu(\cylinder{w})
$$
\noindent showing that $\nu$ is shift-invariant. Finally, if $\nu$ is shift-invariant, it follows from \cite[Main Thm.\ 2.1]{MadritchMance:constructing}, there exists $\alpha \in \Sigma^\omega$ generic for $\nu$, as desired.
\end{proof}

\end{document}